\newtheorem{theorem}{Theorem}
\newtheorem{cor}{Corollary}
\newtheorem{lem}{Lemma}
\newtheorem{defn}{Definition}
\newcommand{\Reals}{\mathds R}
\newcommand{\bx}{\mathbf{x}}
\newcommand{\by}{\mathbf{y}}
\newcommand{\bu}{\mathbf{u}}
\newcommand{\bX}{\mathbf{X}}
\newcommand{\bY}{\mathbf{Y}}
\newcommand{\bE}{\mathbf{E}}
\newcommand{\cC}{\mathcal{C}}
\newcommand{\cI}{\mathcal{I}}
\newcommand{\EE}{\mathbb{E}}
\newcommand{\ra}{\rightarrow}
\newcommand{\VAR}{V\!AR}
\newcommand{\ID}{\mathrm{ID}}
\newcommand{\CAP}{\mathrm{CAP}}
\newcommand{\BALL}{\mathrm{BALL}}
\newcommand{\CONE}{\mathrm{CONE}}
\newcommand{\ST}{S^{\mathrm{typ}}}
\newcommand{\bU}{\mathbf{U}}
\newcommand{\ERASURE}{\mathtt{e}}
\newtheorem{remark}{Remark}
\newcommand{\MAYBE}{\mathtt{maybe}}
\newcommand{\NO}{\mathtt{no}}
\begin{document}

\title{Compression for Quadratic Similarity Queries}
\author{Amir Ingber, Thomas Courtade and Tsachy Weissman
\thanks{The material in this paper was presented in part at the 2013 Data Compression Conference (DCC), Snowbird, UT.}
\thanks{The authors are with the Dept. of Electrical Engineering, Stanford University, Stanford, CA 94305.
Email: \{ingber, courtade, tsachy\}@stanford.edu.}
\thanks{This work is supported in part by the NSF Center for Science of Information
under grant agreement CCF-0939370.}
}

\maketitle
\markboth{Submitted to IEEE Transactions on Information Theory}{Ingber \MakeLowercase{et al.}: Compression for Quadratic Similarity Queries}

\begin{abstract}
The problem of performing similarity queries on compressed data is considered.
We focus on the quadratic similarity measure, and study the fundamental tradeoff between compression rate, sequence length, and  reliability of queries performed on compressed data.
For a Gaussian source, we show that queries can be answered reliably if and only if the compression rate exceeds a given threshold -- the \emph{identification rate} -- which we explicitly characterize.
Moreover, when compression is performed at a rate greater than the identification rate, responses to queries on the compressed data can be made exponentially reliable.
We give a complete characterization of this exponent, which is analogous to the error and excess-distortion exponents in channel and source coding, respectively.

For a general source we prove that, as with classical compression, the Gaussian source requires the largest compression rate among sources with a given variance.
Moreover, a robust scheme is described that attains this maximal rate for \emph{any source distribution}.
\end{abstract}

\begin{IEEEkeywords}
    Compression, search, databases, error exponent, identification rate
\end{IEEEkeywords}

\section{Introduction}
For a database consisting of many long sequences, it is natural to perform queries of the form:
\textit{which sequences in the database are similar to a given sequence} $\by$?
In this paper, we study the problem of compressing this database so that queries about the original data can be answered reliably given only the compressed version.
This goal stands in contrast to the traditional compression paradigm, where data is compressed so that it can be reconstructed -- either exactly or approximately -- from its compressed form.

Specifically, for each sequence $\bx$ in the database we only keep a short \emph{signature}, denoted  $T(\bx)$, where $T(\cdot)$ is a  signature assignment function.
Queries are performed using only $\by$ and $T(\bx)$ as input, rather than the original (uncompressed) sequence $\bx$. This setting is illustrated in Fig.~\ref{fig:query}.

\begin{figure}
  \centering
  \psfrag{&x1}{$\mathbf{x}_1$}
  \psfrag{&x2}{$\mathbf{x}_2$}
  \psfrag{&xM}{$\mathbf{x}_M$}
  \psfrag{&vd}{$\vdots$}
  \psfrag{&T}{$\!T(\cdot)$}
  \psfrag{&t1}{$\!t_1$}
  \psfrag{&t2}{$\!t_2$}
  \psfrag{&tM}{$\!t_M$}
  \psfrag{&y}{$\mathbf{y}$}
  \psfrag{&YN}{$\mathtt{yes/no}$}
  \psfrag{&dx1y}{\hspace{-.27in} is $\mathbf{x}_1 \cong \mathbf{y}$ ?}
  \psfrag{&dx2y}{\hspace{-.27in} is $\mathbf{x}_2 \cong \mathbf{y}$ ?}
  \psfrag{&dxMy}{\hspace{-.27in} is $\mathbf{x}_M \cong \mathbf{y}$ ?}

  \includegraphics[width=.8\textwidth]{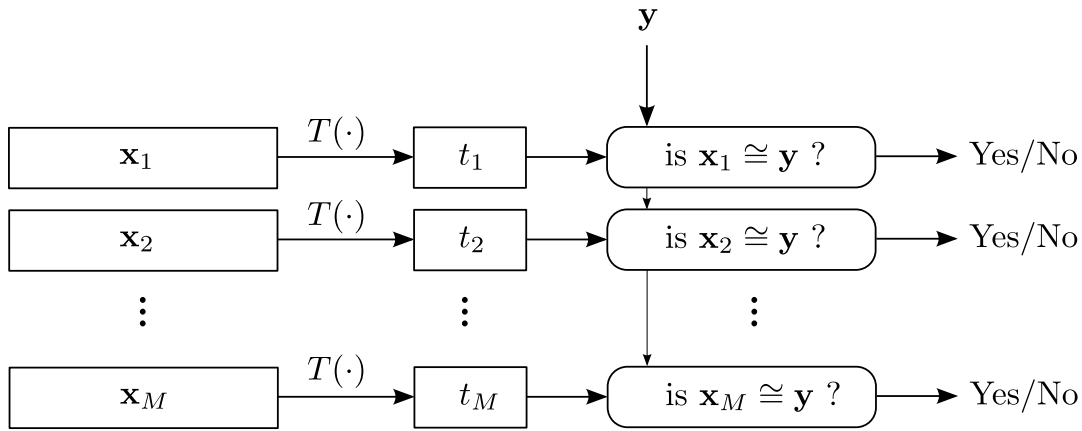}\\
  \caption{Answering a query from compressed data}\label{fig:query}
\end{figure}

As alluded to above, we generally do not require that the original data be reproducible from the  signatures.
Therefore the set of signatures is not meant to replace the database itself.
Nevertheless, there are many instances where such compression is desirable.
For example, the set of signatures can be thought of as a cached version of the original database (possibly hosted at many locations due to its relatively small size).
By performing queries only on the cached (i.e., compressed) database,  query latency can be reduced and the computational burden on the server hosting the uncompressed database can be lessened.

In many scenarios (e.g,. querying a criminal forensic database), query responses which  are false negatives are not acceptable.
A false negative occurs if a query performed on $T(\bx)$ and $\by$ indicates that $\bx$ and $\by$ are not similar, but they are in truth.
Therefore, we impose the restriction in our model that \emph{false negatives are not permitted}.
With this in mind, we regard the query responses from the compressed data as either ``$\NO$" or ``$\MAYBE$".
Since minimizing the probability that a query returns $\MAYBE$ is equivalent to minimizing the probability of returning a false
positive\footnote{Complementary to false negatives, a false positive occurs if a query performed on $T(\bx)$ and $\by$ indicates that $\bx$ and $\by$ are similar (i.e., returns $\MAYBE$), but they are not in truth.},
any good compression scheme will have a corresponding query function which returns $\MAYBE$ with small probability.
We note briefly that a false positive does not cause an error \emph{per se}.  Rather, it only introduces a computational burden due to the need for further verification.

In our setting we assume that the query and database sequences are independent from one another,  and all entries are drawn i.i.d.~according to a given distribution. The setting is closely related to the problem considered by Ahlswede et al. \cite{Ahlswede97}, where the focus was only on discrete sources. In \cite{Ahlswede97}, the authors attempt to attack the more general problem where both false positives and false negatives are allowed. In this general case, it was demonstrated in \cite{Ahlswede97} that the question of `achievable rate' is uninteresting and only the error exponent is studied (in the current paper, where false negatives are not allowed, we show that the rate question becomes interesting again). We should also note that the error exponent results in \cite{Ahlswede97} are parameterized by an auxiliary random variable with unbounded alphabet cardinality, rendering those quantities incomputable, and therefore of limited practical interest.
Another closely related work is the one by Tuncel et al. \cite{tuncel2004rate}, where the search accuracy was addressed by a reconstruction requirement with a single-letter distortion measure that is side-information dependent
(and the tradeoff between compression and accuracy is that of a Wyner-Ziv \cite{WynerZiv} type).
In contrast, in the current paper the search accuracy is measured directly by the accuracy of the query answers.

A different line of work attempting to identify the fundamental performance limits of database retrieval includes \cite{OSullivan2002,Willems03},
which characterized the maximum rate of entries that can be reliably identified in a database.
This line of work was extended independently in \cite{Westover08,Tuncel09} allowing compression of the database,
and in \cite{Tuncel12submitted} to the case where sequence reconstruction is also required.
In each of these works, the underlying assumption is that the original sequences are corrupted by noise before their enrollment in the database,
the query sequence is \emph{one of those original sequences}, and the objective is to identify which one.
There are two fundamental differences between this line of work and ours.
First, in our case the query sequence is random (i.e. generated by nature) and does not need to be a sequence that has already been enrolled in the database.
Second, in our problem we  attempt to identify  sequences that are \emph{similar} to the query sequence, rather than an exact match.

Other related ideas in the literature include Bloom Filters \cite{Bloom1970} (with many subsequent improvements, e.g. \cite{Porat09_Opt_Bloom_Filter_matrix}),
which are efficient data structures enabling queries without false negatives.
The Bloom Filter only applies for exact matches (where here we are interested in similarity queries) so it is not applicable to our problem.
Nevertheless, as surveyed in \cite{MitzenmacherBloomFilter}, Bloom filters demonstrate the potential of answering queries from compressed data.

Another related notion is that of Locality Sensitive Hashing (LSH), which is a framework for data structures and algorithms for finding similar items in a given set
(see \cite{AndoniI08} for a survey).
LSH trades off accuracy with computational complexity and space, and false negatives are allowed.
Two fundamental points are different in our approach.
First, we study the information-theoretic aspect of the problem, i.e.,  we concentrate on space only (compression rate) and ignore computational complexity in an attempt to understand the amount of information {relevant to querying} that can be stored in the short signatures.
Second, we do not allow false negatives, which, as discussed above, are inherent for LSH.

Other approaches for similarity search from compressed data involve dimensionality reduction techniques that preserve distances, namely those based on Johnson-Lindenstrauss-type embeddings \cite{JohnsonLindenstrauss84} (see also \emph{sketching}, e.g. \cite{sketchingNotes}).
A recent interesting application of this approach involves image retrieval for an augmented reality setting \cite{BR_DCC13}. However, note that such mappings generally depend on the elements in the database; the distance preservation property cannot apply to \emph{any} query element outside the database, making the guarantee for zero false negatives impossible without further assumptions.

\bigskip
This paper is organized as follows. In the next section we formally define the problem and the quantities we study (i.e., the identification rate and the identification exponent).
In Section~\ref{sec:results} we state and discuss our main results. Section~\ref{sec:proofs} provides the proofs of these results, and Section~\ref{sec:summary} delivers concluding remarks.

\section{Problem Formulation}
Throughout this paper, boldface notation $\bx$ denotes a column vector of elements $[x_1,...x_n]^T$. Capital letters denote random variables (e.g. $X,Y$), and $\bX,\bY$ denote random vectors.  Throughout the paper $\log(\cdot)$ denotes the base-$2$ logarithm, while $\ln(\cdot)$ is used for the usual natural logarithm.

We focus on the basic notion of quadratic similarity (sometimes called mean square error, or MSE). To this end, for any length-$n$ real sequences $\bx$ and $\by$ define
\begin{align}
  d(\bx,\by) \triangleq \frac{1}{n}\sum_{i=1}^n (x_i-y_i)^2 = \frac{1}{n}\|\bx-\by\|^2,\label{eqn:mse}
\end{align}
where $\|\cdot\|$ denotes the standard Euclidean norm. We say that $\bx$ and $\by$ are $D$-\emph{similar} when $d(\bx,\by)\leq D$, or simply \emph{similar} when $D$ is clear from  context.

A rate-$R$ identification system $(T,g)$ consists of a \emph{signature assignment}
\begin{align}
T : \Reals^n \ra \{1,2,\dots,2^{nR}\}
\end{align}
and a \emph{query function}
\begin{align}
g : \{1,2,\dots,2^{nR}\}\times \Reals^n \ra \{\NO, \MAYBE\}.
\end{align}

A system $(T,g)$ is said to be $D$-\emph{admissible}, if for any $\bx,\by$ satisfying $d(\bx,\by)\leq D$, we have
\begin{equation}\label{eqn:maybe}
  g(T(\bx),\by) = \MAYBE.
\end{equation}
This notion of $D$-{admissibility} motivates the use of ``$\NO$" and ``$\MAYBE$" in describing the output of $g$:
\begin{itemize}
\item If $g(T(\bx),\by) = \NO$, then $\bx$ and $\by$ can not be $D$-similar.
\item If $g(T(\bx),\by) = \MAYBE$, then $\bx$ and $\by$ are possibly $D$-similar.
\end{itemize}
Stated another way, a $D$-{admissible} system $(T,g)$ does not produce false negatives, i.e., indicate that $\bx$ and $\by$ are not similar, when they are in truth.  Thus, a natural figure of merit for a $D$-{admissible} system $(T,g)$ is the frequency at which false positives occur (i.e., where $g(T(\bx),\by) = \MAYBE$ and $d(\bx,\by)>D$).  To this end, let $P_X$ and $P_Y$ be probability distributions on $\Reals$, and assume  $(\bX,\bY)\sim \prod_{i=1}^n P_X(x_i)P_Y(y_i)$.  That is, the vectors $\bX$ and $\bY$ are independent of each other and drawn i.i.d.\ according to $P_X$ and $P_Y$ respectively.  Define the \emph{false positive event}
\begin{align}
\mathcal{E} = \{ g(T(\bX),\bY) = \MAYBE,  d(\bX,\bY) > D\}, \label{FPeventDefn}
\end{align}
and note that, for any $D$-admissible system $(T,g)$, we have
\begin{align}
\Pr \{ g(T(\bX),\bY) = \MAYBE \}
&= \Pr \{ g(T(\bX),\bY) = \MAYBE | d(\bX,\bY)\leq D\}\Pr\{ d(\bX,\bY)\leq D\} \notag\\
&\quad+ \Pr\{ g(T(\bX),\bY) = \MAYBE,  d(\bX,\bY) > D\} \\
&=\Pr\{ d(\bX,\bY)\leq D\} + \Pr\{\mathcal{E}\}, \label{eqn:FP_maybeRelation}
\end{align}
where \eqref{eqn:FP_maybeRelation} follows since $\Pr \{ g(T(\bX),\bY) = \MAYBE | d(\bX,\bY)\leq D\}=1$ by $D$-admissibility of $(T,g)$.  Since $\Pr\{ d(\bX,\bY)\leq D\}$ does not depend on what scheme is employed, minimizing the false positive probability $\Pr\{\mathcal{E}\}$ over all $D$-admissible schemes $(T,g)$ is equivalent to minimizing $\Pr \{ g(T(\bX),\bY) = \MAYBE \}$. Also note, that the only interesting case is when $\Pr\{d(\bX,\bY)\leq D \} \ra 0$ as $n$ grows, since otherwise almost all the sequences in the database will be similar to the query sequence, making the problem degenerate (since almost all the database needs to be retrieved, regardless of the compression). In this case, it is easy to see that $\Pr\{\mathcal{E}\}$ vanishes if and only if the conditional probability
\begin{equation}
  \Pr \{ g(T(\bX),\bY) = \MAYBE | d(\bX,\bY)> D\}
\end{equation}
vanishes as well.
In view of the above, we henceforth restrict our attention to the behavior of $\Pr \{ g(T(\bX),\bY) = \MAYBE \}$.  In particular, we study the tradeoff between the rate $R$ and $\Pr \{ g(T(\bX),\bY) = \MAYBE \}$.

This motivates the following definitions:
\begin{defn}
    For given distributions $P_X, P_Y$ and a similarity threshold $D$, a rate $R$ is said to be $D$-\emph{achievable} if there exists a sequence of  rate-$R$ admissible schemes $(T^{(n)},g^{(n)})$ satisfying
    \begin{equation}
      \lim_{n\ra\infty} \Pr\left\{g^{(n)}\left(T^{(n)}(\bX),\bY \right) = \MAYBE\right\} = 0. \label{eqn:reliable}
    \end{equation}
\end{defn}
\begin{defn}
    For given distributions $P_X, P_Y$ and a similarity threshold $D$, the \emph{identification rate}  $R_\ID(D,P_X,P_Y)$ is the infimum of $D$-achievable rates.  That is,
   \begin{align}
     R_\ID(D,P_X,P_Y) \triangleq \inf \{ R : R~ \mbox{is $D$-achievable}\},
   \end{align}
   where an infimum over the empty set is equal to $\infty$.
\end{defn}

 The above definitions are in the same spirit of the rate distortion function (the rate above which a vanishing probability for excess distortion is achievable),
and also in the spirit of the channel capacity (the rate below which a vanishing probability of error can be obtained).
See, for example, Gallager~\cite{GallagerInfoTheoryBook}%
.\footnote{See, for example, Cover and Thomas \cite{CoverThomas_InfoTheoryBook} for the alternative approach based on average distortion rather than excess distortion probability.}

Having defined $R_\ID(D,P_X, P_Y)$, the rate at which $\Pr \{ g(T(\bX),\bY) = \MAYBE \}$ vanishes is also of significant interest.
We expect the vanishing rate to be exponential as in the traditional source coding setting, motivating the following definition:
\begin{defn}
    Fix  $R\geq R_\ID(D,P_X,P_Y)$. The \emph{identification exponent} is defined as 
    \begin{align}
        \bE_\ID(R,D,P_X,P_Y) \triangleq \limsup_{n \ra \infty} -\frac{1}{n}\log \inf_{g^{(n)},T^{(n)}} \Pr \left\{ g^{(n)}\left(T^{(n)}(\bX),\bY \right) = \MAYBE \right\},\label{eqn:ExponentDef}
    \end{align}
    where the infimum is over all $D$-admissible systems $(g^{(n)},T^{(n)})$ of rate $R$ and blocklength $n$.
\end{defn}

The  analogous quantity in  source coding is the excess distortion exponent, first studied by Marton \cite{Marton1974fidelityCriterion}
for discrete sources and by Ihara and Kubo \cite{IharaKubo2000} for the Gaussian source (see also \cite{IharaKubo2005} and \cite{zhongAC2006Laplacian} for other sources).

We pause to make a few additional remarks on the connection between $\Pr \{ g(T(\bX),\bY) = \MAYBE \}$ and $\Pr \{ \mathcal{E} \}$, where $\mathcal{E}$ is the false positive event defined in \eqref{FPeventDefn}.  If $P_X$ and $P_Y$ have identical means and finite variances $\sigma_X^2$ and $\sigma_Y^2$, respectively, then the weak law of large numbers implies
\begin{align}
\lim_{n\ra \infty}\Pr\{ d(\bX,\bY)\leq D\} = 0
\end{align}
when $D < \sigma_X^2+\sigma_Y^2$.  Thus, the relation \eqref{eqn:FP_maybeRelation} implies that vanishing $\Pr \{ \mathcal{E} \}$ is attainable if and only if $R>R_\ID(D,P_X,P_Y)$ when $D < \sigma_X^2+\sigma_Y^2$.
Finally, observe that \eqref{eqn:FP_maybeRelation} implies the relationship
\begin{align}
 &\bE_\ID(R,D,P_X,P_Y)  \notag \\
 &= \limsup_{n \ra \infty} -\frac{1}{n}\log \max \left[  \Pr\{ d(\bX,\bY)\leq D\}  , \inf_{g^{(n)},T^{(n)}} \Pr \left\{ \mathcal{E}^{(n)} \right\}  \right],  \label{FPandMaybeRelation}
\end{align}
where $\mathcal{E}^{(n)}$ is the false positive event defined via \eqref{FPeventDefn} for the system $(g^{(n)},T^{(n)})$, and the infimum is taken over all $D$-admissible systems $(g^{(n)},T^{(n)})$ of rate $R$ and blocklength $n$.

\section{Main Results}\label{sec:results}
This section delivers our main results; all proofs are given in Section \ref{sec:proofs}.  The Gaussian distribution plays a prominent role in this section, therefore we use the shorthand notation $P_X = N(\mu,\sigma^2)$ to denote that  $P_X$ is the Gaussian distribution on $\Reals$ with mean $\mu$ and variance $\sigma^2$.

\subsection{The Identification Rate for Gaussian Sources}

\begin{theorem}\label{thm:RIDdiffVar}
If $P_X=N(\mu,\sigma_X^2)$ and $P_Y=N(\mu,\sigma_Y^2)$, then
\begin{align}
  R_\ID(D,P_X,P_Y) =
  \left\{
      \begin{array}{ll}
     0 &\mbox{for $0 \leq D < (\sigma_X - \sigma_Y)^2$}\\
     \log\frac{2\sigma_X\sigma_Y}{\sigma_X^2+\sigma_Y^2-D} \quad  &\mbox{for $(\sigma_X - \sigma_Y)^2 \leq D < \sigma_X^2 + \sigma_Y^2$}\\
       \infty & \mbox{for $D \geq \sigma_X^2 + \sigma_Y^2$}.
      \end{array}
       \right. \label{eqn:RIDdiffVar}
\end{align}
\end{theorem}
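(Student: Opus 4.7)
The plan is to split the analysis into three regimes after using translation invariance to set $\mu=0$. The boundary regimes are direct. If $D\geq\sigma_X^2+\sigma_Y^2$, the weak law of large numbers gives $\tfrac{1}{n}\|\bX-\bY\|^2\to\sigma_X^2+\sigma_Y^2$ in probability, whence $\Pr\{d(\bX,\bY)\leq D\}\to 1$; admissibility then implies $\Pr\{\MAYBE\}\not\to 0$ at any rate, so $R_\ID=\infty$. If $D<(\sigma_X-\sigma_Y)^2$, a two-signature scheme that records whether $\|\bx\|/\sqrt{n}$ lies in a thin shell around $\sigma_X$ achieves rate $0$: the elementary bound $d(\bx,\by)\geq(\|\bx\|/\sqrt{n}-\|\by\|/\sqrt{n})^2$ exceeds $D$ for norm-typical pairs, so NO is safe, and Chernoff makes the atypical events negligible.

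The main regime is $(\sigma_X-\sigma_Y)^2\leq D<\sigma_X^2+\sigma_Y^2$. Let $\rho^* := (\sigma_X^2+\sigma_Y^2-D)/(2\sigma_X\sigma_Y)\in(0,1]$, so the stated rate reads $R_\ID=-\log\rho^*$. The key geometric reduction: for $\bx,\by$ concentrated on their typical spheres ($\|\bx\|\approx\sigma_X\sqrt{n}$, $\|\by\|\approx\sigma_Y\sqrt{n}$), the law of cosines turns $d(\bx,\by)\leq D$ into the angular condition $\langle \bx,\by\rangle/(\|\bx\|\|\by\|)\geq\rho^*$, up to $o(1)$ corrections. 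The problem thus reduces to angular discrimination on $S^{n-1}$, for which the natural tools are random spherical covering (achievability) and L\'evy's isoperimetric inequality (converse).

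\emph{Achievability ($R>R_\ID$).} Fix $R'\in(R_\ID,R)$ and $\phi=\arcsin(2^{-R'})<\arcsin\rho^*$. A standard random-covering argument produces $2^{nR}$ unit vectors $\bc_1,\ldots,\bc_{2^{nR}}$ whose caps of angular radius $\phi$ cover $S^{n-1}$, exploiting the fact that a cap has normalized measure $\approx(\sin\phi)^{n-1}=2^{-nR'(1+o(1))}$. Define $T(\bx)$ as the angularly-closest codeword index to $\bx/\|\bx\|$ when $\bx$ is norm-typical, and a dedicated ``always-MAYBE'' symbol otherwise. Define $g(i,\by)=\MAYBE$ whenever $\by$ is norm-atypical or $\langle \bc_i,\by\rangle/\|\by\|\geq c^* := \cos(\arccos\rho^*+\phi+\eta)$ for a suitable $\eta=o(1)$. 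Admissibility on norm-typical pairs follows from the spherical triangle inequality (the bounds angle$(\bx/\|\bx\|,\bc_{T(\bx)})\leq\phi$ and $d(\bx,\by)\leq D$ together force angle$(\bc_{T(\bx)},\by/\|\by\|)\leq\phi+\arccos\rho^*+\eta$), while the always-MAYBE symbol covers atypical $\bx$ and the atypicality clause handles atypical $\by$. Since $\phi<\arcsin\rho^*$ implies $c^*>0$, the MAYBE probability contributed by norm-typical pairs is bounded by the spherical-cap measure $\approx(1-(c^*)^2)^{n/2}$, which is exponentially small; atypical contributions vanish by Chernoff.

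\emph{Converse ($R<R_\ID$).} For any admissible $(T,g)$, let $S_t=T^{-1}(t)$ and $M_t=\{\by:g(t,\by)=\MAYBE\}$. Admissibility gives $M_t\supseteq\{\by:d(\bx,\by)\leq D\text{ for some }\bx\in S_t\}$, the Euclidean $D$-shadow of $S_t$. By independence, $\Pr\{\MAYBE\}=\sum_t\Pr\{\bX\in S_t\}\Pr\{\bY\in M_t\}$; conditioning on norm-typical events and normalizing by the norms passes everything to $S^{n-1}$, where the $D$-shadow of the projected bin $\tilde S_t$ becomes its angular $\arccos\rho^*$-blowup. L\'evy's inequality lower-bounds this blowup measure by that of a cap of matching measure: a cap of measure at least $2^{-nR_\ID}=(\rho^*)^n$ has angular radius at least $\arcsin\rho^*$, hence its $\arccos\rho^*$-blowup covers at least a hemisphere, with measure $\geq 1/2$. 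Since at most $2^{nR}$ bins of individual mass below $2^{-nR_\ID}$ jointly carry $\leq 2^{-n(R_\ID-R)}\to 0$ of the total $\tilde{\bX}$-mass, the ``large'' bins alone already yield $\Pr\{\MAYBE\}\geq\tfrac{1}{2}(1-o(1))$, ruling out vanishing MAYBE. The main obstacle is propagating the typicality approximations through both halves precisely enough that the achievability and converse exponents match at $-\log\rho^*$ with no slack left by the $o(1)$ norm tolerance.
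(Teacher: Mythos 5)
Your proposal follows essentially the same route as the paper: reduce to angular discrimination on the typical sphere via the law of cosines, cover the sphere with caps for achievability, and invoke L\'evy's spherical isoperimetric inequality together with a large-bin/small-bin split (which is precisely the contrapositive form of the paper's Lemma~\ref{lem:Markov}) for the converse. The only substantive difference is that the paper uses a \emph{deterministic} covering code (Lemma~\ref{lem:ShellCovering}, after Dumer) so that every point of the typical sphere is guaranteed covered, whereas your random covering leaves an extra low-probability ``uncovered'' event that must also be routed into the always-$\MAYBE$ branch; and the slack-propagation issue you flag at the end is resolved in the paper by passing to slightly shrunken thresholds $D'$ and $D''$ in the projection-to-sphere steps and then taking the shell-thickness parameter $\eta\to 0$.
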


Before proceeding, we make a few observations about the behavior of $R_\ID(D,P_X,P_Y)$ under the assumptions of Theorem \ref{thm:RIDdiffVar}.  First, the fact that $R_\ID(D,P_X,P_Y)=\infty$ for $D \geq \sigma_X^2 + \sigma_Y^2$ is not surprising.  Indeed, if $D \geq \sigma_X^2 + \sigma_Y^2$, then $\bX$ and $\bY$ are inherently $D$-similar.  That is, $\Pr\{ d(\bX,\bY)\leq D\}$ is bounded away from zero (it actually converges to $1$), and therefore \eqref{FPandMaybeRelation} reveals that $\Pr \{ g(T(\bX),\bY) = \MAYBE \}$ can never vanish, regardless of what scheme is used.  Second, \eqref{eqn:RIDdiffVar} is symmetric with respect to $\sigma_X^2$ and $\sigma_Y^2$.  Though this might be expected, it is not obviously true from the outset.  Finally, for fixed $\sigma_X^2$ and $D<\sigma_X^2$, the function $R_\ID(D,P_X,P_Y)$ given by \eqref{eqn:RIDdiffVar} is maximized when $\sigma_Y^2 = \sigma_X^2 - D$.  In Fig.~\ref{fig:RIDdiffVar} we plot \eqref{eqn:RIDdiffVar} for different values of $\sigma_Y^2$ in order to illustrate some of its properties.
\begin{figure}
  \centering
  \includegraphics[width=6in]{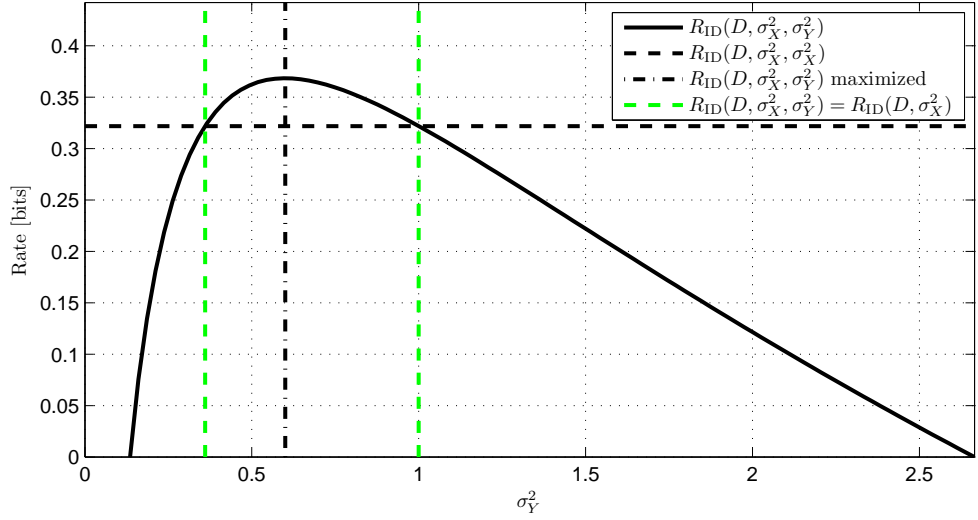}\\
  \caption{The identification rate $R_\ID(D,\sigma_X^2,\sigma_Y^2) := R_\ID(D,N(\mu,\sigma_X^2),N(\mu,\sigma_Y^2))$ for different values of $\sigma_Y^2$. Here $\sigma_X^2= 1$ and $D=0.4$.}\label{fig:RIDdiffVar}
\end{figure}

As an immediate corollary to Theorem \ref{thm:RIDdiffVar}, we obtain the following concise result for the symmetric case of $P_X=P_Y=N(\mu,\sigma^2)$.
\begin{cor}
If $P_X=P_Y=N(\mu,\sigma^2)$, then
    \begin{align}
      R_\ID(D,P_X,P_Y) = \left\{
      \begin{array}{ll}
      \log\left(\frac{2\sigma^2}{2\sigma^2-D}\right)\quad \quad &\mbox{for $0 \leq D < 2\sigma^2$}\\
       \infty & \mbox{for $D \geq 2\sigma^2$}.
      \end{array}
       \right.
       \label{eqn:RID(D)}
    \end{align}
\end{cor}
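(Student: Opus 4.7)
The proof plan is essentially trivial: the corollary is an immediate specialization of Theorem~\ref{thm:RIDdiffVar} to the symmetric case $\sigma_X^2 = \sigma_Y^2 = \sigma^2$, and no new arguments are needed. I would simply substitute $\sigma_X = \sigma_Y = \sigma$ into the piecewise formula \eqref{eqn:RIDdiffVar} and verify that the three branches collapse correctly into the two branches claimed by the corollary.

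Concretely, the first step is to observe that the lower threshold $(\sigma_X - \sigma_Y)^2$ in \eqref{eqn:RIDdiffVar} becomes $0$, so the regime $0 \leq D < (\sigma_X-\sigma_Y)^2$ on which $R_\ID = 0$ is empty (or, interpreting it as the single point $D=0$, consistent with the next branch since $\log(2\sigma^2/(2\sigma^2 - 0)) = \log 1 = 0$). The second step is to substitute into the middle branch: for $0 \leq D < \sigma_X^2 + \sigma_Y^2 = 2\sigma^2$,
\begin{align}
    \log\frac{2\sigma_X \sigma_Y}{\sigma_X^2 + \sigma_Y^2 - D} \;=\; \log\frac{2\sigma^2}{2\sigma^2 - D},
\end{align}
which is precisely the expression in \eqref{eqn:RID(D)}. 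The third step is to note that the top threshold $\sigma_X^2 + \sigma_Y^2$ becomes $2\sigma^2$, so $R_\ID = \infty$ for $D \geq 2\sigma^2$, matching the second case of the corollary.

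There is no genuine obstacle here, since Theorem~\ref{thm:RIDdiffVar} is assumed, and the corollary is merely an algebraic specialization. The only thing worth being careful about is the boundary case $D = 0$, where one must check that the two candidate expressions agree (they do, both equaling $0$), so the piecewise definition is unambiguous.
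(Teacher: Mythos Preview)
Your proposal is correct and matches the paper's approach: the paper states this result as ``an immediate corollary to Theorem~\ref{thm:RIDdiffVar}'' and gives no separate proof, so your direct substitution $\sigma_X=\sigma_Y=\sigma$ into \eqref{eqn:RIDdiffVar} is exactly what is intended.
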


We remark that \eqref{eqn:RID(D)} is reminiscent of the Gaussian rate distortion function $R(D) = \left[\frac{1}{2}\log\frac{\sigma^2}{D}\right]^+$ (cf. \cite{CoverThomas_InfoTheoryBook}). The identification rate $R_\ID(D,N(\mu,\sigma^2),N(\mu,\sigma^2))$ and rate distortion function $R(D)$ for a Gaussian source are plotted in Fig.~\ref{fig:RD_RID}, and as
\begin{figure}
  \centering
  \includegraphics[width=.9\textwidth]{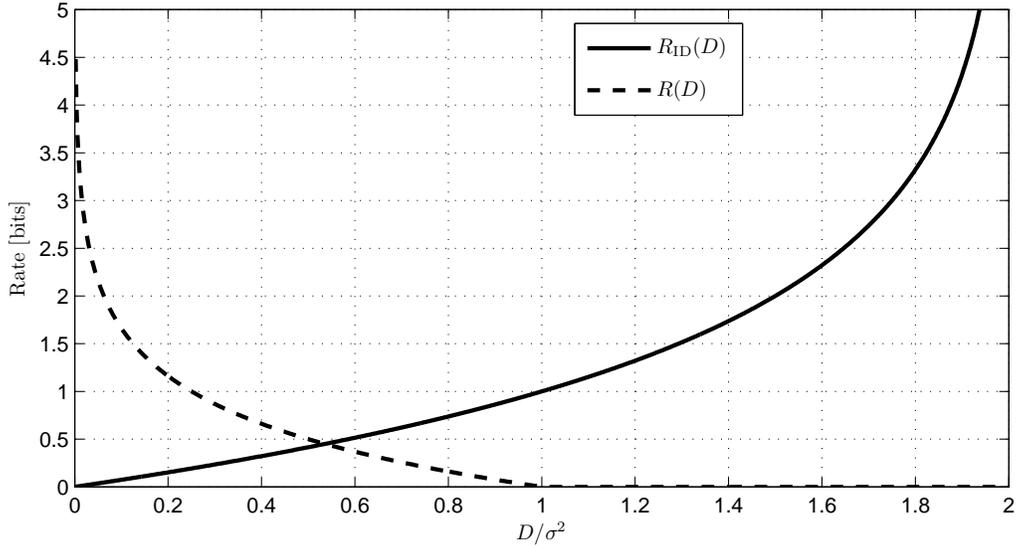}\\
  \caption{The identification rate $R_\ID(D) := R_\ID(D,N(\mu,\sigma^2),N(\mu,\sigma^2))$ and the rate distortion function $R(D)$ for a Gaussian source with variance $\sigma^2$.}\label{fig:RD_RID}
\end{figure}
seen in the figure, $R(D)$ is monotonically \emph{decreasing} in $D$, while \eqref{eqn:RID(D)}
is monotone \emph{increasing}. This can be intuitively explained by thinking of the compression scheme as a quantizer, where all the $\bx$ sequences mapped to the same $i\in\{1,2,\dots,2^{nR}\}$ define a quantization cell. Since the scheme must answer $\MAYBE$ for all sequences $\by$ similar to $\bx$, it therefore has to answer $\MAYBE$ for all $\by$ in the $D$-expansion of the quantization cell (all sequences that are at distance $D$ from \emph{any} point in the cell). The probability of $\MAYBE$ is, therefore, the probability that $\bY$ falls in the expanded cell, and this probability increases as either $D$ grows, or as the size of the quantization cell itself grows (i.e. the rate decreases).

\subsection{The Identification Exponent for Gaussian Sources}
Having established the identification rate for Gaussian sources, we now turn our attention to the identification exponent.
In order to simplify the notation for the identification exponents, we define the following functions
\begin{align}
  \bE_Z(\rho) &\triangleq \frac{1}{2\ln 2} (\rho - 1 -\ln \rho) \label{eqn:EZ}\\
\wp(R,D,z_1,z_2) &\triangleq -\log\sin \min \left[\tfrac{\pi}{2},\left(\arcsin\left(2^{-R}\right) + \arccos\frac{z_1+z_2 - D}{2\sqrt{z_1 z_2}} \right)\right].
\end{align}

\begin{theorem}\label{thm:EIDdiffVar}
Let $P_X=N(\mu,\sigma_X^2)$ and $P_Y=N(\mu,\sigma_Y^2)$.  For any fixed rate $R>R_\ID(D,P_X,P_Y)$,
\begin{align}
    \bE_\ID(R,D,P_X,P_Y) = &\min_{\rho_X,\rho_Y} \  \bE_Z(\rho_X)+\bE_Z(\rho_Y) +\wp(R,D,{\rho_X\sigma_X^2},{\rho_Y\sigma_Y^2}), \label{eqn:EIDdiffVar}
\end{align}
where the minimization is over all 
 $\rho_X,\rho_Y >0$ satisfying
\begin{align}
  &\left|\sqrt{\rho_X\sigma_X^2}-\sqrt{\rho_Y\sigma_Y^2}\right| < \sqrt{D},
  &\rho_X\sigma_X^2+\rho_Y\sigma_Y^2 \geq D. \label{eqn:rhoConditions}
\end{align}
\end{theorem}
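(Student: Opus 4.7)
The plan is to reduce the problem to a geometric question on concentric spheres by exploiting the spherical symmetry of Gaussian vectors. For $\bX\sim N(0,\sigma_X^2 I_n)$ (taking $\mu=0$ without loss of generality since the problem is translation-invariant), the joint density factors through the norm $\|\bX\|$ and a uniform direction on $S^{n-1}$. So I would first define the ``types'' $T_X=\|\bX\|^2/n$ and $T_Y=\|\bY\|^2/n$; conditional on $T_X=t_X$ and $T_Y=t_Y$, the vectors $\bX$ and $\bY$ are independent and uniform on spheres $\mathcal{S}_X,\mathcal{S}_Y$ of radii $r_X=\sqrt{nt_X}$ and $r_Y=\sqrt{nt_Y}$. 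Moreover, by Cramér's theorem for $\chi^2$ variables, one has
\begin{equation*}
\Pr\{T_X\in[t_X,t_X+\delta]\}\doteq 2^{-n\bE_Z(t_X/\sigma_X^2)},
\end{equation*}
and similarly for $T_Y$. With the substitution $t_X=\rho_X\sigma_X^2$, $t_Y=\rho_Y\sigma_Y^2$, the two type exponents become the first two summands of \eqref{eqn:EIDdiffVar}. The remaining step is then to identify a ``conditional exponent'' equal to $\wp(R,D,t_X,t_Y)$ and to combine the three via Varadhan/Laplace.

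For the conditional problem, I would observe that for fixed types, admissibility and the false-positive probability depend only on angles. A signature partitions $\mathcal{S}_X$ into at most $2^{nR}$ atoms $\{A_i\}$; admissibility forces the MAYBE-set for signature $i$ to include $\mathcal{S}_Y\cap\{\by:\exists\,\bx\in A_i,\ \|\bx-\by\|^2\le nD\}$. An elementary computation via the law of cosines shows that if $A_i$ is a spherical cap of half-angle $\theta$ on $\mathcal{S}_X$, this forced region on $\mathcal{S}_Y$ is a spherical cap of half-angle $\theta+\phi$, where $\phi=\arccos\bigl((t_X+t_Y-D)/(2\sqrt{t_Xt_Y})\bigr)$ (capped at $\pi/2$). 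Since the uniform measure of a cap of half-angle $\alpha\le\pi/2$ on $S^{n-1}$ behaves like $\sin^{n-2}\alpha$, achievability follows by using a spherical covering code of $\mathcal{S}_X$ with $\doteq 2^{nR}$ caps of angle $\theta\approx\arcsin(2^{-R})$ (whose existence is classical). The conditional MAYBE-probability is then $\doteq \sin^n(\theta+\phi)$, yielding the $\wp$ exponent. I would also encode the quantized type of $\bX$ in $O(\log n)$ extra bits, which does not affect the rate.

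The converse is where the main difficulty lies: one must show that \emph{every} signature assignment satisfies $\sum_i \mu_X(A_i)\mu_Y(B_i)\dotgeq 2^{-n\wp}$, where $B_i$ is the induced forced MAYBE-region on $\mathcal{S}_Y$. The plan is to invoke a spherical isoperimetric inequality: by the Lévy--Schmidt theorem on $S^{n-1}$ (after normalizing $\mathcal{S}_X$, $\mathcal{S}_Y$ to the unit sphere and expressing the Euclidean $\sqrt{nD}$-neighborhood in terms of angular enlargement), the cross-sphere ``$D$-expansion'' of any $A_i$ of $\mu_X$-measure $p_i$ has $\mu_Y$-measure at least that of the corresponding cap. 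Writing $p_i=\sin^n\theta_i$, this lower bound is $\sin^n(\theta_i+\phi)$; using $\sum_i p_i=1$ and that $\sin^n(\theta+\phi)/\sin^n\theta$ is a monotone function, a convexity/pigeonhole step shows the total sum is $\dotgeq (\sin(\theta_*+\phi))^n$ with $\sin\theta_*\doteq 2^{-R}$. This is the hardest step and I would invest the most care in formalizing it (possibly by concentrating on the ``dominant'' atom of measure $\doteq 2^{-nR}$, which must exist by averaging, and bounding the remaining atoms via the isoperimetric monotonicity).

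Finally, integrating the conditional bound against the type densities and applying Laplace's method yields
\begin{equation*}
\Pr\{\MAYBE\}\doteq\max_{\rho_X,\rho_Y}2^{-n[\bE_Z(\rho_X)+\bE_Z(\rho_Y)+\wp(R,D,\rho_X\sigma_X^2,\rho_Y\sigma_Y^2)]},
\end{equation*}
matching both directions of \eqref{eqn:EIDdiffVar}. The restrictions in \eqref{eqn:rhoConditions} arise naturally: if $|\sqrt{\rho_X\sigma_X^2}-\sqrt{\rho_Y\sigma_Y^2}|\ge\sqrt{D}$, no pair on the two shells is $D$-similar and the scheme may answer NO (infinite exponent, hence excluded from the minimum); if $\rho_X\sigma_X^2+\rho_Y\sigma_Y^2< D$ the $\arccos$ falls in the degenerate regime where $\wp=0$, and the resulting candidate is either dominated or the MAYBE-event is trivially forced. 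Consistency of the resulting minimum with $R_\ID$ from Theorem~\ref{thm:RIDdiffVar} (vanishing exponent exactly at $R=R_\ID$) provides a useful sanity check to run once the expressions are in hand.
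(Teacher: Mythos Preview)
Your proposal is essentially the paper's proof: conditioning on the radial ``types'' $\rho_X,\rho_Y$, using the $\chi^2$ large-deviations exponent $\bE_Z$ for the type cost, a spherical covering (shape--gain quantizer) for achievability with the law of cosines giving the angle $\phi$ and hence the $\wp$ term, and L\'evy's isoperimetric inequality for the converse. The Laplace combination and the geometric interpretation of the constraints \eqref{eqn:rhoConditions} also match.

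The one place where your sketch is looser than the paper is the ``convexity/pigeonhole step'' in the converse. The issue is that an arbitrary signature assignment need not have all atoms of measure $\doteq 2^{-nR}$, so you cannot simply pick a ``dominant atom'' and ignore the rest; nor is the map $p\mapsto \Omega(\Omega^{-1}(p)+\phi)$ convex in a way that immediately gives the bound. The paper handles this with a short Markov-type argument (its Lemma~\ref{lem:Markov}): if $\sum_i p_i\,\Omega(\theta_{D''}+\Omega^{-1}(p_i))\le c\,\Omega^*$, then at least a $(1-c)$-fraction of the mass sits on atoms with $p_i<p^*$, forcing $2^{nR}\ge (1-c)/p^*$. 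This yields the desired lower bound on $R$ (equivalently, lower bound on the MAYBE-probability) without any convexity. Your monotonicity observation (that $\Omega(\theta_{D''}+\Omega^{-1}(\cdot))$ is increasing) is exactly what makes this Markov step work, so you have the right ingredient; just formalize it this way rather than via a dominant-atom heuristic. A second minor difference: for the converse the paper fixes the minimizing $(\rho_X^*,\rho_Y^*)$ up front and works on the single pair of thin shells, rather than proving a uniform conditional bound and then applying Laplace---this avoids having to control the conditional exponent uniformly in $(\rho_X,\rho_Y)$.
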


\begin{remark}
We note that, for $P_X=N(\mu,\sigma_X^2)$ and $P_Y=N(\mu,\sigma_Y^2)$,  the exponent $\bE_\ID(R,D,P_X,P_Y)$ is strictly positive for $R>R_\ID(D,P_X,P_Y)$, and is equal to zero at $R = R_\ID(D,P_X,P_Y)$. Therefore, the direct part of Theorem~\ref{thm:RIDdiffVar} is implied by Theorem~\ref{thm:EIDdiffVar}.  However, the converse part of Theorem~\ref{thm:RIDdiffVar} is \emph{not} implied by Theorem~\ref{thm:EIDdiffVar},
as the latter does not exclude the possibility that the probability of $\MAYBE$ can be made to vanish with a sub-exponential decay rate when the exponent is equal to zero.
\end{remark}

In light of Theorem \ref{thm:EIDdiffVar}, it is instructive to revisit the relationship between false-positive and $\MAYBE$ probabilities specified in \eqref{FPandMaybeRelation}. To this end, consider the setting where $P_X=N(\mu,\sigma_X^2)$, $P_Y=N(\mu,\sigma_Y^2)$, and $D\leq \sigma_X^2+\sigma_Y^2$.  In this case, the random variable $\frac{1}{n(\sigma_X^2+\sigma_Y^2)}\|\bX-\bY\|^2$ has a chi-squared distribution  with $n$ degrees of freedom.  Therefore, it follows by Cramer's Theorem (cf. \cite[Theorem 2.2.3]{DemboZeitouni}) that
\begin{align}
\lim_{n\ra \infty} -\frac{1}{n}\log \Pr\left\{d(\bX,\bY) \leq D\right\} &= \bE_Z\left(\frac{D}{\sigma_X^2+\sigma_Y^2}\right).%
\end{align}
In this setting, it is a straightforward algebraic exercise to see that
\begin{align}
\bE_\ID(R,D,P_X,P_Y) < \bE_Z\left(\frac{D}{\sigma_X^2+\sigma_Y^2}\right)
\end{align}
for $R< \infty$ by putting
\begin{align}
&\rho_X =  \frac{\sigma_X^2 D + \sigma_Y^2(\sigma_X^2+\sigma_Y^2)}{(\sigma_X^2+\sigma_Y^2)^2}, &\rho_Y =  \frac{\sigma_Y^2 D + \sigma_X^2(\sigma_X^2+\sigma_Y^2)}{(\sigma_X^2+\sigma_Y^2)^2}
\end{align}
in \eqref{eqn:EIDdiffVar}.
Therefore, $\bE_\ID(R,D,P_X,P_Y)$ also precisely characterizes the best-possible exponent corresponding to the probability of a false positive event in this setting due to the relation \eqref{FPandMaybeRelation}.

In the case where $P_X=P_Y=N(\mu,\sigma^2)$, the symmetry in \eqref{eqn:EIDdiffVar} can be exploited to yield the following corollary.
\begin{cor}\label{cor:EID}
    Let  $P_X=P_Y=N(\mu,\sigma^2)$.  For any fixed rate $R>R_\ID(D,P_X,P_Y)$,
    \begin{align}
          \bE_\ID(R,D,P_X,P_Y) = \min_{\rho} ~2 \bE_Z(\rho) +\wp(R,D,{\rho \sigma^2},{\rho\sigma^2}),\label{EIDcor}
    \end{align}
    where the minimization is over all $\rho$ satisfying
\begin{align}
2\sigma^2\geq 2\rho\sigma^2 \geq D.
\end{align}
\end{cor}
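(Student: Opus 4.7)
The plan is to specialize the two-dimensional optimization in Theorem~\ref{thm:EIDdiffVar} to $\sigma_X^2 = \sigma_Y^2 = \sigma^2$ and show that the minimizer lies on the diagonal $\rho_X = \rho_Y = \rho$, with $\rho \in [D/(2\sigma^2), 1]$, thereby collapsing the problem to the one-variable form~\eqref{EIDcor}. Since the diagonal point always satisfies the first constraint in \eqref{eqn:rhoConditions} (the left-hand side vanishes), the only binding constraint after symmetrization is $2\rho\sigma^2 \geq D$, matching the corollary's range.

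The key step is a fiber argument. For each fixed sum $s = \rho_X + \rho_Y$, I claim the objective in \eqref{eqn:EIDdiffVar} is minimized along the fiber $\{\rho_X + \rho_Y = s\}$ at the symmetric point $\rho_X = \rho_Y = s/2$. For the $\bE_Z$ contribution, strict convexity of $\bE_Z$ (its second derivative is $1/(2\rho^2\ln 2) > 0$) combined with Jensen's inequality yields $\bE_Z(\rho_X) + \bE_Z(\rho_Y) \geq 2\bE_Z(s/2)$. For the $\wp$ contribution, the argument of $\arccos$ equals $\frac{s\sigma^2 - D}{2\sigma^2\sqrt{\rho_X\rho_Y}}$, which on the fiber depends only on the product $\rho_X\rho_Y$; on the feasible region $s\sigma^2 \geq D$ the numerator is nonnegative, and by AM-GM $\sqrt{\rho_X\rho_Y}$ is maximized at $\rho_X = \rho_Y = s/2$, making the $\arccos$ argument smallest there. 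Chaining monotonicities -- $\arccos$ is decreasing, $\sin$ is increasing on $[0,\pi/2]$, and $-\log$ is decreasing -- shows $\wp$ is likewise minimized at $\rho_X = \rho_Y$. The outer truncation at $\pi/2$ only reinforces this, since it pins $\wp$ at the globally minimal value $0$.

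Finally, restricting to $\rho \leq 1$ is without loss on the diagonal. $\bE_Z$ is strictly increasing for $\rho > 1$ (as $\rho = 1$ is its unique minimizer), and the $\arccos$ argument $1 - D/(2\rho\sigma^2)$ is strictly increasing in $\rho$, so the pre-truncation argument of $\sin$ is strictly decreasing in $\rho$. A short check using the identity $\arcsin(x) + \arccos(x) = \pi/2$ together with $R > R_\ID(D,P_X,P_Y) = \log\frac{2\sigma^2}{2\sigma^2 - D}$ shows that at $\rho = 1$ the pre-truncation argument already lies strictly below $\pi/2$, and since it only decreases further for $\rho > 1$, no truncation occurs there and $\wp$ is strictly increasing on $[1,\infty)$. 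Hence $2\bE_Z(\rho) + \wp$ is strictly increasing for $\rho > 1$, placing the minimum in $[D/(2\sigma^2), 1]$ as claimed. The main subtlety will be carefully tracking the $\min[\pi/2, \cdot]$ truncation so that each monotonicity step in the chain remains valid in the relevant regime, but in every case the truncated branch can be disposed of by noting it corresponds to $\wp = 0$, the global minimum.
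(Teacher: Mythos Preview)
Your proposal is correct and follows essentially the same approach as the paper: replace $(\rho_X,\rho_Y)$ by their average using convexity of $\bE_Z$ (Jensen) and the AM--GM inequality inside the $\arccos$ argument, then chain the monotonicities of $\arccos$, $\sin$, and $-\log$, and finally observe that the symmetrized objective is increasing for $\rho>1$. Your treatment of the $\min[\pi/2,\cdot]$ truncation and the explicit verification (via $\arcsin x+\arccos x=\pi/2$ and $R>R_\ID$) that no truncation occurs at $\rho=1$ are in fact more detailed than the paper's own proof, which simply asserts the final monotonicity.
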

A formal proof is given in Section \ref{sec:proofs}.   The identification exponent  \eqref{EIDcor} for the case of $D/\sigma^2 = 1.5$  is illustrated in Fig.~\ref{fig:EID}.
\begin{figure}
  \centering
  \includegraphics[width=6in]{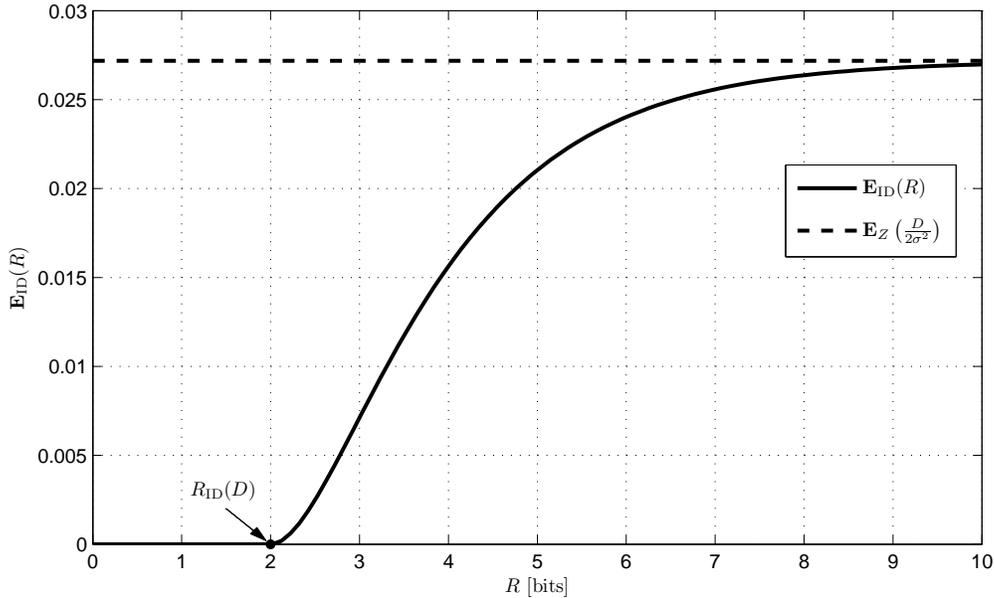}\\
  \caption{Plot of $\bE_\ID(R) :=  \bE_\ID(R,D,N(\mu,\sigma^2),N(\mu,\sigma^2))$ for $D/\sigma^2 = 1.5$.  In this case, $R_\ID(D,N(\mu,\sigma^2),N(\mu,\sigma^2)) = 2$ bits per symbol.}\label{fig:EID}
\end{figure}

Before proceeding, we briefly note that the identification exponent  $\bE_\ID(R,D,P_X,P_Y)$ can sometimes be strictly positive at $R\ra0$ \footnote{Note that whenever $R$ is \emph{equal} to zero, the probability of $\MAYBE$ is equal to $1$ (unless the supports of $P_X$ and $P_Y$ are disjoint in a way making \emph{any} two sequences $\bx$ and $\by$ dissimilar, making the problem degenerate).}. For instance, if
\begin{align}
\left|\frac{1}{\sqrt n}\EE\|\bX \| -\frac{1}{\sqrt n}\EE\|\bY \| \right| > \sqrt{D}+\epsilon
\end{align}
for some $\epsilon>0$, then the signature $T(\bX)$ can simply indicate whether $\left| \frac{1}{\sqrt n}\|\bX\|- \frac{1}{\sqrt n}\EE\|\bX \| \right| > \epsilon/2$, requiring rate $R=1/n$. Then, the query function $g$ returns $\MAYBE$ only if
\begin{align}
&\left| \frac{1}{\sqrt n}\|\bX\|- \frac{1}{\sqrt n}\EE\|\bX \| \right| > \epsilon/2, \mbox{~or} \label{exception1}\\
&\left| \frac{1}{\sqrt n}\|\bY\|- \frac{1}{\sqrt n}\EE\|\bY \| \right| > \epsilon/2.\label{exception2}
\end{align}
If neither \eqref{exception1} nor \eqref{exception2} occur, then it is readily verified that $d(\bX,\bY)>D$ using the triangle inequality.
Whenever the random variables $X^2$ and $Y^2$ satisfy a large deviations principle (as in the Gaussian case, and for many other distributions, cf. \cite{DemboZeitouni}), we see that $g$ returns $\MAYBE$ with probability exponentially decaying in $n$, and we can conclude that $\lim_{R\ra 0^+}\bE_\ID(R,D,P_X,P_Y) > 0$.  If this is indeed the case, then it also follows that $R_\ID(D,P_X,P_Y) = 0$ by definition.  Though this discussion applies for arbitrary distributions $P_X,P_Y$, this latter point is concretely reflected in Theorems  \ref{thm:RIDdiffVar} and \ref{thm:EIDdiffVar} for the case where $D\leq (\sigma_X-\sigma_Y)^2$.

\subsection{Upper Bounds on the Identification Rate}
In the previous two subsections, we focused our attention primarily to the case where $P_X$ and $P_Y$ were Gaussian distributions.  In the sequel, we consider more general distributions and show that Gaussian $P_X,P_Y$ constitute an extremal case in terms of the identification rate.

\begin{theorem}\label{thm:achGeneral}
Suppose $P_X$ and $P_Y$ are distributions with finite second moments $\sigma_X^2$ and $\sigma_Y^2$, respectively.  Then
\begin{align}
R_\ID(D,P_X,P_Y) \leq \overline{R}_\ID(D,P_X,P_Y) \triangleq \inf_{P_{\hat{X}|X}} I(X;\hat{X}),
\end{align}
where the infimum is taken over all conditional distributions $P_{\hat{X}|X}$ satisfying
\begin{align}
 \sqrt{\mathbb{E}\left[\left(\sqrt{\frac{\sigma_X}{\sigma_Y}}Y - \hat{X}\right)^2\right]}  \geq \sqrt{\mathbb{E}\left[\left(\sqrt{\frac{\sigma_Y}{\sigma_X}}X - \hat{X}\right)^2\right]}
+\sqrt{D-(\sigma_X-\sigma_Y)^2}\label{eqn:achIneq}
\end{align}
for $(Y,X,\hat{X}) \sim P_Y(y) P_X(x) P_{\hat{X}|X}(\hat{x}|x)$.  Moreover,
\begin{align}
\bE_\ID(R,D,P_X,P_Y)>0
\end{align}
for any $R >  \overline{R}_\ID(D,P_X,P_Y)$.
\end{theorem}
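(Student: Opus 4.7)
The plan is to prove the stronger second assertion (strict positivity of $\bE_\ID$), since the bound $R_\ID\le\overline{R}_\ID$ is then immediate.

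First, I would reduce to a ``symmetrized'' problem by scaling. Set $\bx':=\sqrt{\sigma_Y/\sigma_X}\,\bx$ and $\by':=\sqrt{\sigma_X/\sigma_Y}\,\by$, so both scaled sequences have per-coordinate variance $\sigma_X\sigma_Y$. A direct calculation gives
\[
\|\bx-\by\|^2-\|\bx'-\by'\|^2=(\sigma_X-\sigma_Y)\!\left[\tfrac{\|\bx\|^2}{\sigma_X}-\tfrac{\|\by\|^2}{\sigma_Y}\right],
\]
which, on the typical set $\{\|\bx\|^2\approx n\sigma_X^2,\ \|\by\|^2\approx n\sigma_Y^2\}$, is within $o(n)$ of $n(\sigma_X-\sigma_Y)^2$. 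Thus, for typical $\bx,\by$, the condition $d(\bx,\by)\le D$ becomes $\|\bx'-\by'\|^2\le n(\tilde D+o(1))$ with $\tilde D:=D-(\sigma_X-\sigma_Y)^2$, reducing admissibility to a quadratic-similarity test at threshold $\tilde D$ in scaled coordinates.

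Next, I would analyze a random-coding scheme. Fix $P_{\hat X|X}$ with $I(X;\hat X)<R$ that strictly satisfies \eqref{eqn:achIneq}, and write $d_X:=\mathbb{E}[(\sqrt{\sigma_Y/\sigma_X}X-\hat X)^2]$ and $d_Y:=\mathbb{E}[(\sqrt{\sigma_X/\sigma_Y}Y-\hat X)^2]$. Draw $2^{nR}-1$ codewords $\hat{\bx}(i)$ i.i.d.\ from the marginal $P_{\hat X}$ and reserve index $0$ as a fallback. The encoder $T(\bx)$ outputs the smallest $i\ge1$ such that $\bx$ is distortion-typical for $P_X$ and $\frac{1}{n}\sum_j(\sqrt{\sigma_Y/\sigma_X}x_j-\hat x_j(i))^2\le d_X+\eps$, and $T(\bx)=0$ otherwise. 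The query $g(i,\by)$ returns $\MAYBE$ if $i=0$, if $\by$ is atypical for $P_Y$, or if $\|\hat{\bx}(i)-\by'\|\le\sqrt{n}\bigl(\sqrt{d_X+\eps}+\sqrt{\tilde D+\eps}\bigr)$, and $\NO$ otherwise. Admissibility is immediate from Paragraph 1 and the triangle inequality: when $d(\bx,\by)\le D$ and neither fallback fires, $\|\hat{\bx}-\by'\|\le\|\hat{\bx}-\bx'\|+\|\bx'-\by'\|\le\sqrt{n(d_X+\eps)}+\sqrt{n(\tilde D+\eps)}$, which is below the threshold.

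To obtain a positive exponent, I would decompose $\Pr\{\MAYBE\}$ into (a) $\bX$ atypical, (b) $\bY$ atypical, (c) encoding failure for typical $\bX$, and (d) the acceptance event $\|\hat{\bx}(T(\bX))-\bY'\|\le\sqrt{n}(\sqrt{d_X+\eps}+\sqrt{\tilde D+\eps})$ on the good set. Events (a) and (b) decay exponentially by Cram\'er's theorem applied to the empirical second moments of $\bX$ and $\bY$. Event (c) decays doubly-exponentially by the covering lemma in its distortion-typicality form (valid for general sources), since $R>I(X;\hat X)+\eps$. For event (d), conditioning on typical $\bx$ with $T(\bx)=i$ forces $\hat{\bx}(i)$ to be jointly typical with $\bx$, while $\bY$ is independent of the codebook; the empirical mean of $(\hat x_j-\sqrt{\sigma_X/\sigma_Y}Y_j)^2$ then concentrates at $d_Y$, and the strict form of \eqref{eqn:achIneq} for small $\eps$ gives $d_Y>(\sqrt{d_X+\eps}+\sqrt{\tilde D+\eps})^2$, so Cram\'er again yields exponential decay. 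Averaging over the random codebook, there exists a realization achieving exponentially vanishing $\Pr\{\MAYBE\}$, hence $\bE_\ID>0$.

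The main obstacle is that strong typicality is unavailable for continuous sources: the argument must use distortion/moment-based typicality and a correspondingly general covering lemma, and the large-deviations step for event (d) must be carried out without assuming a finite alphabet or Gaussianity. A secondary subtlety is that admissibility is a deterministic requirement on the realized codebook, but this is handled ``for free'' here because the triangle-inequality bound plus the atypical fallback triggers work for \emph{every} codebook realization.
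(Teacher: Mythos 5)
Your proposal reproduces the paper's proof essentially move for move: a random codebook drawn i.i.d.\ from $P_{\hat X}$, an encoder that finds a codeword ``close'' to $\bx$, a query rule that thresholds the distance from the codeword to a rescaled $\by$, triangle-inequality admissibility that is deterministic in the codebook, and a concentration/covering argument for the probability of $\MAYBE$. Your rescaling identity $\|\bx-\by\|^2-\|\bx'-\by'\|^2=(\sigma_X-\sigma_Y)\bigl[\|\bx\|^2/\sigma_X-\|\by\|^2/\sigma_Y\bigr]$ is precisely the content of the paper's Lemma~\ref{lem:scalingDistance}, and your threshold $\sqrt{n}\bigl(\sqrt{d_X+\eps}+\sqrt{\tilde D+\eps}\bigr)$ is the paper's quantity $\Psi$ in \eqref{eqn:PsiDef} up to the specific choice of slack. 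The only implementation differences are cosmetic: you select codewords by a distortion constraint rather than joint typicality, and you invoke Cram\'er rather than Hoeffding.

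The one place where you are genuinely weaker than the paper is the point you flag yourself at the end. The paper first reduces to discrete $X,Y$ with finite alphabet, so every per-symbol summand in the concentration steps is bounded and Hoeffding gives clean exponential tails; the covering lemma is stated and used in its finite-alphabet form. Only afterwards does the paper pass to continuous sources by ``the usual quantization arguments.'' You try to run the argument directly on a continuous source with only a finite second moment, and there Cram\'er need not give an exponential tail for event~(d): the random variable $\bigl(\hat x_j-\sqrt{\sigma_X/\sigma_Y}\,Y_j\bigr)^2$ can fail to have a finite moment generating function near the origin (e.g.\ when $Y$ has polynomial tails), in which case the exponent is not obviously positive. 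The same issue affects your events (a)--(c). So the fix is not to find a ``more general covering lemma'' as you speculate, but to adopt the paper's reduction: establish the exponent for discrete finite-alphabet $X,Y$ with bounded symbols, and then discretize a general source so that $\|\cdot\|$ is approximated uniformly and the scheme built for the quantized source remains $D$-admissible (after a small enlargement of the threshold) for the original one. With that change your argument goes through and coincides with the paper's.
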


\begin{remark}
Note that Theorem \ref{thm:achGeneral} does not require $P_X$ and $P_Y$ to have identical means.
\end{remark}
\begin{remark}
Also note, that the achievability result and the proof technique carry over to general distortion criteria satisfying the triangle inequality. We omit the details as the focus of this paper is on the quadratic similarity criterion.
\end{remark}

For general source distributions $P_X, P_Y$, we lack a matching lower bound on $R_\ID(D,P_X,P_Y)$. However, such a converse was proved in the Gaussian setting (see Theorem \ref{thm:RIDdiffVar}). The key ingredient in the proof of Theorem \ref{thm:RIDdiffVar} is the isoperimetric inequality on the surface of a hypersphere -- the set on which the probability of a high dimensional Gaussian random vector concentrates (see Section \ref{sec:proofs} for details). In general,  precise isoperimetric inequalities are unknown and therefore establishing a general converse appears to be extremely difficult.

In spite of this, an application of Theorem \ref{thm:achGeneral} reveals the interesting fact that Gaussian $P_X$ and $P_Y$ correspond to sources which are ``most difficult" to compress for queries.  This is analogous to the setting of classical lossy compression, where the Gaussian source requires the maximum rate for compression subject to a quadratic distortion constraint. Formally,

\begin{theorem}\label{thm:GaussianExtreme}
Suppose $P_X$ and $P_Y$ have identical means and finite variances $\sigma_X^2$ and $\sigma_Y^2$, respectively.  Then
\begin{align}
R_\ID(D,P_X,P_Y) \leq R_\ID(D,N(0,\sigma_X^2),N(0,\sigma_Y^2)).
\end{align}
In particular, Gaussian $P_X$ and $P_Y$ demand the largest identification rate for given variances.
\end{theorem}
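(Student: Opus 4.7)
My plan is to prove Theorem \ref{thm:GaussianExtreme} by applying Theorem \ref{thm:achGeneral} to a Gaussian sub-family of test channels, and exploiting the fact that the only place the source distributions enter the upper bound $\overline R_\ID$ in this sub-family is through their second moments. Since Theorem \ref{thm:GaussianExtreme} assumes identical means and the quadratic distortion $d$ is translation invariant, I would first shift both $P_X$ and $P_Y$ so that $\EE X = \EE Y = 0$ without loss of generality.

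Next, I would restrict attention to the parametric family of test channels
\begin{equation*}
 \hat X = \alpha X + W, \qquad W\sim N(0,\sigma_W^2)\ \text{independent of}\ X,
\end{equation*}
indexed by $(\alpha,\sigma_W^2)$. The key observation is that the two expectations appearing in the constraint \eqref{eqn:achIneq} only involve second moments of jointly uncorrelated pieces: for this family one computes
$\EE[(\sqrt{\sigma_X/\sigma_Y}\,Y-\hat X)^2]=\sigma_X\sigma_Y+\alpha^2\sigma_X^2+\sigma_W^2$ and
$\EE[(\sqrt{\sigma_Y/\sigma_X}\,X-\hat X)^2]=\sigma_X\sigma_Y+\alpha^2\sigma_X^2+\sigma_W^2-2\alpha\sigma_X\sqrt{\sigma_X\sigma_Y}$,
which depend on $P_X,P_Y$ only through the variances $\sigma_X^2,\sigma_Y^2$. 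Thus the feasible set of pairs $(\alpha,\sigma_W^2)$ is identical for $(P_X,P_Y)$ and for $(N(0,\sigma_X^2),N(0,\sigma_Y^2))$.

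For the same family, the Gaussian maximum-entropy inequality gives
\begin{equation*}
 I(X;\hat X)=h(\hat X)-h(W)\leq \tfrac12\log(2\pi e\,\VAR(\hat X))-\tfrac12\log(2\pi e\,\sigma_W^2)=\tfrac12\log\bigl(1+\alpha^2\sigma_X^2/\sigma_W^2\bigr),
\end{equation*}
with equality whenever $X$ (and hence $\hat X$) is Gaussian. Combining this with Theorem \ref{thm:achGeneral} and the previous step yields
\begin{equation*}
 R_\ID(D,P_X,P_Y)\leq \overline R_\ID(D,P_X,P_Y)\leq \inf_{(\alpha,\sigma_W^2)\,\text{feasible}} I(X_G;\hat X_G),
\end{equation*}
where $X_G\sim N(0,\sigma_X^2)$ and $\hat X_G=\alpha X_G+W$. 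It then remains to exhibit at least one feasible choice $(\alpha^*,\sigma_W^{*2})$ such that $I(X_G;\hat X_G)=R_\ID(D,N(0,\sigma_X^2),N(0,\sigma_Y^2))=\log\frac{2\sigma_X\sigma_Y}{\sigma_X^2+\sigma_Y^2-D}$; this closes the inequality.

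The main obstacle I anticipate is the last step: a concrete Gaussian computation showing that the restricted infimum actually attains the value $\log\frac{2\sigma_X\sigma_Y}{\sigma_X^2+\sigma_Y^2-D}$. This is non-automatic because, as the paper remarks, the proof of Theorem \ref{thm:RIDdiffVar} goes through isoperimetric arguments rather than Theorem \ref{thm:achGeneral}, so I cannot simply \emph{cite} achievability. I would carry out the optimization by setting the constraint to equality, choosing $\alpha^*=\sqrt{\sigma_Y/\sigma_X}\cdot\rho^*$ with $\rho^*=\frac{\sqrt{(D-(\sigma_X-\sigma_Y)^2)((\sigma_X+\sigma_Y)^2-D)}}{2\sigma_X\sigma_Y}$, solving for $\sigma_W^{*2}$, and verifying algebraically that $\tfrac12\log(1+(\alpha^*)^2\sigma_X^2/\sigma_W^{*2})$ collapses to the Gaussian identification-rate formula of Theorem \ref{thm:RIDdiffVar}. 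Given this verification, chaining the inequalities gives $R_\ID(D,P_X,P_Y)\leq R_\ID(D,N(0,\sigma_X^2),N(0,\sigma_Y^2))$, as claimed.
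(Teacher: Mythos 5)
Your approach is essentially the one the paper takes: restrict $\overline R_\ID$ from Theorem~\ref{thm:achGeneral} to additive Gaussian test channels, observe that for this family the constraint \eqref{eqn:achIneq} depends on $P_X,P_Y$ only through $\sigma_X^2,\sigma_Y^2$, use the max-entropy bound to upper-bound $I(X;\hat X)$ by its Gaussian value, and then exhibit a feasible choice attaining $\log\frac{2\sigma_X\sigma_Y}{\sigma_X^2+\sigma_Y^2-D}$. Your parametrization $\hat X=\alpha X+W$ is equivalent to the paper's $\hat X=\rho\sqrt{\sigma_Y/\sigma_X}\,X+Z$ via $\alpha=\rho\sqrt{\sigma_Y/\sigma_X}$, and your computations of the two second moments and of $I(X;\hat X)\le\tfrac12\log(1+\alpha^2\sigma_X^2/\sigma_W^2)$ are correct. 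So the skeleton is sound.

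Two caveats. First, the explicit value you announce for $\rho^*$ is wrong: carrying out the optimization (equality in \eqref{eqn:achIneq}, maximize $\sigma_W^2/\alpha^2$) gives the coefficient
\begin{equation*}
\rho^*=\frac{(\sigma_X+\sigma_Y)^2-D}{2\sigma_X\sigma_Y},
\end{equation*}
whereas the quantity you wrote,
$\frac{\sqrt{(D-(\sigma_X-\sigma_Y)^2)\bigl((\sigma_X+\sigma_Y)^2-D\bigr)}}{2\sigma_X\sigma_Y}$,
is the \emph{correlation coefficient} between $X$ and $\hat X$ at the optimum, not the scaling coefficient in $\hat X=\rho\sqrt{\sigma_Y/\sigma_X}\,X+Z$. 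Plugging your value into the constraint and solving for $\sigma_W^2$ would not yield $\tfrac12\log(1+\alpha^2\sigma_X^2/\sigma_W^2)=\log\frac{2\sigma_X\sigma_Y}{\sigma_X^2+\sigma_Y^2-D}$, so the final verification step would fail as stated. Since you flag that step as the one you need to check algebraically, you would presumably catch this, but as written the proposed closed form is incorrect. Second, the argument as given only applies on the interval $(\sigma_X-\sigma_Y)^2<D<\sigma_X^2+\sigma_Y^2$; at $D=(\sigma_X-\sigma_Y)^2$ the solved $\sigma_W^2$ blows up, and for $D\ge\sigma_X^2+\sigma_Y^2$ the right-hand side is infinite. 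You should close these edges as the paper does, by monotonicity of $R_\ID$ in $D$ and continuity of the Gaussian formula as $D\downarrow(\sigma_X-\sigma_Y)^2$ (the case $D>\sigma_X^2+\sigma_Y^2$ being trivial).
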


\subsection{Robust Identification Schemes}\label{subsec:RobustSchemes}
In addition to the extremal property of Gaussian sources described in Theorem \ref{thm:GaussianExtreme}, there exists a sequence of rate-$R$ identification schemes $\{T^{(n)},g^{(n)}\}_{n\ra \infty}$, where $(T^{(n)},g^{(n)})$ denotes a blocklength-$n$ identification scheme, designed for Gaussian sources which are \emph{robust} in the following sense.
Using the construction described in the achievability proof of Theorem \ref{thm:RIDdiffVar}, we can construct a sequence of $D$-admissible, rate-$R$ schemes $\{T^{(n)},g^{(n)}\}_{n\ra \infty}$ which satisfy
\begin{align}
      \lim_{n\ra\infty} \Pr\left\{g^{(n)}\left(T^{(n)}(\bX),\bY \right) = \MAYBE\right\} = 0
\end{align}
when $\bX,\bY \sim \prod_{i=1}^nP_X(x_i) P_Y(y_i)$,  $P_X = N(0,\sigma_X^2)$, $P_Y = N(0,\sigma_Y^2)$ and
\begin{align}
R > R_\ID\left(D,N(0,\sigma_X^2),N(0,\sigma_Y^2)\right).
\end{align}

  It turns out that this particular sequence $\{T^{(n)},g^{(n)}\}_{n\ra \infty}$ is robust to the source distributions in the sense that we also have
\begin{align}
      \lim_{n\ra\infty} \Pr\left\{g^{(n)}\left(T^{(n)}(\tilde\bX),\tilde\bY \right) = \MAYBE\right\} = 0
\end{align}
when $\tilde\bX,\tilde\bY \sim \prod_{i=1}^nP_{\tilde X}(\tilde x_i) P_{\tilde Y}(\tilde y_i)$, and $P_{\tilde X}$, $P_{ \tilde Y}$ are zero-mean distributions with variances $\sigma_X^2$ and $\sigma_Y^2$, respectively.  Moreover, the sequence $\{T^{(n)},g^{(n)}\}_{n\ra \infty}$ continues to be $D$-admissible for the sources $\tilde\bX,\tilde\bY$.  Thus, roughly speaking, a scheme $(T,g)$ which is ``good" for Gaussian sources $\bX,\bY$ can be expected to perform well for arbitrary sources $\tilde \bX,\tilde \bY$, provided the respective variances match their Gaussian counterparts and the blocklength $n$ is large.  The proof of this robustness property is given in Section \ref{subsec:RobustProof}.

\section{Proofs}\label{sec:proofs}
In this section, we prove each of the main results.  Proofs are organized by subsection.  We begin with a primer on the key geometric ideas that are used throughout the proofs.
\subsection{Geometric Preliminaries}
For the proofs we require the following definitions related to $n$-dimensional Euclidean geometry.

For $r>0,\bu\in\Reals^n$, let $\BALL_r(\bu)\subseteq \Reals^n$ denote the ball with radius $r$ centered at $\bu$:
\begin{equation}
    \BALL_r(\bu) \triangleq \left\{\bx\in\Reals^n : \|\bx-\bu\| \leq  r\right\}.
\end{equation}
$\BALL_r(\mathbf{0})$ will be denoted $\BALL_r$.

Denote by $S_r\subseteq \Reals^n$ the spherical shell with radius $r$ centered at the origin:
\begin{equation}
    S_r \triangleq \left\{\bx\in\Reals^n : \|\bx\| = r\right\}.
\end{equation}

For any two vectors $\bx_1,\bx_2 \in \Reals^n \setminus \{\mathbf{0}\}$, the angle between them shall be denoted by
\begin{equation}\label{eqn:defAngle}
  \angle(\bx_1,\bx_2) \triangleq
  \arccos\left(\frac{\bx_1^T\bx_2}{\|\bx_1\|\|\bx_2\|}\right) \in [0,\pi].
\end{equation}

For $\theta \in [0,\pi]$ and a point $\bu \in \Reals^n\setminus \{\mathbf{0}\}$, define the cone with half angle $\theta$ and axis going through $\bu$:
\begin{equation}
  \CONE(\bu,\theta) \triangleq \left\{\bx\in\Reals^n :  \angle(\bu,\bx) \leq \theta \right\}.
\end{equation}
Note that $\CONE(\bu,0)$ is the half-infinite line $\{\alpha \bu : \alpha > 0\}$, that
$\CONE(\bu,\pi/2)$ is the half-space containing $\bu$ that is bordered by the  hyperplane  orthogonal to $\bu$ which passes through the origin,
and that $\CONE(\bu,\pi)$ is the entire space $\Reals^n$.
Also, note that $\CONE(\bu_1,\theta) = \CONE(\bu_2,\theta)$ for any $\bu_1=\lambda\bu_2$, $\lambda >0$.

For $r>0, \bu\in\Reals^n\setminus \{\mathbf{0}\}$ and $\theta \in [0,\pi]$, denote by $\CAP_r(\bu,\theta)$ the spherical cap:
\begin{equation}
  \CAP_r(\bu,\theta) \triangleq S_r \cap \CONE(\bu,\theta).
\end{equation}

Let $\Omega(\theta)$ denote the fraction of the (hyper-)surface area of $S_r$ that is occupied by $\CAP_r(\bu,\theta)$:
\begin{equation}
  \Omega(\theta) \triangleq \frac{|\CAP_r(\bu,\theta)|}{|S_r|}.
\end{equation}
Note that the value of $\Omega(\theta)$ depends neither on $r$ nor on $\bu$.  The following bounds on  $\Omega(\theta)$ will be useful:

\begin{lem}{\cite[Corrolary 3.2]{boroczky03}}\label{lem:Omega}
    For $0 < \theta < \arccos(1/\sqrt{n}) < \frac{\pi}{2}$, we have
\begin{align}
  \Omega(\theta) &< \frac{1}{\sqrt{2\pi(n-1)}}\cdot\frac{1}{\cos \theta} \cdot\sin^{n-1}\theta, \label{eqn:OmegaUpper}\\
  \Omega(\theta) &>  \frac{1}{3\sqrt{2\pi n}}\cdot\frac{1}{\cos \theta} \cdot\sin^{n-1}\theta. \label{eqn:OmegaLower}
\end{align}
\end{lem}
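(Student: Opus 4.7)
My plan is to express $\Omega(\theta)$ as an explicit one-dimensional integral and then control that integral by a combination of integration by parts (to extract the leading $\sin^{n-1}\theta/\cos\theta$ behavior) and a Stirling-type estimate of the prefactor that involves Gamma functions.

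First, I would parameterize the spherical cap by the polar angle $\phi$ from the axis $\bu$. Slicing $\CAP_r(\bu,\theta)$ by hyperplanes perpendicular to $\bu$ yields $(n-2)$-spheres of radius $r\sin\phi$, so
\begin{equation}
\Omega(\theta) \;=\; \frac{|\CAP_r(\bu,\theta)|}{|S_r|} \;=\; \frac{\Gamma(n/2)}{\sqrt{\pi}\,\Gamma((n-1)/2)}\int_0^\theta \sin^{n-2}\phi\,d\phi.
\end{equation}
This reduces the problem to (i) two-sided bounds on the integral $I_n(\theta)=\int_0^\theta \sin^{n-2}\phi\,d\phi$ and (ii) two-sided bounds on the prefactor $\Gamma(n/2)/[\sqrt{\pi}\,\Gamma((n-1)/2)]$.

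For (i), I would use integration by parts with $u=1/\cos\phi$ and $dv=\sin^{n-2}\phi\cos\phi\,d\phi$, which gives the identity
\begin{equation}
I_n(\theta) \;=\; \frac{\sin^{n-1}\theta}{(n-1)\cos\theta}\;-\;\frac{1}{n-1}\int_0^\theta \frac{\sin^n\phi}{\cos^2\phi}\,d\phi.
\end{equation}
The last term is nonnegative, so $I_n(\theta)\le \sin^{n-1}\theta/[(n-1)\cos\theta]$, which is exactly the shape needed for the upper bound \eqref{eqn:OmegaUpper}. For the lower bound, I bound $\sin^n\phi\le \sin^2\theta\cdot\sin^{n-2}\phi$ inside the remainder integral and use $1/\cos^2\phi\le 1/\cos^2\theta$ for $\phi\le\theta<\arccos(1/\sqrt n)$; this yields a self-referential inequality that can be solved for $I_n(\theta)$, producing a bound of the form $I_n(\theta)\ge \sin^{n-1}\theta/[(n-1)\cos\theta]\cdot(1-\tfrac{\tan^2\theta}{n-1})^{-1}$. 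The assumption $\theta<\arccos(1/\sqrt n)$, equivalent to $\tan^2\theta<n-1$, ensures this factor is bounded by a universal constant (and in particular bounded by $3$ once combined with the prefactor slack in step (ii)).

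For (ii), I would apply standard Gamma-ratio asymptotics: $\Gamma(n/2)/\Gamma((n-1)/2) = \sqrt{(n-1)/2}\,(1+O(1/n))$ with explicit two-sided bounds coming from Stirling's formula (or Wendel's inequality $\Gamma(x+1/2)/\Gamma(x)\le \sqrt{x}$). This turns the prefactor into $\sqrt{(n-1)/(2\pi)}$ up to a controlled multiplicative factor between, say, $1$ and $3/\sqrt{(n-1)/n}$. Multiplying by the bounds from step (i) and collecting constants produces the upper bound $1/\sqrt{2\pi(n-1)}\cdot\sin^{n-1}\theta/\cos\theta$ and the lower bound with the looser constant $1/(3\sqrt{2\pi n})$, matching \eqref{eqn:OmegaUpper}--\eqref{eqn:OmegaLower}.

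The main obstacle, and the reason the lower bound carries the weaker constant $1/(3\sqrt{2\pi n})$, is to carefully absorb two sources of slack simultaneously: the self-referential correction $(1-\tan^2\theta/(n-1))^{-1}$ from the integration-by-parts remainder and the $O(1/n)$ slack in the Gamma-ratio. Both depend delicately on the hypothesis $\theta<\arccos(1/\sqrt n)$, and verifying that $3$ is a valid universal constant across the full range of admissible $\theta$ and $n\ge 2$ is the only step requiring genuine care; everything else is bookkeeping.
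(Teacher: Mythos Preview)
The paper does not supply its own proof of this lemma; it simply imports the statement from B\"or\"oczky~\cite[Corollary~3.2]{boroczky03}. So there is no ``paper's proof'' to compare against, and your proposal amounts to an independent derivation of a result the authors took off the shelf.

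That said, your approach is the standard one and is essentially sound: the polar-angle integral formula for $\Omega(\theta)$ is correct, the integration-by-parts identity
\[
I_n(\theta)=\frac{\sin^{n-1}\theta}{(n-1)\cos\theta}-\frac{1}{n-1}\int_0^\theta\frac{\sin^n\phi}{\cos^2\phi}\,d\phi
\]
immediately gives the upper bound on $I_n(\theta)$, and Wendel's inequality $\Gamma(x+1/2)/\Gamma(x)\le\sqrt{x}$ with $x=(n-1)/2$ gives exactly $\Gamma(n/2)/[\sqrt{\pi}\,\Gamma((n-1)/2)]\le\sqrt{(n-1)/(2\pi)}$, which combines to the stated \eqref{eqn:OmegaUpper} with no slack.

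There is one algebraic slip in your lower-bound step. Bounding $\sin^n\phi/\cos^2\phi\le\tan^2\theta\cdot\sin^{n-2}\phi$ and feeding back gives
\[
\Bigl(1+\tfrac{\tan^2\theta}{n-1}\Bigr)I_n(\theta)\;\ge\;\frac{\sin^{n-1}\theta}{(n-1)\cos\theta},
\]
so the correction factor is $(1+\tan^2\theta/(n-1))^{-1}$, not $(1-\tan^2\theta/(n-1))^{-1}$ as you wrote. The latter would exceed~$1$ and contradict the upper bound. With the correct sign, the hypothesis $\theta<\arccos(1/\sqrt{n})$ gives $\tan^2\theta<n-1$ and hence this factor is at least $1/2$; combining with the lower side of the Gamma-ratio estimate then comfortably yields the constant $1/3$ in \eqref{eqn:OmegaLower}. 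Fix that sign and the argument goes through.
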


\bigskip

For positive $r_1 \leq r_2 \in \Reals$, let $S_{r_1,r_2}\subseteq \Reals^n$ be a
spherical shell of inner radius $r_1$ and outer radius $r_2$:
\begin{equation}
    S_{r_1,r_2} \triangleq \left\{\bx\in\Reals^n : r_1 \leq \|\bx\| \leq r_2\right\}.
\end{equation}
For a given half-angle $\theta \in [0,\pi]$, define the $(r_1,r_2)$-spherical cap with half-angle $\theta$ and axis going through $\bu$ as
\begin{align}
  \CAP_{r_1,r_2}(\bu,\theta)
  &\triangleq \CONE(\bu,\theta) \cap S_{r_1,r_2}.
\end{align}

For a set $A \subseteq \Reals^n$ and $D>0$, the $D$-expansion of $A$, denoted $\Gamma^D(A)$ is defined as
\begin{align}
  \Gamma^D(A)
  &\triangleq \{\by\in\Reals^n : \exists_{\bx\in A}  d(\bx,\by)\leq D\}\\
  &= A + \BALL_{\sqrt{nD}},
\end{align}
where we have used $+$ to denote the Minkowski sum.

\subsection{Codes that cover a spherical shell}

\begin{defn}
Let $S_{r}\subseteq \Reals^n$ be the spherical shell with radius $r$.  We say that a set of points $\cC=\{\bu_1, \dots, \bu_m : \bu_i \in \Reals^n\}$  is a code that $D$-covers $S_{r}$ if
\begin{align}
S_{r} \subseteq \bigcup_{\bu\in \cC} \BALL_{\sqrt{nD}}(\bu).
\end{align}
The \emph{rate} of $\mathcal{C}$ is defined as $\frac{1}{n}\log m$.
\end{defn}

When not explicitly stated, the ambient dimension $n$ of the code $\cC$ will be clear from context.

\begin{lem}[Following \cite{Dumer07}] \label{lem:ShellCovering}
Fix $\sigma^2>0$ and the dimension $n$.   For any $0<D_0<\sigma^2$, there exists a code $\cC$ that $D_0$-covers $S_{\sqrt{n\sigma^2}}$ with rate
\begin{equation}\label{eqn:ShellCoveringExistence}
  R_0 = \frac{1}{n}\log |\cC| \leq \frac{1}{2}\log\frac{\sigma^2}{D_0} + O\left(\frac{\log n}{n} \right).
\end{equation}
Moreover, for all $\bu\in\cC$, we have $\|\bu\| = \sqrt{n(\sigma^2-D_0)}$, and
\begin{align}
\CAP_{\sqrt{n\sigma^2}}(\bu,\theta_0) = S_{\sqrt{n\sigma^2}} \cap  \BALL_{\sqrt{nD_0}}(\bu),
\end{align}
where
\begin{equation}\label{eqn:def_theta0}
  \theta_0 \triangleq \arcsin(\sqrt{D_0/\sigma^2})<\frac{\pi}{2}.
\end{equation}
\end{lem}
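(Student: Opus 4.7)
The lemma has two essentially independent ingredients: a geometric observation that pins down where codeword centers should live, and a volumetric covering bound for spherical caps on the outer sphere.

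\emph{Step 1: Geometric reduction.} I first verify the claimed cap identity. Let $\bu$ be an arbitrary point with $\|\bu\|^2=n(\sigma^2-D_0)$ and let $\bx\in S_{\sqrt{n\sigma^2}}$. Expanding $\|\bx-\bu\|^2 \le nD_0$ using $\|\bx\|^2=n\sigma^2$ and $\|\bu\|^2=n(\sigma^2-D_0)$ gives the equivalent condition $\bx^T\bu \ge n(\sigma^2-D_0)=\|\bu\|^2$. Dividing by $\|\bx\|\|\bu\|$ turns this into $\cos\angle(\bx,\bu)\ge \sqrt{(\sigma^2-D_0)/\sigma^2}=\cos\theta_0$, where $\theta_0=\arcsin\sqrt{D_0/\sigma^2}$. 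Hence $S_{\sqrt{n\sigma^2}}\cap\BALL_{\sqrt{nD_0}}(\bu)=\CAP_{\sqrt{n\sigma^2}}(\bu,\theta_0)$. This simultaneously justifies the choice of codeword radius and reduces the problem to covering $S_{\sqrt{n\sigma^2}}$ by spherical caps of fixed half-angle $\theta_0$.

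\emph{Step 2: Covering the outer sphere by caps.} Next I invoke a Rogers-type covering argument (as in Dumer's construction) to bound the number of caps needed. The quickest route is probabilistic: draw $N$ axis directions $\hat\bu_1,\dots,\hat\bu_N$ i.i.d.\ uniform on the unit sphere and set $\bu_i=\sqrt{n(\sigma^2-D_0)}\,\hat\bu_i$. For any fixed $\bx\in S_{\sqrt{n\sigma^2}}$, the probability that $\bx\notin \CAP_{\sqrt{n\sigma^2}}(\bu_i,\theta_0)$ for all $i$ is $(1-\Omega(\theta_0))^N$. To upgrade ``small expected uncovered measure'' into ``fully covered,'' I slightly enlarge the half-angle to some $\theta_0'>\theta_0$ and take a union bound over a $\delta$-net of $S_{\sqrt{n\sigma^2}}$ of cardinality at most polynomial in $1/\delta$ and exponential in $n$; choosing $\delta$ polynomially small and $N=\lceil c\,n\log n/\Omega(\theta_0')\rceil$ makes the union-bound failure probability vanish. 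The enlargement $\theta_0'-\theta_0$ can be taken as $O(1/n)$, which affects the final rate only by an $O(\log n/n)$ term.

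\emph{Step 3: Rate estimate via Lemma on $\Omega$.} Plugging the lower bound \eqref{eqn:OmegaLower} with $\sin\theta_0=\sqrt{D_0/\sigma^2}$ gives
\begin{align}
-\log\Omega(\theta_0) \;\le\; \frac{n-1}{2}\log\frac{\sigma^2}{D_0}+\tfrac{1}{2}\log(2\pi n)+\log(3\cos\theta_0)= \frac{n}{2}\log\frac{\sigma^2}{D_0}+O(\log n),
\end{align}
and combining this with $\log N\le \log\!\lceil c\,n\log n/\Omega(\theta_0')\rceil$ yields $R_0=\frac{1}{n}\log|\cC|\le \frac{1}{2}\log(\sigma^2/D_0)+O(\log n/n)$, as claimed.

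\emph{Main obstacle.} The only delicate point is Step 2: turning the ``expected uncovered fraction is exponentially small'' into an actual covering. Rogers-type arguments, or the discretize-and-enlarge trick sketched above, handle this with only polynomial overhead, which is absorbed into the $O(\log n/n)$ slack in the rate. Everything else (the cap identity and the $\Omega$ estimate) is routine once the correct codeword radius $\sqrt{n(\sigma^2-D_0)}$ is identified in Step 1.
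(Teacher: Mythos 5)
Your proof is correct in substance, but you take a genuinely different route from the paper. The paper simply invokes Dumer's covering theorem (\cite[Theorem 1]{Dumer07}) as a black box: it gives a covering of $S_r$ by $k$ balls of radius $\rho$ with density $\vartheta = k\,\Omega(\theta) \le n\log n$ (for large $n$), from which $k \le n\log n / \Omega(\theta_0)$ and the rate bound follow immediately after applying \eqref{eqn:OmegaLower}. The geometric identity and the codeword radius $\sqrt{n(\sigma^2-D_0)}$ are verified exactly as you do in your Step~1. You instead re-derive the covering bound from scratch with a probabilistic construction (random cap axes, $\delta$-net, union bound). This is a valid alternative: it is longer and requires keeping track of the net resolution, but it is self-contained and does not rely on Dumer's precise density estimate; both routes give the same $O(\log n / n)$ slack.

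One small wrinkle in your Step~2 worth flagging: the enlargement goes the wrong way as written. If you cover the $\delta$-net with $\theta_0$-caps and then enlarge to $\theta_0' > \theta_0$ for the full sphere, your final code is a $D_0'$-cover with $D_0' > D_0$, and the codewords would also need to sit at radius $\sqrt{n\sigma^2}\cos\theta_0'$ for the ball--cap identity to hold, not at $\sqrt{n(\sigma^2-D_0)}$. The fix is to cover the net with $(\theta_0 - O(\delta))$-caps so that the resulting full cover uses exactly $\theta_0$-caps, then place the codewords at the prescribed radius. Since $\delta = O(1/n)$ and $\Omega(\theta_0 - O(1/n)) = \Theta(\Omega(\theta_0))$ (e.g., via the ratio $\left(\sin(\theta_0 - \delta)/\sin\theta_0\right)^{n-1} = e^{-O(n\delta)} = \Theta(1)$), the rate penalty is still absorbed into the $O(\log n/n)$ term, so the conclusion stands once the direction of the shrink/enlarge step is corrected.
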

\begin{proof} Appendix~\ref{app:ShellCovering}.\end{proof}

It is no surprise that the term  $\frac{1}{2}\log\frac{\sigma^2}{D_0}$ appearing in \eqref{eqn:ShellCoveringExistence} is identical to the rate-distortion function for the Gaussian source with variance $\sigma^2$ evaluated at distortion-level $D_0$.
We could have therefore used any standard (random code-like) construction.
However, using Lemma \ref{lem:ShellCovering} will be more convenient for our purposes since each point in $S_r$ is guaranteed to be covered, and hence we do not need to account for another error event. This fact will make the subsequent proofs more straightforward.

\subsection{Identification Rate}
The proof of Theorem \ref{thm:RIDdiffVar} is somewhat lengthy, so we first give the key ideas here before moving onto the formal details.

The proof of the theorem relies on the fact that a high-dimensional Gaussian random vector -- with independent entries having zero mean and variance $\sigma_X^2$ -- concentrates near a thin hyper-spherical shell of radius $r_0 \triangleq \sqrt{n\sigma_X^2}$, which we call the \emph{typical sphere}.  The signature assignment constructed in the direct part of the proof quantizes the surface of the typical sphere into regions roughly described by spherical caps.  The query function $g$, knowing which cap $\bX$ lies in from the received signature, returns $\MAYBE$ only if $\bY$ lies within Euclidean distance $\sqrt D$ of the cap in which $\bX$ lies.  Thus, the goal in the direct part is to show that, for sufficiently large rate $R$, the probability $\bY$ falls into the $\Gamma^D$-expansion of any given cap is vanishing.

The key ingredient in proving the converse is the isoperimetric inequality on the surface of the hypersphere, known as Levy's lemma (see e.g. \cite[Theorem 1.1]{ledoux2011probability}).  In a nutshell, we apply Levy's lemma to prove that any given identification system $(T,g)$ requires a rate that is essentially as large as an identification system that uniquely assigns caps on the typical sphere to signatures (as is done by the achievability scheme).  The apparent need for a refined isoperimetric inequality to prove the converse distinguishes our problem from the class of standard rate-distortion problems.

\begin{proof}[Proof of Theorem~\ref{thm:RIDdiffVar}]
Before beginning the proof, we first note that it is sufficient to consider $D$ in the interval $(\sigma_X-\sigma_Y)^2 < D < \sigma_X^2+\sigma_Y^2$.  The claims that $R_\ID(D,P_X,P_Y) = 0$ for $D \leq (\sigma_X-\sigma_Y)^2$, and $R_\ID(D,P_X,P_Y) = \infty$  for $D \geq \sigma_X^2+\sigma_Y^2$ then follow from monotonicity of $R_\ID(D,P_X,P_Y)$ in $D$.

\textbf{Direct Part:}  Fix a small $\epsilon>0$, and define $r_X \triangleq \sqrt{n\sigma_X^2}$ (i.e., the radius of the typical sphere).  Let $D$ be a desired similarity threshold in the interval $(\sigma_X-\sigma_Y)^2 < D < \sigma_X^2+\sigma_Y^2$, and let $\eta>0$ be sufficiently small so that
\begin{align}
(1-\epsilon) \left[\frac{\sigma_X^2+\sigma_Y^2-D}{2\sigma_X \sigma_Y}\right]^2 <  \left[\frac{\sigma_X^2+\sigma_Y^2-2\eta-D}{2\sqrt{(\sigma_X^2+\eta)(\sigma_Y^2+\eta)}}\right]^2. \label{etaSuffSmall}
\end{align}
Next, define a constant $D_0$ satisfying
\begin{align}
(1-\epsilon) \sigma_X^2 \left[\frac{\sigma_X^2+\sigma_Y^2-2\eta-D}{2\sqrt{(\sigma_X^2+\eta)(\sigma_Y^2+\eta)}}\right]^2  < D_0 < \sigma_X^2 \left[\frac{\sigma_X^2+\sigma_Y^2-2\eta-D}{2\sqrt{(\sigma_X^2+\eta)(\sigma_Y^2+\eta)}}\right]^2. \label{eqn:DefnD0}
\end{align}
The motivation behind the choices of $\eta$ and $D_0$ satisfying \eqref{etaSuffSmall} and \eqref{eqn:DefnD0} will become clear as the proof proceeds.

By our assumption that $D> (\sigma_X-\sigma_Y)^2$, it follows that $0 < D_0<\sigma_X^2$.  By Lemma~\ref{lem:ShellCovering}, there exists  a code $\cC$ which $D_0$-covers $S_{r_X}$ with rate $R_0$ bounded by
\begin{align}
  R_0 \leq \frac{1}{2}\log\frac{\sigma_X^2}{D_0} + O\left(\frac{\log n}{n} \right).
\end{align}

Let $T_0:S_{r_X}\ra\cC$ be the quantization operation defined by
\begin{align}
T_0(\bx) = \arg \min_{\bu\in \cC}\|\bx-\bu\| \mbox{~~for $\bx\in S_{r_X}$.}
\end{align}
That is, the function $T_0(\bx)$ maps $\bx\in S_{r_X}$ to the closest reconstruction point $\bu \in \cC$. Since $\cC$ is a code that $D_0$-covers $S_{r_X}$, it follows that
\begin{align}
 \|T_0(\bx)-\bx\| \leq \rho_0 \triangleq \sqrt{nD_0} \mbox{~~for all $\bx\in S_{r_X}$.}
\end{align}
Denote the points in $S_{r_X}$ that are mapped to $\bu$ by $T_0^{-1}(\bu)$. With this notation, it follows by construction that
\begin{equation}
  T_0^{-1}(\bu) \subseteq \CAP_{r_X}(\bu,\theta_0),
\end{equation}
where $\theta_0 \triangleq \arcsin(\sqrt{D_0/\sigma_X^2})$ courtesy of Lemma \ref{lem:ShellCovering}. The set $\CAP_{r_X}(\bu,\theta_0)$ is illustrated in Fig.~\ref{fig:circle_r0}.
\begin{figure}
  \centering
  \def\svgwidth{5in}
  \subimport{figures/}{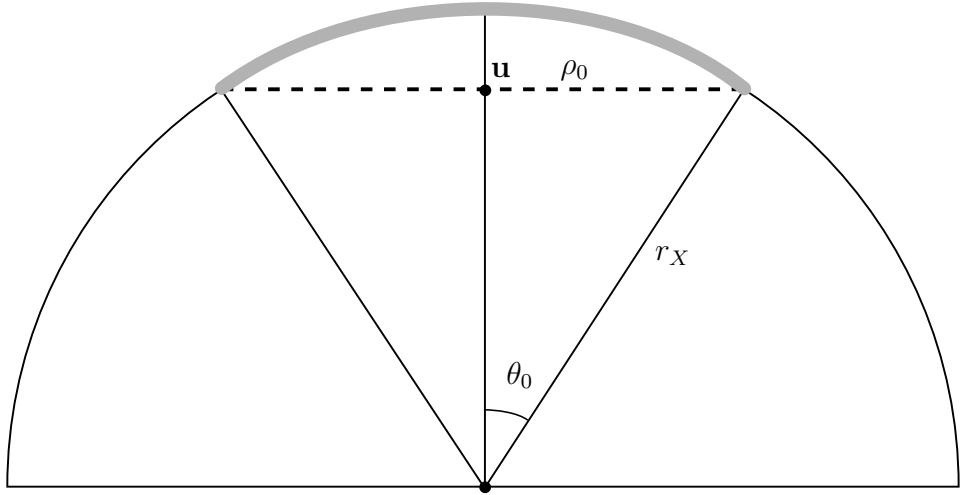}
  \caption{Illustration of a single cap $\CAP_{r_X}(\bu,\theta_0)$ (denoted in grey).}\label{fig:circle_r0}
\end{figure}

\bigskip
Define $\ST_X$ to be the set of all vectors $\bx\in\Reals^n$ s.t. $\sigma_X^2-\eta \leq \frac{1}{n}\|\bx\|^2\leq \sigma_X^2+\eta$. In other words,
\begin{equation}
  \ST_X \triangleq S_{r^-,r^+}, \label{sTypXdefn}
\end{equation}
where  $r^\pm \triangleq \sqrt{n(\sigma_X^2\pm\eta)}$.
Note that $\Pr\{\bX \notin \ST_X\}$ vanishes with $n$ (in fact, it vanishes exponentially), which motivates the notation $\ST_X$.

Next, we construct a mapping $T:\ST_X \ra \cC$ defined as follows:
\begin{equation}
    T(\bx) = T_0\left(\bx \cdot \frac{\sqrt{n\sigma_X^2}}{\|\bx\|}\right).
\end{equation}
Since $T_0^{-1}(\bu)$ is contained in $\CAP_{r_X}(\bu,\theta_0)$, we similarly have that the inverse map $T^{-1}$ satisfies
\begin{equation}\label{eqn:cap_contained}
  T^{-1}(\bu) \subseteq \CAP_{r^-,r^+}(\bu,\theta_0).
\end{equation}

The signature assignment for our identification scheme for $\bx\in\ST_X$ shall be given by the function $T(\cdot)$ defined above.  For $\bx \notin \ST_X$ we define $T(\bx) = \ERASURE$, where $\ERASURE$ is an additional ``erasure" symbol, denoting the fact that the signature does not convey any information about $\bx$ in this case (and the decision function $g(\cdot,\cdot)$ must output $\MAYBE$).
Note that the additional rate incurred by the erasure symbol is negligible and we still have that the signature assignment's rate $R$ is bounded by
\begin{align}
R &= \frac{1}{n}\log\left( |\cC| + 1 \right) \\
&\leq \frac{1}{2}\log\frac{\sigma_X^2}{D_0} + O\left(\frac{\log n}{n}\right)\\
&\leq \log\frac{2\sigma_X\sigma_Y}{\sigma_X^2+\sigma_Y^2-D} + \log\frac{1}{1-\epsilon} + O\left(\frac{\log n}{n}\right), \label{upperBoundRThm1}
\end{align}
where the final inequality follows from \eqref{etaSuffSmall} and  \eqref{eqn:DefnD0}.

The query function $g(\cdot,\cdot)$ is defined to be the optimal one given the signature mapping $T(\cdot)$:
\begin{equation}\label{eqn:opt_decoder}
  g(t,\by) = \left\{
                 \begin{array}{ll}
                   \MAYBE & \hbox{If $t = \ERASURE$ or if $\exists \bx' \in T^{-1}(t) \mbox{ s.t. } d(\bx',\by)\leq D$}
                    \\
                   \NO & \hbox{otherwise.}
                 \end{array}
               \right.
\end{equation}

\bigskip
Using the shorthand notation
\begin{align}
  \Pr\{\MAYBE\} \triangleq \Pr \{ g(T(\bX),\bY) = \MAYBE \},
\end{align}
we analyze $\Pr\{\MAYBE\}$  as follows. First, define a typical set for the $\bY$-sequences:
\begin{equation}
  \ST_Y \triangleq S_{r_Y^-,r_Y^+},
\end{equation}
where  $r_Y^\pm \triangleq \sqrt{n(\sigma_Y^2\pm\eta)}$, and write
\begin{align}
  \Pr\{\MAYBE\} &
  \leq \Pr\{\MAYBE | \bX \in \ST_X, \bY \in \ST_Y\}+ \Pr\{\bX \notin \ST_X\} + \Pr\{\bY \notin \ST_Y\}. \label{eqn:totalMaypeProbAchievability}
\end{align}

Note that the latter two terms in \eqref{eqn:totalMaypeProbAchievability} vanish as $n$ grows large, thus we focus on bounding the first term.  To this end, we require the following lemma.

\begin{lem}\label{lem:Dexpansion}
Let $\cC$ and $\eta$ be as defined above.  For any $\bu \in \cC$, we have
\begin{equation}
    \Gamma^D\left(T^{-1}(\bu)\right) \cap \ST_Y\subseteq \CONE(\bu,\theta'),
\end{equation}
where
\begin{equation}\label{eqn:thetaprime}
  \theta' \triangleq  \theta_0+\theta_1 < \frac{\pi}{2},
\end{equation}
and the angles $\theta_0$ and $\theta_1$ are given by
\begin{align}
  \theta_0 &\triangleq \arcsin\left(\sqrt{\frac{D_0}{\sigma_X^2}}\right)\label{eqn:def_ThetaZero}\\
  \theta_1 &\triangleq \arccos\left(\frac{\sigma_X^2+\sigma_Y^2-2\eta-D}{2\sqrt{(\sigma_X^2+\eta)(\sigma_Y^2+\eta)}}\right).\label{eqn:def_theta1}
\end{align}
\end{lem}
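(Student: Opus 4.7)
Fix any $\by \in \Gamma^D(T^{-1}(\bu)) \cap \ST_Y$; the goal is to certify that $\angle(\bu, \by) \leq \theta' = \theta_0 + \theta_1$. By the definition of $\Gamma^D$, there exists $\bx \in T^{-1}(\bu)$ with $\|\bx - \by\|^2 \leq nD$. The inclusion \eqref{eqn:cap_contained} gives $\bx \in \CAP_{r^-,r^+}(\bu, \theta_0)$, so $\angle(\bu, \bx) \leq \theta_0$ and $n(\sigma_X^2 - \eta) \leq \|\bx\|^2 \leq n(\sigma_X^2+\eta)$; the assumption $\by \in \ST_Y$ likewise gives $n(\sigma_Y^2 - \eta) \leq \|\by\|^2 \leq n(\sigma_Y^2 + \eta)$. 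After collecting these, everything reduces to a triangle inequality for angles in $\Reals^n$.

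For that reduction, the main step is to bound $\angle(\bx,\by)$ via the law of cosines,
\begin{equation*}
\cos\angle(\bx,\by) = \frac{\|\bx\|^2 + \|\by\|^2 - \|\bx-\by\|^2}{2\|\bx\|\,\|\by\|} \geq \frac{\sigma_X^2 + \sigma_Y^2 - 2\eta - D}{2\sqrt{(\sigma_X^2+\eta)(\sigma_Y^2+\eta)}} = \cos\theta_1,
\end{equation*}
using the lower bounds on $\|\bx\|^2, \|\by\|^2$ and the upper bound on $\|\bx - \by\|^2$ in the numerator, and the upper bounds on $\|\bx\|$ and $\|\by\|$ in the denominator. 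This one-sided manipulation is valid precisely when the numerator is nonnegative, which is ensured by the hypothesis $D < \sigma_X^2 + \sigma_Y^2$ together with the smallness of $\eta$ imposed by \eqref{etaSuffSmall}. Hence $\angle(\bx,\by) \leq \theta_1$. Combining this with the angular triangle inequality $\angle(\bu,\by) \leq \angle(\bu,\bx) + \angle(\bx,\by)$ for nonzero vectors in $\Reals^n$ (equivalently, the geodesic triangle inequality on the unit sphere applied to $\bu/\|\bu\|,\bx/\|\bx\|,\by/\|\by\|$) yields $\angle(\bu,\by) \leq \theta_0 + \theta_1 = \theta'$, i.e.\ $\by \in \CONE(\bu, \theta')$.

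The remaining claim $\theta' < \pi/2$ follows directly from the upper bound on $D_0$ in \eqref{eqn:DefnD0}, which is exactly $D_0 < \sigma_X^2 \cos^2\theta_1$; this gives $\sin\theta_0 = \sqrt{D_0/\sigma_X^2} < \cos\theta_1 = \sin(\pi/2 - \theta_1)$, forcing $\theta_0 + \theta_1 < \pi/2$ since both angles lie in $[0,\pi/2)$. The only nontrivial piece is the one-sided law-of-cosines estimate above and its positivity check; everything else is bookkeeping with the typicality parameters. No isoperimetric tool is required, since the lemma is purely geometric and the typical-shell constraints feed directly into the definition of $\theta_1$ in \eqref{eqn:def_theta1}.
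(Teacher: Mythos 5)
Your proof is correct and follows the same route as the paper's: bound $\angle(\bu,\bx)\leq\theta_0$ via the cap containment, bound $\angle(\bx,\by)\leq\theta_1$ via the law of cosines together with the typical-shell bounds on $\|\bx\|,\|\by\|$ and $\|\bx-\by\|^2\leq nD$, then combine with the spherical triangle inequality and derive $\theta'<\pi/2$ from the upper bound in \eqref{eqn:DefnD0}. You spell out the sign condition on the law-of-cosines numerator and the monotonicity argument for $\theta'<\pi/2$ in a bit more detail than the paper does, but the argument is the same.
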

\begin{proof} Appendix~\ref{app:Dexpansion}.
\end{proof}
Fig.~\ref{fig:LawOfCosines_rXrY} illustrates the claim in the lemma.
\begin{figure}
  \centering
  \def\svgwidth{5in}
  \subimport{figures/}{LawOfCosines_rXrY.eps_tex}
  \caption{Illustration for Lemma~\ref{lem:Dexpansion}. The black region marks $\CAP_{r^-,r^+}(\bu,\theta_0)$. The grey area denotes $\Gamma^D\left(\CAP_{r^-,r^+}(\bu,\theta_0)\right)$, and the dark grey region marks the intersection $\Gamma^D\left(\CAP_{r^-,r^+}(\bu,\theta_0)\right) \cap \ST_Y$.
  }\label{fig:LawOfCosines_rXrY}
\end{figure}

Let $\theta'$ be as defined in Lemma \ref{lem:Dexpansion} above. We continue with
\begin{align}
  \!\!\Pr\{\MAYBE | \bX \in \ST_X, \bY \in \ST_Y\}
  &\overset{(a)}= \Pr\left\{\bY \in  \Gamma^D\left(T^{-1}(T(\bX))\right) | \bX\in \ST_X, \bY \in \ST_Y \right\} \notag\\
  &\overset{(b)}\leq  \Pr\left\{\bY \in  \CONE(T(\bX),\theta')|  \bX\in \ST_X, \bY \in \ST_Y\right\} \notag\\
  &\overset{(c)}= \Omega(\theta') \notag\\
  &\overset{(d)}\leq \frac{1}{\sqrt{2\pi (n-1)}} \cdot \frac{1}{\cos\theta'} \cdot \sin^{n-1}\theta'. \label{punchlineInequalityThm1Direct}
\end{align}
Identity $(a)$ follows by definition of the query function $g(\cdot,\cdot)$.  Inequality $(b)$ follows from Lemma~\ref{lem:Dexpansion}. Equality $(c)$ follows since $\bY$ is uniformly distributed within each shell $S_r$ of radius $r>0$ (due to the spherical symmetry of the Gaussian distribution), and the probability of falling in a cap of a given half-angle $\theta'$ is precisely the fraction of the surface that is occupied by the cap, $\Omega(\theta')$. Inequality $(d)$ follows since $\theta' \leq \arccos(1/\sqrt n)$ for sufficiently large $n$, and therefore  \eqref{eqn:OmegaUpper} applies.

Since $\theta'<\pi/2$, we have $\sin \theta' < 1$, and it therefore follows from \eqref{punchlineInequalityThm1Direct} that the probability $\Pr\{\MAYBE | \bX \in \ST, \bY \in \ST_Y\}$ vanishes with $n$. Thus, since $\epsilon$ was arbitrary, recalling \eqref{upperBoundRThm1} completes the direct part of the proof.
\begin{remark}
The alert reader will observe that the direct part also follows from the direct part of Theorem~\ref{thm:EIDdiffVar}. However, we have chosen to include an explicit proof here to introduce the notations and ideas crucial for proving Theorem~\ref{thm:EIDdiffVar}.
\end{remark}

\bigskip
\textbf{Converse Part:}  Let $\eta>0$ and define $\ST_X$ as in \eqref{sTypXdefn}.
Let $T:\Reals^n\ra\{1,...,2^{nR}\}$ be a given signature function corresponding to a $D$-admissible system $(T,g)$, and assume that
\begin{align}
 \Pr \{ g(T(\bX),\bY) = \MAYBE \} \leq \frac{1}{4} \label{UBpmaybe_fourth}
\end{align}
since we are only interested in $D$-achievable rates $R$.  As before, we will use the shorthand notation $ \Pr\{\MAYBE\} \triangleq \Pr \{ g(T(\bX),\bY) = \MAYBE \}$ to simplify the presentation.

We shall restrict our attention to the typical sphere. To this end, define the mapping $\tilde T:\ST_X \ra \{1,...,2^{nR}\}$, where $\tilde{T}(\bx)=T(\bx)$ for $\bx\in \ST_X$. Let $\tilde T^{-1}(\cdot)$ denote the inverse mapping of $\tilde T(\cdot)$, i.e.
\begin{align}
  \tilde T^{-1}(i)
  &\triangleq \{\bx \in \ST_X : T(\bx) = i\} \\
  &= T^{-1}(i) \cap \ST_X.
\end{align}

Let $p_i \triangleq \Pr\{\bX \in \tilde T^{-1}(i) | \bX \in \ST_X\}$. Clearly, we have $\sum_{i=1}^{2^{nR}}p_i = 1$. Define the set $A_i \subseteq S_{r_X}$ to be projection of $\tilde T^{-1}(i)$ onto the sphere $S_{r_X}$:
\begin{equation}
  A_i = \left\{ r_X \frac{\bx}{\|\bx\|} : \bx \in \tilde T^{-1}(i) \right\}
\end{equation}

Let $\alpha_i$ denote the fraction of the surface area of $S_{r_X}$ that is occupied by $A_i$. By the spherical symmetry of the pdf of $\bX$, $\alpha_i$ is also equal to the probability that the projection of $\bX$ onto $S_{r_0}$ lies in $A_i$. Therefore $\alpha_i \geq p_i$, with equality if and only if $\tilde T^{-1}(i)$ is a thick cap with inner and outer radii $r^\pm \triangleq \sqrt{n(\sigma_X^2\pm\eta)}$.

Let $D' \triangleq (\sqrt{D} +\sqrt{\sigma_X^2-\eta} -  \sqrt{\sigma_X^2})^2 < D$. It can easily be verified that
\begin{equation}\label{eqn:AiProperty}
  \Gamma^{D'}\left(A_i\right) \subseteq \Gamma^D\left(\tilde T^{-1}(i)\right).
\end{equation}

Now let $D'' \triangleq (\sqrt{D'} + \sqrt{\sigma_Y^2-\eta}-\sqrt{\sigma_Y^2})^2$, and let the set $B_i$ denote the $D''$-expansion of $A_i$, restricted to the sphere $S_{r_Y}$, i.e.
\begin{equation}
  B_i \triangleq \Gamma^{D''}\left(A_i\right) \cap S_{r_Y}.
\end{equation}
The set $B_i$ can also be thought of an expansion of a set $\tilde A_i \triangleq \frac{\sigma_Y}{\sigma_X} \cdot A_i$,
with the alternative distance measure $\breve{d}(\cdot,\cdot)$ defined over the sphere $S_{r_Y}$ that measures the arc-length between the two points (i.e., the geodesic distance).
Also note that $\alpha_i = \frac{|A_i|}{|S_{r_X}|} = \frac{|\tilde A_i|}{|S_{r_Y}|}$ where $|\cdot|$ is used to denote the (hyper-) surface area.
Let $\beta_i = \frac{|B_i|}{|S_{r_0}|}$ denote the fraction of $S_{r_0}$ that is occupied by $B_i$.

Let the set $C_i$ denote the $r_Y^-,r_Y^+$ thickening of $B_i$ as follows:
\begin{equation}
  C_i = \left\{\by\in \ST_Y : r_Y\frac{\by}{\|\by\|} \in B_i\right\}.
\end{equation}
Next, it can also be verified that
\begin{equation}\label{eqn:CiProperty}
  C_i \subseteq \Gamma^{D'}\left(A_i\right).
\end{equation}
\bigskip

Suppose that $\bx \in \ST_X$ and that $T(\bx) = i$. Then we have:

\begin{align*}
  \Pr\{\MAYBE | \bX =\bx \in \ST_X \}
  &\geq \Pr\left\{\bY \in  \Gamma^D\left(T^{-1}(i)\right)\right\} \\
  &\overset{(a)}\geq \Pr\left\{\bY \in  \Gamma^D\left(\tilde T^{-1}(i)\right)\right\} \\
  &\overset{(b)}\geq \Pr\left\{\bY \in  \Gamma^{D'}\left(A_i\right)\right\} \\
  &\overset{(c)}\geq \Pr\left\{\bY \in C_i \right\},
\end{align*}
where $(a)$ follows since $\tilde T^{-1}(i) \subseteq T^{-1}(i)$, and $(b)$ and $(c)$ follow from \eqref{eqn:AiProperty} and \eqref{eqn:CiProperty} respectively.

Let $f_{\bY}$ be the density of $\bY$.  Then, we continue with
\begin{align*}
  \Pr\left\{\bY \in C_i \right\}
  &= \int_{C_i} f_\bY(\by)d\by = \beta_i \cdot \Pr\{\bY \in \ST_Y\},
\end{align*}
where the second equality follows from the spherical symmetry of $f_\bY(\by)$.

\medskip

We now arrive at the main step in proving the converse. The key ingredient we require is the well-known isoperimetric inequality on the hypersphere (cf. \cite[Theorem 1.1]{ledoux2011probability}) which states that, among all subsets of the hypersphere with a given surface area, spherical caps have minimum $D$-expansion measured under geodesic distance.  As noted before, the set $B_i \subseteq S_{r_Y}$ is an expansion of the set $\tilde A_i \subseteq S_{r_Y}$ with the arclength (i.e.,  geodesic) distance measure.   Therefore, it follows from the  isoperimetric inequality that
\begin{align}
 | B_i | = \left| \Gamma^{D''}\left(A_i\right) \cap S_{r_Y} \right| &\geq \left| \Gamma^{D''}\left(\CAP_{r_X}(\bu,\theta_i) \right) \cap S_{r_Y} \right|\\
 &=\left| \CAP_{r_Y}(\bu,\theta_i + \theta_{D''}) \right|,
\end{align}
where $\bu$ is an arbitrary point and
\begin{align}
 \theta_i &\triangleq  \Omega^{-1}(\alpha_i)\\
 \theta_{D''} &\triangleq \arccos\left(\frac{\sigma_X^2 + \sigma_Y^2-D''}{2\sigma_X\sigma_Y}\right).
\end{align}
Therefore, we can conclude that if $\bx \in \ST_X$ and $T(\bx) = i$, then
\begin{equation}
    \Pr\{\MAYBE | \bX =\bx\} \geq \Pr\{\bY \in \ST_Y\}\cdot
    \Omega\left(\theta_{D''}+\Omega^{-1}(\alpha_i)\right).
\end{equation}

Now, the average quantity $\Pr\{\MAYBE | \bX \in \ST_X \}$ is bounded as follows
\begin{align}
  \Pr\{\MAYBE | \bX \in \ST_X \}
  &= \sum_{i=1}^{2^{nR}} \Pr\{T(\bX) = i | \bX \in \ST_X\} \Pr\{\MAYBE | T(\bX) = i , \bX \in \ST_X\} \notag\\
  &\geq \sum_{i=1}^{2^{nR}} p_i \cdot \Pr\{\bY \in \ST_Y\}\cdot
    \Omega\left(\theta_{D''}+\Omega^{-1}(\alpha_i)\right)\\
  &\geq \Pr\{\bY \in \ST_Y\}\cdot\sum_{i=1}^{2^{nR}} p_i \cdot
    \Omega\left(\theta_{D''}+\Omega^{-1}(p_i)\right),\label{eqn:sum}
\end{align}
where the last inequality follows since $\alpha_i \geq p_i$ and  the function $\Omega(\theta_{D''} + \Omega^{-1}(\cdot))$ is monotone increasing.

If the scheme at hand were to satisfy $p_i = 2^{-nR}$ for all $i$, then we could simply continue with analyzing $\Omega\left(\theta_{D''}+\Omega^{-1}(2^{-nR})\right)$.
However, in general this might not be the case. We therefore require the following lemma:
\begin{lem}\label{lem:Markov}
    Let $0<\Omega^*<1$ and $0 < c < 1$ be given constants. Define $p^*$ to be the solution to $\Omega(\theta_{D''} + \Omega^{-1}(p))=\Omega^*$. Then if
    \begin{equation}\label{eqn:Sum_pi_Omega}
      \sum_{i=1}^{2^{nR}} p_i \cdot
    \Omega\left(\theta_{D''}+\Omega^{-1}(p_i)\right) \leq c\cdot \Omega^*,
    \end{equation}
    then
    \begin{equation}\label{eqn:Markov}
      R \geq \frac{1}{n}\log \frac{1-c}{p^*}.
    \end{equation}
\end{lem}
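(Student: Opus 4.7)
The plan is to combine a Markov-style partition of the indices with a crude counting argument. The key observation is that the map $p \mapsto \Omega\!\left(\theta_{D''}+\Omega^{-1}(p)\right)$ is monotone increasing in $p$, since both $\Omega(\cdot)$ and $\Omega^{-1}(\cdot)$ are monotone. Hence the definition of $p^*$ via $\Omega(\theta_{D''}+\Omega^{-1}(p^*))=\Omega^*$ splits the indices into a "heavy" class $\cI_H\triangleq\{i:p_i\geq p^*\}$ and a "light" class $\cI_L\triangleq\{i:p_i<p^*\}$ on which $\Omega(\theta_{D''}+\Omega^{-1}(p_i))\geq \Omega^*$ and $<\Omega^*$, respectively.

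First, I would isolate the heavy contribution to the sum in \eqref{eqn:Sum_pi_Omega}:
\begin{align*}
c\cdot\Omega^* \;\geq\; \sum_{i=1}^{2^{nR}} p_i \cdot \Omega\!\left(\theta_{D''}+\Omega^{-1}(p_i)\right)
\;\geq\; \sum_{i\in\cI_H} p_i \cdot \Omega\!\left(\theta_{D''}+\Omega^{-1}(p_i)\right)
\;\geq\; \Omega^* \sum_{i\in\cI_H} p_i,
\end{align*}
so that $\sum_{i\in\cI_H}p_i\leq c$. Since $\{p_i\}$ is a probability distribution, the complementary mass satisfies $\sum_{i\in\cI_L}p_i\geq 1-c$.

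Next I would use a counting bound on the light indices: each $p_i$ with $i\in\cI_L$ is strictly less than $p^*$, and $|\cI_L|\leq 2^{nR}$, so
\begin{align*}
1-c \;\leq\; \sum_{i\in\cI_L} p_i \;<\; p^*\cdot |\cI_L| \;\leq\; p^*\cdot 2^{nR}.
\end{align*}
Rearranging and taking $\log$ yields \eqref{eqn:Markov}.

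The argument is essentially a one-shot Markov inequality dressed up with the monotone reparametrization $p\mapsto\Omega(\theta_{D''}+\Omega^{-1}(p))$; the only thing to be careful about is verifying the monotonicity and the existence/uniqueness of $p^*$ (which follow from the fact that $\Omega$ is a continuous bijection from $[0,\pi]$ onto $[0,1]$ and the implicit requirement $\Omega^*\geq \Omega(\theta_{D''})$ so that $p^*\in(0,1]$ is well-defined). No delicate estimate of $\Omega$ itself is needed at this stage, so I do not anticipate a substantial technical obstacle here; the lemma is then ready to be combined with \eqref{eqn:sum} to produce the desired lower bound on $R$.
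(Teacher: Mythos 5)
Your proposal is correct and follows essentially the same argument as the paper's proof: split the indices at the threshold $p^*$ using monotonicity of $p\mapsto\Omega(\theta_{D''}+\Omega^{-1}(p))$, apply the Markov-style bound to show the heavy indices carry mass at most $c$, and then bound the light mass $\geq 1-c$ by $p^*\cdot 2^{nR}$. The extra remarks on the well-definedness of $p^*$ are a fine sanity check but do not change the substance; the two proofs are the same.
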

\begin{proof} Appendix~\ref{app:Markov}.
\end{proof}

For our purposes\footnote{We shall use Lemma~\ref{lem:Markov} again for proving the identification exponent results, but with a different $\Omega^*$.}
we set $\Omega^*=\frac{1}{2}$ so that $\Omega(\theta_{D''} + \Omega^{-1}(p^*))=\frac{1}{2}.$ Now use \eqref{eqn:OmegaUpper} to upper bound $\Omega(\cdot)$ and evaluate $p^*$:

\begin{align*}
  p^*
  &= \Omega\left(\frac{\pi}{2} - \theta_{D''}\right)\\
  &\leq \frac{1}{\sqrt{2\pi(n-1)}}\cdot\frac{1}{\cos\left(\frac{\pi}{2} - \theta_{D''}\right)} \cdot\sin^{n-1}\left(\frac{\pi}{2} - \theta_{D''}\right)\\
  &\leq \frac{1}{\sqrt{2\pi(n-1)}} \cdot\cos^{n-1}\left(\theta_{D''}\right).
\end{align*}
Recalling the definition of $\theta_{D''}$, we have
\begin{align*}
  \cos\left(\theta_{D''}\right) &=\frac{\sigma_X^2 + \sigma_Y^2-D''}{2\sigma_X\sigma_Y},
\end{align*}
therefore
\begin{equation}\label{eqn:Rofpstar}
  \frac{1}{n}\log\frac{1}{p^*} = \log\frac{2\sigma_X\sigma_Y}{\sigma_X^2 + \sigma_Y^2-D''} + O\left(\frac{\log n}{n}\right).
\end{equation}

Our goal, now, is to show that the rate $R$ must be \emph{lower} bounded by the identification rate from \eqref{eqn:RIDdiffVar}.  Recalling \eqref{UBpmaybe_fourth}, it follows that
\begin{align}
  \frac{1}{4}
  &\geq \Pr\{\MAYBE\} \\
  &= \Pr\{\bX \in \ST_X\} \cdot \Pr\{\MAYBE | \bX \in \ST_X\} \\
  &\quad\quad+\Pr\{\bX \notin \ST_X\}\cdot \Pr\{\MAYBE | \bX \notin \ST_X\}\\
  &\geq \Pr\{\bX \in \ST_X\} \cdot \Pr\{\MAYBE | \bX \in \ST_X\}\\
  &\geq \Pr\{\bX \in \ST_X\} \cdot \Pr\{\bY \in \ST_Y\}\cdot\sum_{i=1}^{2^{nR}} p_i \cdot
    \Omega\left(\theta_{D''}+\Omega^{-1}(p_i)\right),
\end{align}
where the final inequality is simply \eqref{eqn:sum}.

Since $\Pr\{\bY \in \ST_Y\}$ and $\Pr\{\bX \in \ST_X\}$ both approach $1$ as $n$ grows, we may assume that both probabilities are above $\frac{3}{4}$ (for large enough $n$).
Then, we can now invoke Lemma~\ref{lem:Markov} with  $c=8/9$ and $\Omega^*=1/2$, combined with \eqref{eqn:Rofpstar}, to conclude that
\begin{equation}
  R \geq \log\frac{2\sigma_X\sigma_Y}{\sigma_X^2 + \sigma_Y^2-D''}
  + O\left(\frac{\log n}{n}\right).
\end{equation}
As $\eta$ can be taken to be arbitrarily small, $D''$ can be arbitrarily close to $D$, completing the proof of the converse.
\end{proof}

\subsection{Identification Exponent}

As with Theorem \ref{thm:RIDdiffVar}, the proof of Theorem \ref{thm:EIDdiffVar} is rather involved, so we first sketch the main ideas before moving on to the formal proof. Characterizing the optimal exponent requires a slightly more sophisticated scheme than characterizing the identification rate, but the proofs are very similar in spirit.

The achievability proof builds upon that of Theorem \ref{thm:RIDdiffVar} in the sense that we refine the signature assignment to quantize $\bx/\|\bx\|$ and $\|\bx\|$ separately.  Intuitively, we can think of our scheme as quantizing the \emph{direction} and \emph{amplitude} of the vector $\bx$ (similarly to `shape-gain' quantizers \cite[Ch. 12]{GershoGray}).  Similar to the achievability proof of Theorem \ref{thm:RIDdiffVar}, the set of vectors $\bx/\|\bx\|$ are quantized by covering the unit sphere with regions roughly described by caps. It will turn out that the achievable identification exponent emerges through the analysis of quantizing the amplitudes $\bx$.

For the converse proof, we take the $\rho^*_X, \rho^*_Y$ to minimize \eqref{eqn:EIDdiffVar}, and focus on the case where $\bX$ lies in a spherical shell with radius $\sqrt{n \rho_X^*\sigma^2_X}$ and small,  nonzero thickness. Then, the converse proceeds similar to that of Theorem \ref{thm:RIDdiffVar}, in the sense that the ``typical shell" is replaced by the new shell that depends on $\rho^*_X$.

\begin{proof}[Proof of Theorem~\ref{thm:EIDdiffVar}]\\
\textbf{Direct Part:} We will rely on the code construction given in the achievability proof of Theorem~\ref{thm:RIDdiffVar}, and hence we adopt the notation previously defined there.  To this end, let $(T,g)$ be the rate-$R$, $D$-admissible identification system defined in the achievability proof of Theorem~\ref{thm:RIDdiffVar}.  Recall that
\begin{align}
\angle\left( T\left( r_X \frac{\bx}{\|\bx\|} \right), \frac{\bx}{\|\bx\|} \right) \leq \theta_0,
\end{align}
where $\theta_0$ was defined as \eqref{eqn:def_ThetaZero}.

In a variation on the scheme used previously, we describe the amplitude $\|\bx\|$ by quantization as follows.
Let $\sigma^2_{\max}(n) \triangleq n\cdot \sigma_X^2$, and recall that $\eta$ was chosen to be a small positive constant.
Define the spherical shells $S^{(i)}$ as follows:
\begin{equation}
  S^{(i)} \triangleq S_{r^{(i)},r^{(i+1)}},
\end{equation}
where $r^{(i)} \triangleq \sqrt{n \cdot i\cdot \eta}$.

The modified signature assignment $T'$ then describes the ``direction" and ``amplitude" of $\bx$ as follows:
\begin{itemize}
\item If $\frac{1}{n}\|\bx\|^2 \leq \sigma_{\max}^2(n)$, then $T'(\bx) = \left(T\left( r_X \frac{\bx}{\|\bx\|} \right),i\right)$, where $i$ is chosen to satisfy $\bx \in S^{(i)}$.
\item If $\frac{1}{n}\|\bx\|^2 > \sigma_{\max}^2(n)$, then the signature $T'(\bx)$ is defined to be the erasure symbol $\ERASURE$.
\end{itemize}

The overall rate of the modified signature assignment $T'$ described above is $R$ (i.e., the rate of $T(\cdot)$), plus an additional $\frac{1}{n}\log \frac{\sigma^2_{\max}(n)}{\eta} = O(\frac{\log n}{n})$ (required for the quantization of $\|\bx\|$), and therefore remains essentially unchanged.  Therefore, the upper bound \eqref{upperBoundRThm1} also upper bounds the rate of the modified signature assignment function.  Let $g'$ be the optimal query function corresponding to $T'$ (defined in an analogous manner to \eqref{eqn:opt_decoder}).

Thus, we only need to analyze the  exponent attained by the proposed scheme.  To this end, let  $Z$ be a Chi-square random variable with $n$ degrees of freedom. The pdf of $Z$ is given by
\begin{equation}\label{eqn:Zpdf}
  f_Z(z) = \frac{z^{\frac{n}{2}-1}e^{-\frac{z}{2}}}{2^{n/2}\Gamma(\tfrac{n}{2})},
\end{equation}
where $\Gamma$ in \eqref{eqn:Zpdf} is the usual Gamma function, and should not be confused with the set-expansion operator $\Gamma^D$ defined previously.  Now, define  the random variables $Z_X \triangleq \frac{1}{\sigma_X^2}\|\bX\|^2$ and $Z_Y \triangleq \frac{1}{\sigma_Y^2}\|\bY\|^2$. Note that both $Z_X$ and $Z_Y$ are distributed according to \eqref{eqn:Zpdf}.  In order to proceed, we require the following lemma.

\begin{lem}\label{lem:supex}
    The probability $\Pr\left\{\frac{1}{n}\|\bX\|^2 > \sigma^2_{\max}(n)\right\}$ vanishes super-exponentially with $n$.
\end{lem}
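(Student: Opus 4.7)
The plan is to estimate $\Pr\{Z_X > n^2\}$ directly via a Chernoff-type tail bound, exploiting the fact that the threshold $n^2$ is much larger than $\mathbb{E}[Z_X]=n$. The quantity $\bE_Z(\rho)$ introduced in \eqref{eqn:EZ} already anticipates the correct large-deviations rate for the chi-squared distribution, so the only novelty here is to track how this rate behaves when the deviation parameter $\rho$ is itself allowed to grow with $n$.

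First I would write down the moment generating function $\mathbb{E}[e^{tZ_X}] = (1-2t)^{-n/2}$ valid for $t<1/2$, and apply Markov's inequality to obtain
\begin{align}
\Pr\{Z_X > n^2\} \leq \inf_{0<t<1/2} (1-2t)^{-n/2} e^{-t n^2}.
\end{align}
Optimizing in $t$ (the optimal choice is $t^*=1/2 - 1/(2n)$) gives, after a routine simplification,
\begin{align}
-\frac{1}{n}\ln \Pr\{Z_X > n^2\} \geq \frac{1}{2}\bigl(n - 1 - \ln n\bigr) = n\cdot \bE_Z(n)\cdot \ln 2 / n,
\end{align}
i.e.\ the per-symbol exponent grows like $n/2$. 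Equivalently, $\Pr\{Z_X>n^2\} \leq \exp\!\bigl(-\tfrac{n^2-n}{2}+\tfrac{n}{2}\ln n\bigr)$, which decays faster than any exponential in $n$.

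A cleaner way to present the same calculation is to note that $Z_X/n$ satisfies a large deviations principle with rate function $\frac{1}{2}(\rho-1-\ln \rho)$ (essentially $\bE_Z$ up to the base of the logarithm), and to then substitute $\rho=n\to\infty$; one sees immediately that the rate function goes to infinity, which is exactly the definition of super-exponential decay. Either route works; I prefer the explicit MGF computation because it gives a non-asymptotic bound that can be cited without invoking a separate large deviations theorem. I do not expect any real obstacle here: the only mild subtlety is to keep the polynomial factor $n^{n/2}$ from masking the dominant $e^{-n^2/2}$ term, which is handled simply by taking logarithms and dividing by $n$.
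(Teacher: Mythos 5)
Your proof is correct and uses essentially the same approach as the paper: a Chernoff bound on $\|\bX\|^2$ via the chi-squared moment generating function $(1-2t)^{-n/2}$. The only cosmetic difference is that you optimize the Chernoff parameter $t$ (obtaining the exponent $-n^2/2 + O(n\log n)$), whereas the paper simply fixes $t=\tfrac{1}{4\sigma_X^2}$ and settles for the weaker but equally sufficient exponent $-n^2/4 + o(n^2)$.
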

\begin{proof} Appendix~\ref{app:supex}.\end{proof}

\bigskip

Now, we are in a position to analyze $\Pr\left\{\MAYBE\right\}$, where we again employ the shorthand notation $ \Pr\{\MAYBE\} \triangleq \Pr \{ g'(T'(\bX),\bY) = \MAYBE \}$ to simplify the presentation.
\begin{align}
  \Pr\{\MAYBE\} & \leq \Pr\left\{\MAYBE , \frac{1}{n}\|\bX\|^2 \leq \sigma^2_{\max}(n),\frac{1}{n}\|\bY\|^2 \leq \sigma^2_{\max}(n)\right\}\nonumber \\
&+ \Pr\left\{\frac{1}{n}\|\bX\|^2 > \sigma^2_{\max}(n)\right\} + \Pr\left\{\frac{1}{n}\|\bY\|^2 > \sigma^2_{\max}(n)\right\}.
\end{align}

By Lemma~\ref{lem:supex}, and a similar argument for $\Pr\left\{\frac{1}{n}\|\bY\|^2 > \sigma^2_{\max}(n)\right\}$, the last two terms of the above expression vanish super-exponentially and do not affect the exponent of $\Pr\{\MAYBE\}$. We therefore concentrate on the first term.

We can now write
\begin{align}
& \Pr\left\{\MAYBE , \tfrac{1}{n}\|\bX\|^2 \leq
\sigma^2_{\max}(n),\tfrac{1}{n}\|\bY\|^2 \leq \sigma^2_{\max}(n)\right\}\\
&= \Pr\left\{\MAYBE , Z_X \leq
n^2,Z_Y \leq \tfrac{\sigma_Y^2}{\sigma_X^2}n^2\right\}\\
&= \int_0^{n^2} \int_0^{\tfrac{\sigma_Y^2}{\sigma_X^2}n^2} \Pr\left\{\MAYBE \mid Z_X =
z_X,Z_Y = z_Y\right\} f_Z(z_X) f_Z(z_Y) dz_Ydz_X\\
&\leq \tfrac{\sigma_Y^2}{\sigma_X^2} n^4 \max_{
\underset{0 \leq z_Y \leq n^2 \sigma_Y^2/\sigma_X^2 }{0 \leq z_X \leq n^2}} \Pr\left\{\MAYBE \mid Z_X =
z_X,Z_Y = z_Y\right\} f_Z(z_X) f_Z(z_Y)\\
&\leq \tfrac{\sigma_Y^2}{\sigma_X^2} n^4 \max_{0 \leq \rho_X,\rho_Y } \Pr\left\{\MAYBE \mid Z_X =
n \rho_X,Z_Y = n \rho_Y\right\} f_Z(n \rho_X) f_Z(n \rho_Y)
\label{eqn:exponentUpperBound},
\end{align}
where $\rho_X \triangleq z_X/n$ and $\rho_Y \triangleq z_Y/n$.

\bigskip

The event $\{\MAYBE\}$ coincides with the event $\{\bY \in \Gamma^D(T'^{-1}(T'(\bX)))\}$. Let $\bU = T(r_X \bX / \|\bX\|)$, and observe that if $\tfrac{1}{n}\|\bX\|^2 \leq
\sigma^2_{\max}(n)$, then
\begin{align}
  \Gamma^D(T'^{-1}(T'(\bX))) &\subseteq  \Gamma^D(\CAP_{r^{(i)},r^{(i+1)}}(\bU,\theta_0))\label{inclusionA} \\
   &\subseteq \Gamma^{D'}(\CAP_{\|\bX\|}(\bU,\theta_0))),\label{inclusionB}
\end{align}
where \eqref{inclusionA} follows from similar arguments leading to \eqref{eqn:cap_contained}, and \eqref{inclusionB} follows with $D' \triangleq (\sqrt{D} + \sqrt{\eta})^2$.
We therefore continue with
\begin{align}
  &\Pr\left\{\MAYBE \mid Z_X = n \rho_X,Z_Y = n \rho_Y\right\} \nonumber\\
  & \leq \Pr\left\{\bY \in \Gamma^{D'}(\CAP_{\|\bX\|}(\bU,\theta_0)))\mid \frac{1}{n}\|\bX\|^2 = \rho_X \sigma_X^2 ,\frac{1}{n}\|\bY\|^2 = \rho_Y \sigma_Y^2\right\}\nonumber\\
  & = \Pr\left\{\bY \in \Gamma^{D'}(\CAP_{\sqrt{n \rho_X\sigma_X^2}}(\bU,\theta_0)))\mid \frac{1}{n}\|\bY\|^2 = \rho_Y \sigma_Y^2\right\} \label{bySphereSymm}\\
  & = \left\{
  \begin{array}{ll}
    0                            & \hbox{if $|\sqrt{\rho_X\sigma_X^2}-\sqrt{\rho_Y\sigma_Y^2}|\geq \sqrt{D'}$} \\
    1                            & \hbox{if $\rho_X\sigma_X^2+\rho_Y\sigma_Y^2 \leq D'$} \\
     \Omega(\theta_0 + \theta_1')  & \hbox{otherwise.}\label{eqn:OmegaPreAsym}
  \end{array}
\right.
\end{align}
where \eqref{bySphereSymm} follows by spherical symmetry of the Gaussian distribution, and
\begin{equation}\label{eqn:theta1defExponents1}
  \theta_1' \triangleq \arccos \frac{\rho_x\sigma_X^2 + \rho_Y\sigma_Y^2 - D'}{2\sqrt{\rho_X\sigma_X^2\cdot\rho_Y\sigma_Y^2}}.
\end{equation}
The identity \eqref{eqn:theta1defExponents1} follows from the law of cosines. %
The geometric image now is similar to that depicted in Fig.~\ref{fig:LawOfCosines_rXrY},
where here $r_X \triangleq \sqrt{n\sigma^2\rho_X}$ and $r_Y \triangleq \sqrt{n\sigma^2\rho_Y}$ denote
the actual radii of the vectors $\bX$ and $\bY$ (as opposed to their average value in the proof of Theorem~\ref{thm:RIDdiffVar}).

Next, using the bound \eqref{eqn:OmegaUpper} we have
\begin{equation}\label{logOmegaEqn}
  \frac{1}{n}\log \frac{1}{\Omega(\theta)} \geq
\left\{
  \begin{array}{ll}
    -\log\sin\theta + \tfrac{c}{n}\log n, & \hbox{$ 0 < \theta < \arccos(1/\sqrt{n})$;} \\
    0, & \hbox{otherwise.}
  \end{array}
\right.
\end{equation}
where $c$ is a universal constant.

Combined with \eqref{logOmegaEqn}, we compactly write the exponent corresponding to expression \eqref{eqn:OmegaPreAsym} as
\begin{align}
  &\bE_\Omega(\theta_0,D',\sigma_X^2,\sigma_Y^2,\rho_X,\rho_Y) \triangleq\nonumber\\
 &= \left\{
  \begin{array}{ll}
    \infty,                            & \hbox{if $|\sqrt{\rho_X\sigma_X^2}-\sqrt{\rho_Y\sigma_Y^2}|\geq \sqrt{D'}$} \\
    0,                            & \hbox{if $\rho_X\sigma_X^2+\rho_Y\sigma_Y^2 \leq D'$} \\
    -\log\sin\min\left[\tfrac{\pi}{2},\theta_0 + \theta_1'\right],& \hbox{otherwise.}
  \end{array}
\right.\label{eqn:OmegaAsym}
\end{align}
with $\theta_1'$ given in  \eqref{eqn:theta1defExponents1}.

Before we plug the above result into \eqref{eqn:exponentUpperBound}, we note that by Stirling's approximation we may write, for any fixed $\rho>0$:
\begin{align}
  f_Z(n\rho) &= \frac{1}{n\rho} \left(\frac{n\rho}{2}\right)^{n/2} \exp(-n \rho /2) \frac{1}{\Gamma(n/2)}\nonumber\\
&= \frac{1}{n\rho} \left(\frac{n\rho}{2}\right)^{n/2} \exp(-n \rho /2) \frac{1}{\sqrt\frac{4\pi}{n}\left(\frac{n}{2e}\right)^{n/2}} \left(1+O\left(\tfrac{1}{n}\right)\right)\nonumber\\
&= \exp\left[-n \left(\frac{\rho}{2} - \frac{1}{2} -\frac{1}{2}\log \rho \right)\right]
\frac{1}{\rho\sqrt{4\pi n}}  \left(1+O\left(\tfrac{1}{n}\right)\right)\nonumber\\
&\leq 2^{-n \bE_Z(\rho)} \cdot n^c,\label{eqn:ZpdfAsym}
\end{align}
where $\bE_Z(\cdot)$ was defined in \eqref{eqn:EZ} and $c$ is a universal constant.

Finally, we plug \eqref{eqn:OmegaAsym} and \eqref{eqn:ZpdfAsym} into the upper bound \eqref{eqn:exponentUpperBound}
on the (conditional) probability for $\MAYBE$ and conclude that the following exponent is achievable:

\begin{equation}
  \min_{\rho_X,\rho_Y\geq 0} \bE_Z(\rho_X)+\bE_Z(\rho_Y)
+\bE_\Omega(\theta_0,D',\sigma_X^2,\sigma_Y^2,\rho_X,\rho_Y).
\end{equation}

Since $\eta$ is arbitrarily small we may replace $D'$ with $D$ in the above.
We may therefore rewrite the achievable exponent as
\begin{equation}\label{eqn:EOmega}
  \min_{\rho_X,\rho_Y \geq 0} \bE_Z(\rho_X)+\bE_Z(\rho_Y) +
\bE_\Omega\left(\arcsin(2^{-R}),D,\sigma^2,\rho_X,\rho_Y\right).
\end{equation}

In order to simplify matters further, note that in \eqref{eqn:EOmega}, the minimizing $(\rho_X,\rho_Y)$ must satisfy:
\begin{align}
  \left|\sqrt{\rho_X\sigma_X^2}-\sqrt{\rho_Y\sigma_Y^2}\right| &< \sqrt{D}\label{eqn:optRegionCond1}\\
  \rho_X\sigma_X^2+\rho_Y\sigma_Y^2 &\geq D.\label{eqn:optRegionCond2}
\end{align}
The condition \eqref{eqn:optRegionCond1} must hold because otherwise the term $\bE_\Omega$ is infinite [see \eqref{eqn:OmegaAsym}].

To prove that \eqref{eqn:optRegionCond2} must hold, assume, for contradiction, that \eqref{eqn:EOmega} is minimized for $(\rho_X^*,\rho_Y^*)$ that satisfy
\begin{equation}
  \rho_X^*\sigma_X^2+\rho_Y^*\sigma_Y^2 < D.
\end{equation}
In this case, the value of \eqref{eqn:EOmega} at the minimizing point is $\bE_Z(\rho^*_X)+\bE_Z(\rho^*_Y)$.
If, say $\rho_X^*>1$, then we may replace it with another value $0<\rho_X^{**}<1$ that satisfies $\bE_Z(\rho_X^{**}) = \bE_Z(\rho_X^{*})$ that is guaranteed to exist (see the definition of $\bE_Z(\cdot)$).
The same argument holds for $\rho_Y^*$, and therefore we may assume that in this case both $\rho_X^*,\rho_Y^* \in (0,1]$.
Next, since $\bE_Z(\rho)$ is monotone decreasing for $\rho \in (0,1)$, we may increase $\rho_X^*$ and $\rho_Y^*$, while still in $(0,1]^2$,
until \eqref{eqn:optRegionCond2} is met with an equality.
Since the value of the objective function decreases, we arrive at a contradiction, meaning that \eqref{eqn:optRegionCond2} must hold for any minimizing $\rho_X,\rho_Y$.

Therefore the achievable exponent can be simplified to the expression \eqref{eqn:EIDdiffVar} and the proof of the direct part is concluded.

\bigskip

\textbf{Converse Part:}
Let $\rho_X^*,\rho_Y^*$ denote the minimizers of \eqref{eqn:EIDdiffVar} (in light of the discussion above, we can assume without loss of generality that $\rho_X^*,\rho_Y^*$ satisfy \eqref{eqn:rhoConditions}).
The proof of the converse  proceeds by focusing on values of $\bX$ and $\bY$ that satisfy
$\frac{1}{n}\|\bX\|^2 \cong \rho_X^* \sigma_X^2$ and $\frac{1}{n}\|\bY\|^2 \cong \rho_Y^* \sigma_Y^2$.
The details are as follows:

Let $0<\eta<\min(\rho_X^*,\rho_Y^*)$ be a small but fixed value. Define the following spherical caps:
\begin{equation}
  S_X^* \triangleq S_{r_X^-,r_X^+},\quad S_Y^* \triangleq S_{r_Y^-,r_Y^+},
\end{equation}
where $r_X^\pm \triangleq \sqrt{n \sigma_X^2(\rho_X^* \pm \eta)}$ and $r_Y^\pm \triangleq \sqrt{n \sigma_Y^2(\rho_Y^* \pm \eta)}$.

We then write the following:
\begin{align}
  \Pr\{\MAYBE\}
  & \geq \Pr\left\{\MAYBE ,\bX \in S_X^*, \bY \in S_Y^*\right\}\nonumber \\
  & = \Pr\left\{\MAYBE | \bX \in S_X^*, \bY \in S_Y^*\right\} \cdot \Pr \{\bX \in S_X^*\}\cdot \Pr \{\bY \in S_Y^*\}.\label{eqn:PrMaybeLBexp}
\end{align}

Consider the term $\Pr\{\bX \in S_X^*\}$:
\begin{align}
  \Pr\left\{\bX \in S_X^* \right\}
& = \Pr\left\{\frac{1}{n \sigma_X^2}\|\bX\|^2 \in (\rho_X^*-\eta,\rho_X^*+\eta)\right\}  \\
& = \int_{n(\rho_X^*-\eta)}^{n(\rho_X^*+\eta)} f_Z(z)dz \\
& \geq 2n\eta \min_{z \in [n(\rho_X^*-\eta),n(\rho_X^*+\eta)]} f_Z(z)\\
& \geq 2n\eta \cdot n^c \cdot 2^{-n \max_{\rho_X \in [\rho_X^*-\eta,\rho_X^*+\eta]} \bE_Z(\rho_X )},
\label{eqn:PrXinSXstar}
\end{align}
where \eqref{eqn:PrXinSXstar} follows from Stirling's approximation  similar to \eqref{eqn:ZpdfAsym}.  A similar derivation applies for $\Pr\{\bY \in S_Y^*\}$. Thus, it follows from \eqref{eqn:PrMaybeLBexp} and continuity of $\bE_Z(\cdot)$ that
\begin{align}
&-\frac{1}{n}\log   \Pr\{\MAYBE\} \notag\\
&\leq -\frac{1}{n}\log\left[ \Pr\left\{\MAYBE | \bX \in S_X^*, \bY \in S_Y^*\right\} \right]+ \bE_Z(\rho_X^* ) + \bE_Z(\rho_Y^* ) + \eta'+ O\left(\frac{\log n}{n} \right), \label{exponentUpperBound}
\end{align}
where $\eta'$ is a quantity tending to zero as $\eta\ra0$.

We now concentrate on the term $\Pr\left\{\MAYBE | \bX \in S_X^*, \bY \in S_Y^*\right\}$, and proceed in a manner similar to the converse proof of Theorem~\ref{thm:RIDdiffVar}.  To this end, let $T:\Reals^n\ra\{1,...,2^{nR}\}$ denote the signature assignment for the scheme at hand.
Define the mapping $\tilde T:S_X^* \ra \{1,...,2^{nR}\}$ as $\tilde T(\bx) = T(\bx)$ for all $\bx\in S_X^*$.  That is,  $\tilde T(\cdot)$ is the restriction of $T(\cdot)$  to  $S_X^*$. Let $\tilde T^{-1}(\cdot)$ denote the inverse mapping of $\tilde T(\cdot)$:
\begin{align}
  \tilde T^{-1}(i)
  &\triangleq \{\bx \in S_X^* : T(\bx) = i\} \\
  &= T^{-1}(i) \cap S_X^*.
\end{align}

Let $p_i \triangleq \Pr\{\bX \in \tilde T^{-1}(i) | \bX \in S_X^*\}$, so $\sum_{i=1}^{2^{nR}}p_i = 1$.
Define $r_X \triangleq \sqrt{n \sigma_X^2\rho_X^*}$, and let the set $A_i \subseteq S_{r_X}$ denote the projection of $\tilde T^{-1}(i)$ onto the sphere $S_{r_X}$.  In other words,
\begin{equation}
  A_i = \left\{r_X\frac{\bx}{\|\bx\|}  : \bx \in \tilde T^{-1}(i)  \right\}.
\end{equation}

Let $\alpha_i$ denote the fraction of the surface area of $S_{r_X}$ that is occupied by $A_i$.
By the spherical symmetry of the distribution of $\bX$, $\alpha_i$ is also equal to the probability that the projection of $\bX$ onto $S_{r_X}$ lies in $A_i$.
Therefore $\alpha_i \geq p_i$, with equality if and only if $\tilde T^{-1}(i)$ is a thick cap with inner and outer radii $r_X^-$ and $r_X^+$ respectively.

Let $D' \triangleq (\sqrt{D} +\sqrt{\sigma_X^2(\rho_X^*-\eta)} -  \sqrt{\sigma_X^2\rho_X^*})^2$.
As in \eqref{eqn:AiProperty} we have that
\begin{equation}\label{eqn:AiPropertyExp}
  \Gamma^{D'}\left(A_i\right) \subseteq \Gamma^D\left(\tilde T^{-1}(i)\right).
\end{equation}

Now let $D'' \triangleq (\sqrt{D'} +\sqrt{\sigma_Y^2(\rho_Y^*-\eta)} -  \sqrt{\sigma_Y^2\rho_Y^*})^2$,
and let the set $B_i\subseteq S_{r_Y}$ denote the $D''$-expansion of $A_i$, restricted to the sphere $S_{r_Y}$, where $r_Y \triangleq \sqrt{n \sigma_Y^2\rho_Y^*}$, i.e.
\begin{equation}
  B_i \triangleq \Gamma^{D''}\left(A_i\right) \cap S_{r_Y}.
\end{equation}
Let $\beta_i$ denote the fraction of $S_{r_Y}$ that is occupied by $B_i$. Let the set $C_i$ denote the $r_Y^-,r_Y^+$ thickening of $B_i$ as follows:
\begin{equation}
  C_i = \left\{\by\in S_Y^* :  r_Y \frac{\by}{\|\by\|} \in B_i  \right\}.
\end{equation}
As in \eqref{eqn:CiProperty} we have that
\begin{equation}\label{eqn:CiPropertyExp}
  C_i \subseteq \Gamma^{D'}\left(A_i\right).
\end{equation}

\bigskip

Suppose that $\bX = \bx \in S_X^*$ and that $T(\bx) = i$. Then we have, with the aid of \eqref{eqn:AiPropertyExp} and \eqref{eqn:CiPropertyExp}:
\begin{align}
  \Pr\{\MAYBE | \bX =\bx \in S_X^*,\bY \in S_Y^*\}
  &\geq \Pr\left\{\bY \in  \Gamma^D\left(T^{-1}(i)\right)|\bY \in S_Y^*\right\} \\
  &\geq \Pr\left\{\bY \in  \Gamma^D\left(\tilde T^{-1}(i)\right)|\bY \in S_Y^*\right\} \\
  &\geq \Pr\left\{\bY \in  \Gamma^{D'}\left(A_i\right)|\bY \in S_Y^*\right\} \\
  &\geq \Pr\left\{\bY \in C_i |\bY \in S_Y^*\right\}\\
  &=\beta_i,
\end{align}
where the last equality follows from the spherical symmetry of the pdf of $\bY$.

As in the proof of the converse of Theorem~\ref{thm:RIDdiffVar},
we apply the isoperimetric inequality on the sphere for the sets $A_i$ and $B_i$.
We get that the set $A_i^*$ that minimizes $\beta_i$ for given $\alpha_i$ is the set $\CAP_{r_X}(\bu,\theta_i)$,
where $\bu$ is an arbitrary point, $\theta_i \triangleq  \Omega^{-1}(\alpha_i)$, and $B_i^*$ is the set $\CAP_{r_Y}(\bu,\theta_i')$, defined by
\begin{align}
  \theta_i'
  &\triangleq \theta_i + \theta_{D''}
\end{align}
where
\begin{equation}\label{eqn:theta1defExponents}
  \theta_{D''} \triangleq \arccos \frac{\rho_X^*\sigma_X^2 + \rho_Y^*\sigma_Y^2 - D''}{2\sqrt{\rho_X^*\sigma_X^2\cdot\rho_Y^*\sigma_Y^2}}.
\end{equation}

Therefore the (normalized) surface area of $B_i^*$ is given by $\beta_i^* = \Omega(\theta_i')$. It follows that
\begin{equation}
  \Pr\{\MAYBE | \bX =\bx \in S_X^*,\bY \in S_Y^*\} \geq \Omega(\theta_{D''} + \Omega^{-1}(p_i)),
\end{equation}
and the average (conditional) probability $\Pr\{\MAYBE | \bX \in S_X^*,\bY \in S_Y^* \}$ is bounded by
\begin{align}
  &\Pr\{\MAYBE | \bX \in S_X^*,\bY \in S_Y^* \}\\
  &= \sum_{i=1}^{2^{nR}} \Pr\{T(\bX) = i | \bX \in S_X^*\} \Pr\{\MAYBE | T(\bX) = i , \bX \in S_X^*,\bY \in S_Y^* \}\\
  &\geq \sum_{i=1}^{2^{nR}} p_i \cdot
    \Omega\left(\theta_{D''}+\Omega^{-1}(p_i)\right).
\end{align}

Now, let $0 < c < 1$, and invoke Lemma~\ref{lem:Markov} %
to conclude that
\begin{equation}\label{eqn:pstar_for_exp}
  R \geq \frac{1}{n}\log \frac{1-c}{p^*},
\end{equation}
where $p^*$ is the solution to
\begin{equation}\label{eqn:pstar_and_E}
  \Omega(\theta_{D''} + \Omega^{-1}(p))= c^{-1}  \Pr\{\MAYBE | \bX \in S_X^*,\bY \in S_Y^*\}.
\end{equation}
Since $\Omega(\cdot)$ is monotone increasing, so is $\Omega^{-1}(\cdot)$.  Therefore, \eqref{eqn:pstar_for_exp} and \eqref{eqn:pstar_and_E} imply the inequality
\begin{align}
 \Pr\{\MAYBE | \bX \in S_X^*,\bY \in S_Y^*\} \geq c\cdot \Omega\left(\theta_{D''}+  \Omega^{-1}\left((1-c)2^{-nR}\right)\right). \label{pMaybLBc}
\end{align}
It is a straightforward exercise to verify (e.g., by Taylor series expansion) that
\begin{align}
\Omega^{-1}\left((1-c)2^{-nR}\right) = \arcsin\left(2^{-R}\right) + O\left(\tfrac{\log n}{n}\right).
\end{align}
If $\theta_{D''} + \arcsin\left(2^{-R}\right) \geq \pi/2$, then \eqref{pMaybLBc} and the definition of $\Omega(\cdot)$ yield
\begin{align}
 \Pr\{\MAYBE | \bX \in S_X^*,\bY \in S_Y^*\} \geq c/2,
\end{align}
which, combined with  \eqref{exponentUpperBound}, yields the desired  upper bound
\begin{align}
-\frac{1}{n}\log   \Pr\{\MAYBE\} \leq -\log \sin \left(\frac{\pi}{2}\right) + \bE_Z(\rho_X^* ) + \bE_Z(\rho_Y^* ) + \eta'+ O\left(\frac{\log n}{n} \right).\label{ub1Exp}
\end{align}
On the other hand, if $\theta_{D''} + \arcsin\left(2^{-R}\right) < \pi/2$, then the hypothesis of Lemma \ref{lem:Omega} is satisfied for $n$ sufficiently large, and the estimate \eqref{eqn:OmegaLower} gives
\begin{align}
&-\frac{1}{n}\log \Pr\{\MAYBE | \bX \in S_X^*,\bY \in S_Y^*\}
&\leq -\log \sin\left( \theta_{D''} + \Omega^{-1}\left((1-c)2^{-nR}\right)\right) + O\left(\frac{\log n }{n}\right). \label{ub2Exp}
\end{align}

By letting $\eta$ be arbitrarily small we can infer from \eqref{ub1Exp} and \eqref{ub2Exp} that any sequence of identification schemes $\{g^{(n)},T^{(n)}\}_{n\ra \infty}$ must satisfy
\begin{align*}
&\limsup_{n \ra \infty} -\frac{1}{n}\log  \Pr\{  g^{(n)}(T^{(n)}(\bX),\bY) =   \MAYBE\}  \\
 &\leq  \bE_Z(\rho_X^*) + \bE_Z(\rho_Y^*) -\log\sin\min\left[\tfrac{\pi}{2},\arcsin\left(2^{-R}\right)+
\arccos \frac{\rho_X^*\sigma_X^2 + \rho_Y^*\sigma_Y^2 - D}{2\sqrt{\rho_X^*\sigma_X^2\cdot\rho_Y^*\sigma_Y^2}}
\right],
\end{align*}
as desired.
\end{proof}

\begin{proof}[Proof of Corollary \ref{cor:EID}]
Let $\rho_X,\rho_Y$ satisfy \eqref{eqn:rhoConditions}.
We claim that  the quantity
\begin{align}
 \bE_Z(\rho_X)+\bE_Z(\rho_Y) +\wp(R,D,{\rho_X\sigma^2},{\rho_Y\sigma^2}) \label{eqn:EIDdiffVarCorProof}
\end{align}
can not increase if $\rho_X$ and $\rho_Y$ are both replaced by their average $\overline{\rho} := (\rho_X+\rho_Y)/2$, which continues to satisfy \eqref{eqn:rhoConditions}. To see that this is indeed the case, note that $\bE_Z(\cdot)$ is convex, and therefore Jensen's inequality implies
\begin{align}
\bE_Z(\rho_X)+\bE_Z(\rho_Y) \geq 2 \bE_Z(\overline{\rho}).
\end{align}
Next, the inequality of arithmetic and geometric means implies
\begin{align}
\frac{\rho_X \sigma^2 + \rho_Y \sigma_Y^2 - D}{2\sigma^2\sqrt{\rho_X \rho_Y}} \geq \frac{2 \overline{\rho} \sigma^2 - D}{2\overline{\rho} \sigma^2},
\end{align}
and therefore, since $\arccos(x)$ is monotone decreasing on $x\in[0,1]$,
\begin{align}
\arccos \frac{\rho_X \sigma^2 + \rho_Y \sigma_Y^2 - D}{2\sigma^2\sqrt{\rho_X \rho_Y}} \leq \arccos \frac{2 \overline{\rho} \sigma^2 - D}{2\overline{\rho} \sigma^2}. \label{eqn:arccosIneq}
\end{align}
Since $-\log \sin (x)$ is decreasing on $x\in[0,\pi/2]$, \eqref{eqn:arccosIneq} implies
\begin{align}
\wp(R,D,{\rho_X\sigma^2},{\rho_Y\sigma^2}) \geq \wp(R,D,{\overline{\rho}\sigma^2},{\overline{\rho}\sigma^2}),
\end{align}
which proves that \eqref{eqn:EIDdiffVarCorProof} can not increase if $\rho_X$ and $\rho_Y$ are both replaced by their average $\overline{\rho}$.  The observation that
\begin{align}
  2\bE_Z(\rho) +\wp(R,D,{\rho \sigma^2},{\rho \sigma^2})
\end{align}
is monotone increasing for $\rho>1$ completes the proof.
\end{proof}

\subsection{General Sources and the Extremal Property of the Gaussian}
The proof of Theorem \ref{thm:achGeneral} can be accomplished by restricting our attention to the setting where $X$ and $Y$ are discrete random variables.  Therefore, the usual typicality machinery will be useful to us, and we review a few facts before beginning the proof of Theorem \ref{thm:achGeneral}. We should also note that the method of types is used in the proofs in \cite{Ahlswede97}, but the proof here, which is similar in spirit, is significantly simpler and shorter, partially because we are only interested in the achievable rate (and not in the exponent). To this end, let $\mathcal{T}_{\epsilon}^{(n)}$ denote the usual $\epsilon$-typical set (cf. \cite[Chapter 2]{ElGamalYHKim2012}).  That is, we define the empirical pmf of $\mathbf{w}\in\mathcal{W}^n$ as
\begin{align}
\pi(w|\mathbf{w}) = \frac{|i:w_i = w|}{n} \mbox{~~for $w\in\mathcal{W}$,}
\end{align}
and, for $W\sim P_W$, the set of $\epsilon$-typical $n$-sequences is defined by
\begin{align}
\mathcal{T}_{\epsilon}^{(n)}(W) = \left\{\mathbf{w} : |\pi(w|\mathbf{w}) - P_W(w)| \leq \epsilon P_W(w)  \mbox{~for all $w\in\mathcal{W}$}\right\}.
\end{align}
Observe that if $\mathbf{W} \sim \prod_{i=1}^nP_W(w_i)$, then the union of events bound and Hoeffding's inequality imply
\begin{align}
\Pr\left\{ \mathbf{W} \notin \mathcal{T}_{\epsilon}^{(n)}(W) \right\} &\leq
\sum_{w\in \mathcal{W}} \Pr\left\{ |\pi(w|\mathbf{W}) - P_W(w)| > \epsilon P_W(w) \right\} \\
&\leq
\sum_{ \substack{ w\in \mathcal{W}: \\ P_W(w) > 0} } 2 \exp\left( -n\left(\epsilon P_W(w)\right)^2 \right).
\end{align}
Therefore, if $|\mathcal{W}|<\infty$,
\begin{align}
\Pr\left\{ \mathbf{W} \notin \mathcal{T}_{\epsilon}^{(n)}(W) \right\} \leq \exp\left(-n\delta(\epsilon) \right), \label{eqn:probNotTypical}
\end{align}
where $\delta(\epsilon)$ denotes a positive quantity satisfying $\lim_{\epsilon\rightarrow 0}\delta(\epsilon)=0$.

One useful fact is the so-called \emph{Typical Average Lemma} \cite[Section 2.4]{ElGamalYHKim2012}:
\begin{lem}[Typical Average Lemma]\label{lem:TypAvg}
If  $\mathbf{w} \in \mathcal{T}_{\epsilon}^{(n)}(W)$, then
\begin{align*}
(1-\epsilon)\EE [ f(W) ] \leq \frac{1}{n}\sum_{i=1}^ng(w_i) \leq (1+\epsilon)\EE [ f(W) ]
\end{align*}
for any nonnegative function $f(w)$ on $\mathcal{W}$.
\end{lem}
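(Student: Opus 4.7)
The plan is to unfold the empirical average into a sum over symbols weighted by the empirical pmf, and then apply the termwise typicality bound, using non-negativity of $f$ to preserve the inequalities under summation.

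First, I would rewrite the empirical average by grouping sequence positions according to the symbol they contain:
\begin{align}
\frac{1}{n}\sum_{i=1}^n f(w_i) = \sum_{w\in\mathcal{W}} \pi(w|\mathbf{w})\, f(w),
\end{align}
which is simply a bookkeeping identity (the number of indices $i$ with $w_i=w$ equals $n\pi(w|\mathbf{w})$). Simultaneously, $\EE[f(W)] = \sum_{w\in\mathcal{W}} P_W(w)\, f(w)$. Thus the lemma reduces to comparing two expectations of $f$, one with respect to $\pi(\cdot|\mathbf{w})$ and the other with respect to $P_W$.

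Next, I would invoke the definition of $\mathcal{T}_{\epsilon}^{(n)}(W)$ directly: for every $w\in\mathcal{W}$,
\begin{align}
|\pi(w|\mathbf{w}) - P_W(w)| \leq \epsilon\, P_W(w),
\end{align}
which rearranges to the two-sided bound $(1-\epsilon)P_W(w) \leq \pi(w|\mathbf{w}) \leq (1+\epsilon)P_W(w)$. Note that when $P_W(w)=0$, the definition forces $\pi(w|\mathbf{w})=0$ as well, so the bound holds trivially. Now, because $f(w)\geq 0$ by hypothesis, I may multiply this inequality by $f(w)$ without flipping it, obtaining
\begin{align}
(1-\epsilon) P_W(w)\, f(w) \leq \pi(w|\mathbf{w})\, f(w) \leq (1+\epsilon) P_W(w)\, f(w)
\end{align}
for every $w\in\mathcal{W}$. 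Summing over $w$ preserves the inequality (again using non-negativity of each term) and yields exactly
\begin{align}
(1-\epsilon)\EE[f(W)] \leq \frac{1}{n}\sum_{i=1}^n f(w_i) \leq (1+\epsilon)\EE[f(W)],
\end{align}
completing the proof.

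The main (and essentially only) delicate point is the role of non-negativity of $f$: termwise inequalities between $\pi(w|\mathbf{w})$ and $P_W(w)$ only combine into inequalities between expectations after multiplication by a nonnegative weight, which is precisely why the hypothesis $f\geq 0$ cannot be dropped. Aside from this, the argument is a direct unpacking of the definition of $\epsilon$-typicality, and no further machinery is needed.
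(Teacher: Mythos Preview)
Your proof is correct and is exactly the standard argument for the Typical Average Lemma: rewrite the empirical average as an expectation under the empirical pmf, apply the termwise bound from the definition of $\mathcal{T}_{\epsilon}^{(n)}(W)$, use $f\geq 0$ to multiply through, and sum. The paper itself does not prove this lemma but simply cites \cite[Section 2.4]{ElGamalYHKim2012}, where the identical argument appears; so there is nothing to compare and your write-up would serve perfectly well as a self-contained justification (you even implicitly corrected the typo $g(w_i)\to f(w_i)$ in the statement).
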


Now, we state a simple variant of the Covering Lemma  \cite[Lemma 3.3]{ElGamalYHKim2012}:
\begin{lem}\label{lem:CoveringVariant}
Let  $P_{WV}$ be a joint probability distribution on the finite alphabet $\mathcal{W} \times \mathcal{V}$, with corresponding marginals $P_W$ and $P_V$.  Let $\mathbf{W} \sim \prod_{i=1}^nP_W(w_i)$ and let $\mathbf{V}(m)$, $m\in \{1,2,\dots,2^{nR}\}$, be random sequences, independent of each other and of $\mathbf{W}$, each distributed according to $\prod_{i=1}^nP_V(v_i)$.  Then, for $n$ sufficiently large, there exists positive functions $\delta(\epsilon),\tilde{\delta}(\epsilon)$ satisfying $\lim_{\epsilon\rightarrow 0}\delta(\epsilon)=\lim_{\epsilon\rightarrow 0}\tilde{\delta}(\epsilon)=0$ and
\begin{align*}
\Pr\left\{ (\mathbf{W},\mathbf{V}(m)) \notin \mathcal{T}_{\epsilon}^{(n)}(W,V) \mbox{~for all $m$}\right\} \leq \exp\left(-n\delta(\epsilon) \right) + \exp\left(-2^{n(R-I(W;V)-\tilde{\delta}(\epsilon))} \right).
\end{align*}
\end{lem}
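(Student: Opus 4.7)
The plan is to follow the template of the standard Covering Lemma (cf.\ \cite[Lemma 3.3]{ElGamalYHKim2012}) and absorb the fact that $\mathbf{W}$ is drawn from its marginal $P_W$ into the first error term. First I would split the event of interest according to whether $\mathbf{W}$ is $\epsilon$-typical with respect to $P_W$:
\begin{align*}
\Pr\left\{ (\mathbf{W},\mathbf{V}(m)) \notin \mathcal{T}_{\epsilon}^{(n)}(W,V) \text{ for all } m\right\}
&\leq \Pr\left\{\mathbf{W} \notin \mathcal{T}_{\epsilon}^{(n)}(W)\right\} \\
&\quad + \max_{\mathbf{w} \in \mathcal{T}_{\epsilon}^{(n)}(W)}\Pr\left\{ (\mathbf{w},\mathbf{V}(m)) \notin \mathcal{T}_{\epsilon}^{(n)}(W,V) \text{ for all } m\right\}.
\end{align*}
The first term is directly bounded by $\exp(-n\delta(\epsilon))$ using \eqref{eqn:probNotTypical}, which produces the first summand in the lemma statement.

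For the second term, I would fix an arbitrary $\mathbf{w} \in \mathcal{T}_{\epsilon}^{(n)}(W)$ and compute the probability that a single $\mathbf{V} \sim \prod_{i=1}^n P_V(v_i)$, independent of $\mathbf{W}$, is jointly $\epsilon$-typical with $\mathbf{w}$. Using the standard conditional-typicality estimates (size of the conditionally typical set is at least $2^{n(H(V|W)-\tilde{\delta}(\epsilon)/2)}$, and each such $\mathbf{v}$ has probability at least $2^{-n(H(V)+\tilde{\delta}(\epsilon)/2)}$ under $\prod P_V$), one obtains
\begin{align*}
p(\mathbf{w}) \triangleq \Pr\left\{ (\mathbf{w},\mathbf{V}) \in \mathcal{T}_{\epsilon}^{(n)}(W,V) \right\} \geq 2^{-n(I(W;V)+\tilde{\delta}(\epsilon))}
\end{align*}
for all sufficiently large $n$ and all $\mathbf{w} \in \mathcal{T}_{\epsilon}^{(n)}(W)$. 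Since the $\mathbf{V}(m)$ are i.i.d.\ across $m$ and independent of $\mathbf{W}$, and since $1-x \leq e^{-x}$,
\begin{align*}
\max_{\mathbf{w}\in \mathcal{T}_{\epsilon}^{(n)}(W)} \bigl(1-p(\mathbf{w})\bigr)^{2^{nR}} \leq \exp\!\left(-2^{nR}\cdot 2^{-n(I(W;V)+\tilde{\delta}(\epsilon))}\right) = \exp\!\left(-2^{n(R-I(W;V)-\tilde{\delta}(\epsilon))}\right),
\end{align*}
which yields the second summand in the lemma statement.

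The argument is essentially routine; the only point that requires mild care is that the bound on $p(\mathbf{w})$ must be uniform over $\mathbf{w} \in \mathcal{T}_{\epsilon}^{(n)}(W)$ rather than requiring $\mathbf{w}$ to be distributed according to $P_W$. This is exactly where $\mathbf{V}$ being independent of $\mathbf{W}$ (with marginal $P_V$) is used, together with the standard typicality bounds on $|\mathcal{T}_{\epsilon}^{(n)}(V\mid \mathbf{w})|$ and on $\prod_{i=1}^n P_V(v_i)$ for $\mathbf{v}\in \mathcal{T}_{\epsilon}^{(n)}(V\mid \mathbf{w})$. Once this uniformity is in hand, the rest of the derivation is mechanical and the functions $\delta(\epsilon)$ and $\tilde{\delta}(\epsilon)$ inherit the required $\epsilon\to 0$ behavior from the typicality estimates.
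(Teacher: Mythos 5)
Your argument is correct and is, in substance, exactly what the paper does: the paper's proof is a one-line remark pointing to the standard Covering Lemma proof of \cite[Lemma~3.3]{ElGamalYHKim2012} with \eqref{eqn:probNotTypical} substituted for the atypicality term, and your decomposition into the $\mathbf{W}$-atypical event plus a uniform joint-typicality bound followed by $(1-p)^{2^{nR}}\leq e^{-2^{nR}p}$ reproduces that argument explicitly. The only cosmetic caveat (which the paper's statement glosses over in the same way) is that the joint typicality lemma as stated in El~Gamal--Kim requires nested parameters $\epsilon'<\epsilon$ to get the uniform lower bound on $p(\mathbf{w})$ over $\mathbf{w}\in\mathcal{T}_{\epsilon'}^{(n)}(W)$; fixing $\epsilon'=\epsilon/2$, say, in your first split repairs this without changing anything else.
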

\begin{proof}
The proof follows that of \cite[Lemma 3.3]{ElGamalYHKim2012} verbatim, invoking \eqref{eqn:probNotTypical} where appropriate.
\end{proof}

We require one more result before moving on to the proof of Theorem \ref{thm:achGeneral}.
\begin{lem}\label{lem:scalingDistance}
Let $P_W$ and $P_V$ be probability distributions with finite second moments $\sigma_W^2$ and $\sigma_V^2$, respectively.  If $\mathbf{w} \in \mathcal{T}_{\epsilon}^{(n)}(W)$,  $\mathbf{v} \in \mathcal{T}_{\epsilon}^{(n)}(V)$, and $\frac{1}{n}\|\mathbf{w}-\mathbf{v}\|^2 \leq D$, then
\begin{align}
\frac{1}{n}\left\|\sqrt{\frac{\sigma_V}{\sigma_W}}\mathbf{w}-\sqrt{\frac{\sigma_W}{\sigma_V}}\mathbf{v}\right\|^2 \leq D-(\sigma_W-\sigma_V)^2+\epsilon|\sigma_W^2-\sigma_V^2|.
\end{align}
\end{lem}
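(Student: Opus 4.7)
The plan is to expand the scaled squared norm using the crucial fact that the product of the two scaling factors is $1$. Writing $\alpha \triangleq \sqrt{\sigma_V/\sigma_W}$ and $\beta \triangleq \sqrt{\sigma_W/\sigma_V}$ so that $\alpha\beta=1$, I would substitute the identity $2\mathbf{w}^T\mathbf{v} = \|\mathbf{w}\|^2 + \|\mathbf{v}\|^2 - \|\mathbf{w}-\mathbf{v}\|^2$ into the expansion $\|\alpha\mathbf{w}-\beta\mathbf{v}\|^2 = \alpha^2\|\mathbf{w}\|^2 + \beta^2\|\mathbf{v}\|^2 - 2\mathbf{w}^T\mathbf{v}$ to obtain
\begin{align*}
\|\alpha\mathbf{w}-\beta\mathbf{v}\|^2 = (\alpha^2-1)\|\mathbf{w}\|^2 + (\beta^2-1)\|\mathbf{v}\|^2 + \|\mathbf{w}-\mathbf{v}\|^2.
\end{align*}
The term $\|\mathbf{w}-\mathbf{v}\|^2/n$ is controlled directly by the hypothesis, contributing at most $D$.

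Next, I would observe the key algebraic identity that explains the appearance of $(\sigma_W-\sigma_V)^2$ in the bound: since $\alpha^2-1 = (\sigma_V-\sigma_W)/\sigma_W$ and $\beta^2-1 = (\sigma_W-\sigma_V)/\sigma_V$, a direct computation gives $(\alpha^2-1)\sigma_W^2 + (\beta^2-1)\sigma_V^2 = -(\sigma_W-\sigma_V)^2$. Thus, if $\|\mathbf{w}\|^2/n$ and $\|\mathbf{v}\|^2/n$ were exactly equal to $\sigma_W^2$ and $\sigma_V^2$, respectively, the lemma would follow with no $\epsilon$ slack at all.

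To handle the actual empirical second moments, I would apply the Typical Average Lemma (Lemma~\ref{lem:TypAvg}) with the nonnegative function $f(w) = w^2$, which gives the two-sided estimate $(1-\epsilon)\sigma_W^2 \leq \|\mathbf{w}\|^2/n \leq (1+\epsilon)\sigma_W^2$, and analogously for $\mathbf{v}$. The main step is then to combine these bounds with the appropriate signs: when $\sigma_W \geq \sigma_V$, the coefficient $\alpha^2-1$ is non-positive while $\beta^2-1$ is non-negative, so one takes the lower bound on $\|\mathbf{w}\|^2/n$ and the upper bound on $\|\mathbf{v}\|^2/n$; the roles swap when $\sigma_W < \sigma_V$. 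In either case the first-order terms recombine into $-(\sigma_W-\sigma_V)^2$ and the $\epsilon$-corrections collapse to exactly $\epsilon(\sigma_W+\sigma_V)|\sigma_W-\sigma_V| = \epsilon|\sigma_W^2-\sigma_V^2|$.

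I do not expect any serious obstacle in this argument --- it is essentially an algebraic exercise leveraging the $\alpha\beta=1$ normalization to reduce everything to the empirical second moments and the untransformed squared distance. The only subtlety requiring care is bookkeeping the signs of $\alpha^2-1$ and $\beta^2-1$ when choosing the correct direction of the Typical Average bound, but this is the very reason the lemma is stated symmetrically with $|\sigma_W^2 - \sigma_V^2|$.
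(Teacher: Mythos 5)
Your proposal is correct and takes essentially the same route as the paper: both expand $\|\alpha\mathbf w-\beta\mathbf v\|^2$ using $\alpha\beta=1$, trade the cross term $\mathbf w^T\mathbf v$ for the unscaled distance $\|\mathbf w-\mathbf v\|^2$, and then apply the Typical Average Lemma with the sign of each coefficient dictating which side of the two-sided bound to use. The only cosmetic difference is that the paper fixes the sign ambiguity up front with a WLOG ($\sigma_V\geq\sigma_W$) rather than carrying both cases.
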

\begin{proof}
Without loss of generality, assume $\sigma_V\geq \sigma_W$.  Note that the assumption $\frac{1}{n}\|\mathbf{w}-\mathbf{v}\|^2 \leq D$ implies
\begin{align}
-\frac{2}{n}\mathbf{w}^T\mathbf{v} \leq D-\frac{1}{n}\|\mathbf{w}\|^2-\frac{1}{n}\|\mathbf{v}\|^2.
\end{align}
Moreover,  Lemma \ref{lem:TypAvg} implies the following inequalities
\begin{align}
\frac{1}{n}\|\mathbf{w}\|^2 &\leq (1+\epsilon)\sigma_W^2\\
\frac{1}{n}\|\mathbf{v}\|^2 &\geq (1-\epsilon)\sigma_V^2.
\end{align}
Therefore, it follows that
\begin{align}
\frac{1}{n}\left\|\sqrt{\frac{\sigma_V}{\sigma_W}}\mathbf{w}-\sqrt{\frac{\sigma_W}{\sigma_V}}\mathbf{v}\right\|^2
&=\frac{1}{n}\left(\frac{\sigma_V}{\sigma_W}\|\mathbf{w}\|^2 + \frac{\sigma_W}{\sigma_V}\|\mathbf{v}\|^2 - 2 \mathbf{w}^T\mathbf{v}\right)\\
&\leq D + \frac{1}{n}\left(\left(\frac{\sigma_V}{\sigma_W}-1\right)\|\mathbf{w}\|^2 + \left(\frac{\sigma_W}{\sigma_V}-1\right)\|\mathbf{v}\|^2\right)\\
&\leq D + (1+\epsilon)\sigma_W^2 \left(\frac{\sigma_V}{\sigma_W}-1\right) + (1-\epsilon)\sigma_V^2\left(\frac{\sigma_W}{\sigma_V}-1\right)\\
&= D-(\sigma_W-\sigma_V)^2+\epsilon(\sigma_V^2-\sigma_W^2).
\end{align}
Considering the symmetric case where $\sigma_V \leq \sigma_W$ gives
\begin{align}
\frac{1}{n}\left\|\sqrt{\frac{\sigma_V}{\sigma_W}}\mathbf{w}-\sqrt{\frac{\sigma_W}{\sigma_V}}\mathbf{v}\right\|^2
&\leq D-(\sigma_W-\sigma_V)^2+\epsilon(\sigma_W^2-\sigma_V^2),
\end{align}
completing the proof.
\end{proof}

\begin{proof}[Proof  of Theorem \ref{thm:achGeneral}]
We can assume that $X$ and $Y$ are discrete random variables with finite alphabet $\mathcal{X}\subset \mathbb{R}$. The extension to continuous distributions with finite second moments follows by the usual quantization arguments and continuity of $\|\cdot\|$. Fix $\epsilon>0$ and a conditional pmf $P_{\hat{X}|X}(\hat{x}|x)$, where the alphabet $\hat{\mathcal{X}}$ is an arbitrary subset of $\Reals$ with finite support.  Throughout, the random variables $(Y,X,\hat{X})$ are drawn according to the joint distribution \begin{align}
P_{YX\hat{X}}(y,x,\hat{x}) = P_Y(y) P_{X\hat{X}}(x,\hat{x}) = P_Y(y) P_X(x) P_{\hat{X}|X}(\hat{x}|x).
\end{align}

\textbf{Random signature assignment.}
Randomly and independently generate $2^{nR}$ sequences $\hat{\bx}(t), t\in\{1,2,\dots,2^{nR}\}$, each according to $\prod_{i=1}^n P_{\hat{X}}(\hat{x}_i)$. Given a sequence $\bx$, find an index $t$ such that $(\bx,\hat{\mathbf{x}}(t))\in\mathcal{T}_{\epsilon}^{(n)}(X,\hat{X})$ and put $T(\mathbf{x})=t$.  If there is more than one such index, break ties arbitrarily.    If there is no such index, put $T(\mathbf{x})=\ERASURE$.  Observe that the rate $R$ is negligibly affected by the addition of the additional ``erasure" signature $\ERASURE$ (as in the proofs of Theorems \ref{thm:RIDdiffVar} and \ref{thm:EIDdiffVar}).

\textbf{Definition of the query function.}  %
In order to simplify notation, define the quantity
\begin{align}
\Psi \triangleq \sqrt{(1+\epsilon)\mathbb{E}\left[\left(\sqrt{\frac{\sigma_Y}{\sigma_X}}X - \hat{X}\right)^2\right]}
+\sqrt{D-(\sigma_X-\sigma_Y)^2+\epsilon|\sigma_X^2-\sigma_Y^2|}. \label{eqn:PsiDef}
\end{align}

For a signature $t\in\{1,2,\dots,2^{nR}\}\cup\{\ERASURE\}$ and a sequence $\by$, define
\begin{align*}
g(t,\by)=\left\{ \begin{array}{ll}
\MAYBE & \mbox{if}  \left\{ \begin{array}{l}
\mathbf{y} \notin \mathcal{T}_{\epsilon}^{(n)}(Y), \mbox{~or}  \phantom{\frac{~}{~}} \\
t=\ERASURE, \mbox{~or} \phantom{\frac{~}{~}}\\
\frac{1}{\sqrt{n}}\left\| \sqrt{\frac{\sigma_X}{\sigma_Y}}\mathbf{y} - \hat{\mathbf{x}}(t)\right\|  \leq \Psi \mbox{~and $t\neq \ERASURE$}
\end{array}\right. \\
\NO & \mbox{otherwise} .
\end{array}\right.
\end{align*}

\textbf{Scheme analysis.}
First, we check to ensure that $g(\cdot,\cdot)$ does not produce any false negatives; that is, we need to verify that $(T,g)$ is $D$-admissible. Note that $g(T(\bx),\by)$ returns $\MAYBE$ if  $\mathbf{y} \notin \mathcal{T}_{\epsilon}^{(n)}(Y)$  or $T(\bx)=\ERASURE$.
Therefore, we only need to show that $g(T(\bx),\by)$ returns $\MAYBE$ if $\mathbf{y} \in \mathcal{T}_{\epsilon}^{(n)}(Y)$, $(\mathbf{x},\hat{\mathbf{x}}(T(\bx)))\in\mathcal{T}_{\epsilon}^{(n)}(X,\hat{X})$, and $\frac{1}{{n}}\| \mathbf{x} - {\mathbf{y}}\|^2 \leq {D}$.

Under these assumptions, note that Lemma \ref{lem:TypAvg} implies
\begin{align}
\frac{1}{{n}}\left\| \sqrt{\frac{\sigma_Y}{\sigma_X}} \mathbf{x} - \hat{\mathbf{x}}(t)\right\|^2 \leq {(1+\epsilon)\mathbb{E}\left[\left(\sqrt{\frac{\sigma_Y}{\sigma_X}}X - \hat{X}\right)^2\right]}.\label{eqn:typicalAverage}
\end{align}
Next, recall that $(\mathbf{x},\hat{\mathbf{x}}(T(\bx)))\in\mathcal{T}_{\epsilon}^{(n)}(X,\hat{X})$ implies $\mathbf{x}\in\mathcal{T}_{\epsilon}^{(n)}(X)$.  Hence, under the assumption that $\frac{1}{{n}}\| \mathbf{x} - {\mathbf{y}}\|^2 \leq {D}$, Lemma \ref{lem:scalingDistance} implies
\begin{align}
\frac{1}{{n}}\left\|\sqrt{\frac{\sigma_Y}{\sigma_X}}\mathbf{x}-\sqrt{\frac{\sigma_X}{\sigma_Y}}\mathbf{y}\right\|^2 \leq {D-(\sigma_X-\sigma_Y)^2+\epsilon|\sigma_X^2-\sigma_Y^2|}.\label{eqn:scaledDistance}
\end{align}
Combining the triangle inequality, \eqref{eqn:typicalAverage}, \eqref{eqn:scaledDistance}, and \eqref{eqn:PsiDef}, we have
\begin{align}
\frac{1}{\sqrt{n}}\left\| \sqrt{\frac{\sigma_X}{\sigma_Y}}\mathbf{y} - \hat{\mathbf{x}}(t)\right\| &\leq
\frac{1}{\sqrt{n}}\left\| \sqrt{\frac{\sigma_Y}{\sigma_X}} \mathbf{x} - \hat{\mathbf{x}}(t)\right\|
+\frac{1}{\sqrt{n}}\left\|\sqrt{\frac{\sigma_Y}{\sigma_X}}\mathbf{x}-\sqrt{\frac{\sigma_X}{\sigma_Y}}\mathbf{y}\right\|
\leq \Psi.
\end{align}
Hence, $g(T(\bx),\by)$ returns $\MAYBE$ if $\mathbf{y} \in \mathcal{T}_{\epsilon}^{(n)}(Y)$, $(\mathbf{x},\hat{\mathbf{x}}(T(\bx)))\in\mathcal{T}_{\epsilon}^{(n)}(X,\hat{X})$, and $\frac{1}{{n}}\| \mathbf{x} - {\mathbf{y}}\|^2 \leq {D}$.  Therefore, $(T,g)$ is $D$-admissible as desired.

Next, we check to ensure that $\Pr\{g(T(\mathbf{X}),\mathbf{Y})= \MAYBE\}$ is small.  To this end, consider the events
\begin{align*}
&\mathcal{E}_0 = \left\{\mathbf{y} \notin \mathcal{T}_{\epsilon}^{(n)}(Y) \right\},\\
&\mathcal{E}_1 = \left\{T(\mathbf{X})=\ERASURE \right\},\\
&\mathcal{E}_2 = \left\{ \frac{1}{\sqrt{n}}\left\| \sqrt{\frac{\sigma_X}{\sigma_Y}}\mathbf{Y} - \hat{\mathbf{X}}(T(\bX))\right\| \leq  \Psi
 \right\},
\end{align*}
and observe that $\Pr\{g(T(\mathbf{X}),\mathbf{Y})= \MAYBE\} \leq \Pr\{\mathcal{E}_0\}+\Pr\{\mathcal{E}_1\}+\Pr\{\mathcal{E}_2\}$ by the union of events bound.

We have already seen in \eqref{eqn:probNotTypical} that
\begin{align}
 \Pr\{\mathcal{E}_0\} \leq \exp\left(-n\delta(\epsilon) \right)
\end{align}
for some positive $\delta(\epsilon)$ satisfying $\lim_{\epsilon\rightarrow 0} \delta(\epsilon)=0$.

Next, Lemma \ref{lem:CoveringVariant} implies that, for $n$ sufficiently large,
\begin{align}
\EE_T \left[\Pr\{\mathcal{E}_1\}\right] \leq \exp\left(-n\delta(\epsilon) \right) + \exp\left(-2^{n(R-I(X;\hat{X})-\tilde{\delta}(\epsilon))} \right),
\end{align}
where $\EE_T \left[\Pr\{\mathcal{E}_1\}\right]$ denotes the value of $\Pr(\mathcal{E}_1)$ averaged over the random choice of the signature assignment $T(\cdot)$.

Let  $\hat{\mathbf{X}}$ be distributed according to $\prod_{i=1}^n P_{\hat{X}}(\hat{x}_i)$, independent of $\mathbf{Y}\sim \prod_{i=1}^n P_{{Y}}(y_i)$.  An application of Hoeffding's inequality implies
\begin{align}
\Pr\left(\frac{1}{\sqrt{n}}\left\| \sqrt{\frac{\sigma_X}{\sigma_Y}}\mathbf{Y} - \hat{\mathbf{X}}\right\|  \leq \sqrt{\mathbb{E}\left[\left(\sqrt{\frac{\sigma_X}{\sigma_Y}}Y-\hat{X}\right)^2\right]} -\epsilon \right)\leq \exp(-n\delta(\epsilon)). \label{eqn:E2upperbound}
\end{align}
Since the sequence $\mathbf{Y}$ is independent of $\mathbf{X}$, and is therefore  also independent of $ \hat{\mathbf{X}}(T(\mathbf{X}))$, \eqref{eqn:E2upperbound} implies that
\begin{align}
\EE_T \left[\Pr\{\mathcal{E}_2\}\right] \leq \exp\left(-n\delta(\epsilon) \right)
\end{align}
if
\begin{align}
\Psi \leq
\sqrt{\mathbb{E}\left[\left(\sqrt{\frac{\sigma_X}{\sigma_Y}}Y-\hat{X}\right)^2\right]} -\epsilon.\label{eqn:thmConditions}
\end{align}
Therefore, if \eqref{eqn:thmConditions} holds, we have
\begin{align}
\EE_T \left[\Pr\{g(T(\mathbf{X}),\mathbf{Y})= \MAYBE\}\right] \leq 3\exp\left(-n\delta(\epsilon) \right) + \exp\left(-2^{n(R-I(X;\hat{X})-\tilde{\delta}(\epsilon))} \right),
\end{align}
implying the existence of a sequence of $D$-admissible, rate $R > I(X;\hat{X})$ schemes for which $\Pr\{g(T(\mathbf{X}),\mathbf{Y})= \MAYBE\}$ is exponentially small in $n$.  Since $\epsilon$ was arbitrary, the proof is complete.
\end{proof}

\begin{proof}[Proof of Theorem~\ref{thm:GaussianExtreme}]  Since $d(\cdot,\cdot)$ is translation invariant, we can assume without loss of generality that $P_X$ and $P_Y$ have mean zero.
Also, note that it is sufficient to consider $D$ in the interval $(\sigma_X-\sigma_Y)^2 \leq D \leq \sigma_X^2+\sigma_Y^2$.  Indeed, if $D > \sigma_X^2+\sigma_Y^2$, then the theorem asserts that $R_\ID(D,P_X,P_Y) \leq \infty$, which is trivially true.  On the other hand, we can argue that $R_\ID(D,P_X,P_Y)=0$ for $D< (\sigma_X-\sigma_Y)^2$ by monotonicity of $R_\ID(D,P_X,P_Y)$ in $D$ and the fact that $R_\ID((\sigma_X-\sigma_Y)^2,P_X,P_Y) =0$.

Therefore, assume $(\sigma_X-\sigma_Y)^2 < D \leq \sigma_X^2+\sigma_Y^2$ and consider the conditional distribution $P_{\hat{X}|X}$ defined by $\hat{X}=\rho \sqrt{\frac{\sigma_Y}{\sigma_X}}X + Z$, where $Z\sim N(0,\sigma_Z^2)$ is independent of $X$ and $\rho,\sigma_Z^2$ are given by
\begin{align}
&\rho = \frac{(\sigma_X+\sigma_Y)^2-D}{(2\sigma_X\sigma_Y)} ~~
&\sigma_Z^2 =\frac{((\sigma_X+\sigma_Y)^2-D)(\sigma_X^2+\sigma_Y^2-D)^2}{4\sigma_X\sigma_Y (D-(\sigma_X-\sigma_Y)^2)}.
\end{align}
With $P_{\hat{X}|X}$ defined in this way, the following identities are readily verified
\begin{align}
\sqrt{\mathbb{E}\left[\left(\sqrt{\frac{\sigma_X}{\sigma_Y}}Y-\hat{X}\right)^2\right]}
&= \sqrt{\sigma_X\sigma_Y(1 + \rho^2) + \sigma_Z^2} =  \frac{2\sigma_X\sigma_Y}{\sqrt{D-(\sigma_X-\sigma_Y)^2}} \label{eqn:testIdent1}\\
 \sqrt{\mathbb{E}\left[\left(\sqrt{\frac{\sigma_Y}{\sigma_X}}X - \hat{X}\right)^2\right]}
 &= \sqrt{\sigma_X\sigma_Y(1 -\rho)^2 + \sigma_Z^2} =\frac{\sigma_X^2+\sigma_Y^2-D}{\sqrt{D-(\sigma_X-\sigma_Y)^2}}.\label{eqn:testIdent2}
\end{align}
Therefore, \eqref{eqn:testIdent1} and \eqref{eqn:testIdent2} yield the identity
\begin{align}
 \sqrt{\mathbb{E}\left[\left(\sqrt{\frac{\sigma_X}{\sigma_Y}}Y - \hat{X}\right)^2\right]}  = \sqrt{\mathbb{E}\left[\left(\sqrt{\frac{\sigma_Y}{\sigma_X}}X - \hat{X}\right)^2\right]}
+\sqrt{D-(\sigma_X-\sigma_Y)^2}.
\end{align}
Since $\hat{X}$ has density and the Gaussian distribution maximizes differential entropy for a given variance (cf. \cite{CoverThomas_InfoTheoryBook}), we have the inequality $h(\hat{X})\leq \frac{1}{2}\log\left(2\pi e (\rho^2\sigma_X\sigma_Y+\sigma_Z^2)\right)$.  It follows that
\begin{align*}
I(X;\hat{X}) \leq \frac{1}{2}\log \left(\frac{\rho^2\sigma_X\sigma_Y+\sigma_Z^2}{\sigma_Z^2}\right) &= \log \left(\frac{2\sigma_X\sigma_Y}{\sigma_X^2+\sigma_Y^2-D} \right) \\
&= R_\ID(D,N(0,\sigma_X^2),N(0,\sigma_Y^2)).
\end{align*}
Thus, for $D\neq (\sigma_X-\sigma_Y)^2$, an application of Theorem \ref{thm:achGeneral} implies that
\begin{align}
 R_\ID(D,P_X,P_Y) \leq  R_\ID(D,N(0,\sigma_X^2),N(0,\sigma_Y^2)).\label{eqn:allbutOnePoint}
\end{align}
To handle the case where $D= (\sigma_X-\sigma_Y)^2$, we note that $R_\ID(D,P_X,P_Y)$ is nondecreasing in $D$.  Since
\begin{align}
\lim_{D\downarrow (\sigma_X-\sigma_Y)^2}R_\ID(D,N(0,\sigma_X^2),N(0,\sigma_Y^2)) = 0,
\end{align}
inequality \eqref{eqn:allbutOnePoint} implies that we must have $R_\ID(D,P_X,P_Y)=0$ at $D= (\sigma_X-\sigma_Y)^2$.  This completes the proof.
\end{proof}

\subsection{Robust Identification Schemes} \label{subsec:RobustProof}
Fix $R > R_\ID\left(D,N(0,\sigma_X^2),N(0,\sigma_Y^2)\right)$ and consider the setup described in section \ref{subsec:RobustSchemes}.  Specifically, let $P_{\tilde X}$, $P_{ \tilde Y}$ be zero-mean distributions with variances $\sigma_X^2$ and $\sigma_Y^2$, respectively.   Recall that, for a given blocklength $n$, the argument in the achievability proof of Theorem \ref{thm:RIDdiffVar} constructs a signature assignment function $T^{(n)} : \tilde \bx \ra \Reals^n$ for which the query $g^{(n)}\left(T^{(n)}(\tilde \bx),\tilde \by \right)$ returns ``$\MAYBE$" only if
\begin{enumerate}
\item The angle $\angle(\tilde \by,T^{(n)}(\tilde \bx))$ is at most $\theta'$,
where $\theta'<\pi/2$ is a fixed constant defined in \eqref{eqn:thetaprime}, and
\item We have $\tilde \bx\in  \ST_X$, where $\ST_X$ is the ``typical shell" of $\tilde \bx$ vectors defined in \eqref{sTypXdefn}.
\end{enumerate}
We remark that the gap between $\pi/2-\theta'$ and the thickness of the shell $\ST_X$ depend on the parameter $\eta>0$, which is a function of the gap between $R$ and $R_\ID\left(D,N(0,\sigma_X^2),N(0,\sigma_Y^2)\right)$.

  In light of the conditions under which $g^{(n)}\left(T^{(n)}(\tilde \bx),\tilde \by \right)$ returns ``$\MAYBE$", the probability of the event $\left\{g^{(n)}\left(T^{(n)}(\tilde \bX),\tilde \bY \right)=\MAYBE\right\}$ is bounded by
\begin{align}
\Pr\left\{g^{(n)}\left(T^{(n)}(\tilde \bX),\tilde \bY \right) = \MAYBE\right\} \leq \Pr \left\{\angle(\tilde \bY,T^{(n)}(\tilde \bX)) \leq \theta'  \right\} + \Pr\left\{\tilde \bX \notin \ST_X \right\}.
\end{align}
The term $\Pr\left\{\tilde \bX \notin \ST_X \right\}$ vanishes by the weak law of large numbers as $n\ra \infty$.  Therefore, since $\tilde \bX$ and $\tilde \bY$ are independent,  it is sufficient to show that $\Pr \left\{\angle(\tilde \bY,\bm{\alpha}) \leq \theta'  \right\}$ vanishes for any given unit vector $\bm{\alpha}=(\alpha_1,\alpha_2,\dots,\alpha_n)$ and constant $\theta'\in(0,\pi/2)$.  To this end, define $\beta_n \triangleq \frac{\sigma_Y}{2}\sqrt{n}$, and observe that
\begin{align}
\Pr \left\{\angle(\tilde \bY,\bm{\alpha}) \leq \theta'  \right\} &= \Pr \left\{ \sum_{i=1}^n \alpha_i \tilde  Y_i \geq \|\tilde  \bY\|\cos \theta' \right\} \\
&\leq \Pr \left\{ \sum_{i=1}^n \alpha_i \tilde  Y_i \geq \beta_n \cos \theta' \right\} +
\Pr \left\{ \|\tilde  \bY\| \leq \beta_n \right\}.
\end{align}
First, note $\lim_{n\rightarrow \infty } \Pr \left\{ \|\tilde \bY\| \leq \beta_n \right\}=0$ by the weak law of large numbers.
Next, since $\bm{\alpha}$ is a unit vector, we have $\sum_{i=1}^n \alpha_i^2=1$, and it follows that
\begin{align}
\VAR\left( \sum_{i=1}^n \alpha_i \tilde Y_i \right) =\sigma_Y^2.
\end{align}
Since $\EE[\tilde  Y_i]=0$, Chebyshev's inequality implies
\begin{align}
\Pr \left\{ \sum_{i=1}^n \alpha_i \tilde Y_i \geq \beta_n \cos \theta' \right\} \leq \frac{\sigma_Y^2}{\beta_n^2 \cos^2\theta'} = \frac{4}{n \cos^2\theta'},
\end{align}
proving that $\Pr\left\{g^{(n)}\left(T^{(n)}(\tilde  \bX),\tilde  \bY \right) = \MAYBE\right\} \ra 0$ as desired.  Since the $D$-admissibility of the scheme $(T^{(n)},g^{(n)})$ did not depend on the Gaussianity assumption in the proof of Theorem \ref{thm:RIDdiffVar}, the scheme $(T^{(n)},g^{(n)})$ continues to be $D$-admissible for the sources $\tilde\bX,\tilde\bY$.

Therefore, we can conclude that a sequence of rate-$R$, $D$-admissible schemes  $\{T^{(n)},g^{(n)}\}_{n\ra \infty}$ constructed as described in the proof of Theorem \ref{thm:RIDdiffVar} exhibit the robustness property explained in Section \ref{subsec:RobustSchemes}.

\section{Concluding Remarks}\label{sec:summary}
We studied the problem of answering similarity queries from compressed data from an information-theoretic perspective.
We focused on the setting where the similarity criterion is the (normalized) quadratic distance. For the case of i.i.d. Gaussian data,
we gave an explicit characterization of the minimal compression rate which permits reliable queries (i.e., the identification rate).
Furthermore, we characterized the best exponential rate at which the probability for false positives can be made to vanish.

For general sources, we derived an upper bound on the identification rate, and proved that it is at most that of the Gaussian source of the same variance.
Finally, we presented a single, robust, scheme that compresses \emph{any} source at the Gaussian identification rate, while permitting reliable responses to queries.
\section*{Acknowledgement}

The authors would like to thank Golan Yona for stimulating discussions that motivated this work.

\appendices
\section{Covering a Shell with Spheres}\label{app:ShellCovering}
\begin{proof}[Proof of Lemma~\ref{lem:ShellCovering}]
According to \cite[Theorem 1]{Dumer07}, for any $r>\rho>0$ there exists a covering of $S_r$ with balls of radius $\rho$, with density $\vartheta$ upper bounded by
\begin{align}
  \vartheta
  &\leq (n-1) \log(n-1)\left(\frac{1}{2} + \frac{2 \log \log (n-1) + 5}{\log (n-1)} \right)\label{eqn:halfnlogn}\\
  &\leq n \log(n)\label{eqn:nlogn},
\end{align}
where \eqref{eqn:halfnlogn} holds for all $n\geq 4$, and \eqref{eqn:nlogn} holds for $n$ large enough so that $\frac{2 \log \log (n-1) + 5}{\log (n-1)} \leq \frac{1}{2}$.
This translates to $k$ balls of radius $\rho$ that cover $S_r$, where
\begin{equation}\label{eqn:CoverSize1}
  k \leq \frac{n \log(n)}{\Omega(\theta)},
\end{equation}
and $\theta \triangleq \arcsin(\rho/r)$.

We choose $r = r_0 = \sqrt{n\sigma^2}$ and $\rho=\rho_0 = \sqrt{n D_0}$, so $\theta = \theta_0 = \arcsin(\sqrt{D_0/\sigma^2}) < \pi/2$ and is independent of $n$. When $n$ is large enough s.t. $\theta \leq \arccos(1/\sqrt n)$, we may use \eqref{eqn:OmegaLower} and get an upper bound on the covering size:
\begin{align}
  k &\leq \frac{n \log(n)}{\Omega(\theta_0)} \\
  &\leq \frac{n \log(n)}{\frac{1}{3\sqrt{2\pi n}\cos\theta_0} \sin^{n-1}\theta_0}\\
  &\leq 3\sqrt{2\pi}n^{3/2} \log(n) (\rho_0/r_0)^{n-1},
\end{align}
which proves \eqref{eqn:ShellCoveringExistence}.

Note that for a code that covers a spherical shell, the biggest covering by any single point $\bu\in\cC$ is obtained when the point $\bu$ is located at distance $\sqrt{r_0^2-\rho_0^2}$ from the origin. We therefore can assume, without altering the covering property of $\cC$, that $\|\bu\|=\sqrt{r_0^2-\rho_0^2}$ for all $\bu \in \cC$ (see also \cite[Eq. (1)]{Dumer07} and the discussion that follows). The intersection of $\BALL_{\rho_0}(\bu)$ and $S_{r_0}$ is precisely given by $\CAP_{r_0}(\bu,\theta_0)$.
\end{proof}

\section{}\label{app:Dexpansion}
\begin{proof}[Proof of Lemma~\ref{lem:Dexpansion}]
        Let $\by \in \Gamma^D\left(T^{-1}(\bu)\right) \cap \ST_Y$. Our goal is to show that $\by \in \CONE(\bu,\theta')$. In other words, we need to show that
        \begin{equation}
          \angle(\bu,\by) \leq \theta'.
        \end{equation}

        Since $\by \in \Gamma^D T^{-1}(\bu)$, there exists $\bx \in T^{-1}(\bu)$ s.t. $d(\bx,\by) \leq D$. By the triangle inequality for the angle operator (which is proportional to the geodesic metric in spherical geometry) we can write
        \begin{align}
          \angle(\bu,\by) &\leq \angle(\bu,\bx)+\angle(\bx,\by).
        \end{align}
        Since $T^{-1}(\bu)\subseteq \CAP_{r^-,r^+}(\bu,\theta_0)$, we know that $\angle(\bu,\bx) \leq \theta_0$. Further, by the law of cosines for the triangle $(\bx,\by,\mathbf{0})$ we can write
        \begin{align}
            \angle(\bx,\by)
            &= \arccos\left[\frac{\|\bx\|^2 + \|\by\|^2- \|\bx-\by\|^2}{2\|\bx\|\|\by\|}\right]\\
            &\overset{(a)}\leq \arccos\left[\frac{\sigma_X^2+ \sigma_Y^2-2\eta- nD}{2
            \sqrt{(\sigma_X^2+\eta)(\sigma_Y^2+\eta)}}
            \right]\\
            & = \theta_1,
        \end{align}
        where $(a)$ follows since $\bx \in \ST_X$, $\by \in \ST_Y$ and $d(\bx,\by) \leq D$. Therefore by definition we have $\by \in \CONE(\bu,\theta')$. All there's left to show is that $\theta' < \frac{\pi}{2}$. This follows immediately since $D_0$ satisfies \eqref{eqn:DefnD0} by definition, and from the fact that $\arcsin(\phi) + \arccos(\phi) = \frac{\pi}{2}$.
\end{proof}

\section{}\label{app:Markov}
\begin{proof}[Proof of Lemma~\ref{lem:Markov}]
Define $\cI$ as the set of indices $i$ for which $p_i \geq p^*$:
\begin{equation}
  \cI \triangleq \{i : p_i \geq p^*\}.
\end{equation}
Clearly $\Omega\left(\theta_{D''}+\Omega^{-1}(p_i)\right) \geq \Omega^*$ if and only if $i\in \cI$, so $\cI$ can be thought of as the set of `bad' values for $i$, i.e. those that contribute a lot to the sum in \eqref{eqn:Sum_pi_Omega}.

Consider the following sequence of inequalities:
\begin{align}
  c\cdot \Omega^*
  &\geq \sum_{i=1}^{2^{nR}} p_i \cdot \Omega\left(\theta_{D''}+\Omega^{-1}(p_i)\right)  \nonumber\\
  &\geq \sum_{i\in \cI}p_i \cdot \Omega\left(\theta_{D''}+\Omega^{-1}(p_i)\right)  \nonumber\\
  &\geq \Omega^* \sum_{i\in \cI}p_i. \label{eqn:bound_i_in_cI}
\end{align}

On the other hand,
\begin{align*}
  1
  &= \sum_i p_i \\
  &= \sum_{i\in \cI} p_i + \sum_{i\notin \cI} p_i\\
  &\overset{(a)}\leq c + \sum_{i\notin \cI} p_i\\
  &\overset{(b)}\leq c + \sum_{i\notin \cI} p^*\\
  &\leq c + p^*2^{nR}.
\end{align*}
$(a)$ follows from \eqref{eqn:bound_i_in_cI}. $(b)$ follows from the definition of $\cI$. Eq. \eqref{eqn:Markov} follows immediately.
\end{proof}

\section{}\label{app:supex}

\begin{proof}[Proof of Lemma~\ref{lem:supex}]
For any $t>0$ and $a>0$ we have
\begin{align*}
  \Pr\{\|\bX\|^2 > a \}
  &\overset{(a)}\leq e^{-t \cdot a} \EE\left[\exp\left(t \cdot \|\bX\|^2\right)\right]\\
  &\overset{(b)}=  e^{-t \cdot a} (1-2t\sigma_X^2)^{-n/2}.
\end{align*}
$(a)$ is the Chernoff bound. $(b)$ follows since the moment generating function of $Z$ is given by
\begin{equation}\label{eqn:Zmgf}
  \EE\left[e^{tZ}\right] = (1-2t)^{-n/2}, \mbox{ for $t<\frac{1}{2}$.}
\end{equation}
Here it holds for any $t<\frac{1}{2\sigma_X^2}$. We choose $t=\frac{1}{4\sigma_X^2}$ and write:
\begin{equation}
  \Pr\{\|\bX\|^2 > a \}  \leq  e^{- \frac{a}{4\sigma_X^2}}\cdot  2^{n/2}.
\end{equation}
Choosing $a=n\sigma_{\max}^2(n) = n^2\sigma_X^2$ results in
\begin{equation}
  \Pr\{\|\bX\|^2 > n\sigma_{\max}^2(n) \}  \leq  e^{- \tfrac{1}{4}n^2 + o(n^2)}.
\end{equation}

\end{proof}

\bibliographystyle{IEEEtran}
\bibliography{Master}

\end{document}